\documentclass[12pt]{iopart}
\usepackage{iopams}

\usepackage{longtable}
\usepackage{lscape}  

\newtheorem{lemma}{Lemma}
\newtheorem{theorem}{Theorem}
\newtheorem{definition}{Definition}

\newcommand{\diagonal}[1]{\mathrm{diag}\left[#1\right]} 
\newcommand{\RMset}[1]{\mathbf{M}_{#1}} 
\newcommand{\Sym}[1]{\mathbf{Sym}_{#1}} 

\begin{document}

\title{One-dimensional subspaces of the $SL(n,\mathbb{R})$ Chiral Equations}

\author{I. A. Sarmiento-Alvarado$^1$, Petra Wiederhold$^2$ and Tonatiuh Matos$^1$}
\address{$^1$ Departamento de F\'{\i}sica, Centro de Investigaci\'on y de Estudios Avanzados del IPN, Av. I.P.N. 2508, San Pedro Zacatenco, M\'exico 07360, CDMX.}
\address{$^2$ Departamento de Control Autom\'atico, Centro de Investigaci\'on y de Estudios Avanzados del IPN, Av. I.P.N. 2508, San Pedro Zacatenco, M\'exico 07360, CDMX.}
\ead{ignacio.sarmiento@cinvestav.mx , tonatiuh.matos@cinvestav.mx , petra.wiederhold@cinvestav.mx}

\date{\today}

\vspace{10pt}

\begin{abstract}
In this work we find solutions of the ($n+2$)-dimensional Einstein Field Equations (EFE) with $n$ commuting Killing vectors in vacuum. In the presence of $n$ Killing vectors, the EFE can be separated into blocks of equations. The main part can be summarized in the chiral equation
$\  (\alpha g_{, \bar{z}} g^{-1})_{, z} + \  (\alpha g_{, z} g^{-1})_{, \bar{z}} = 0$ with $ g\in SL(n,\mathbb{R})$. The other block reduces to the differential equation $\, (\ln f \alpha ^{1-1/n})_{, z} = 1/2 \, \alpha \tr( g_{, z} g^{-1})^2$ and its complex conjugate.
We use the ansatz $g = g(\xi ) $, where $\xi $ satisfies a generalized Laplace equation, so the chiral equation reduces to a matrix equation that can be solved using algebraic methods, turning the problem of obtaining exact solutions for these complicated differential equations into an algebraic problem. The different EFE solutions can be chosen with desired physical properties in a simple way.
\end{abstract}

\vspace{2pc}
\noindent{\it Keywords}: Chiral equations, Special Linear Group, linear algebra approach to Einstein Equations
%
\vspace{2pc}
%
\maketitle
%
%

\section{Introduction}

The Einstein Field Equations (EFE) are one of the most interesting field equations in physics and from a mathematical point of view, the search for methods to obtain solutions has led to a large number of mathematical results. The first exact solution was obtained by Karl Schwarzschild in 1916, but his study has been a long debate over the meaning of the solution. We now know that the Schwarzschild solution represents a static black hole. His generalization for a stationary solution had to wait more than 40 years to be found by Roy Kerr. These solutions have been the cornerstone of the theory of general relativity and its interpretation and represent a stationary black hole.
After the finding of Kerr's solution, the exact solutions area of the EFE has been very active, see for example \cite{exact03}. Several mathematical methods have been developed with great success to find exact solutions of EFEs. One of the most successful has been the method of subspaces and subgroups, which is capable of generating exact solutions on demand. It is possible to decide the exact physical content of the solution from the beginning. That is why in this work we will adopt this solution method.

On the other hand, interest in higher dimensional theories began in 1919 with Theodor Kaluza's proposal for a five-dimensional space-time that unified gravitation with electromagnetism. Kaluza proposes that the metric of a five-dimensional spacetime can be separated as $g^5_{\mu\nu}=g^4_{\mu\nu}+I^2A_\mu A_\nu$,  for $\mu,\nu=1,\cdots,4$, $g^5_{5\mu}=IA_\mu$ and $g^5_{55}=I^2$, where $A_\mu$ is the tetraelectromagnetic potential and $I$ is related to a scalar field, called dilaton, see for example \cite{Matos:1992qx}. This theory has evolved to the unification of all interactions; electromagnetic, strong and weak interactions with gravity. However, the theory includes quantum interactions, but it is not renormalizable, nor is it quantizable.
So people propose string and superstring theory to have a quantizable renormalizable higher dimensional theory, see for example \cite{SuperS85}. The price they have to pay is that the extra dimensions must be singular. In this paper we propose that the extra dimensions form an $n$-dimensional space with $n-2$ Killing vectors that may be singular and interesting enough to be studied.

In this work we pretend to find exact solutions of the EFE from the mathematical point of view, using the method of space and subgroups which seems very successful to obtain a great amount of exact solutions.

We start with an ($n+2$)-dimensional space and are interested in 4-dimensional spacetimes that are stationary and axially symmetric. This means that the 4-dimensional spacetime contains two commuting Killing vectors whose extra dimensional space is ($n-2$)-dimensional with $n-2$ commuting Killing vectors, so that the n-dimensional space contains $n$ commuting Killing vectors. Thus, in this case we can work in a coordinate system where the metric depends only on two variables $x^1$ and $x^2$, so that the metric tensor has the form
\begin{equation}
\label{eq:metric}
\hat{g} = f \  ( \rmd x ^1\otimes \rmd x ^1 +  \rmd x ^2 \otimes \rmd x ^2 ) + g_{\mu _\nu } \rmd x ^\mu\otimes \rmd x ^\nu
\end{equation}
where the components of $\hat{g}$, $f$ and $g$, for $\mu, \nu = 3, \ldots , n+2$, depend on two variables $x^1$ and $x^2$. In the following, we will denote the uppercase indices as $A,B=1,\ldots, n+2$ and the Greek indices as $\mu, \nu = 3,\ldots , n+2$.

Throughout this paper, the set of matrices of size $m\times n$ with entries in $\mathbb{R}$ is denoted by $ \RMset{m \times n} $, we write $\mathbf{M}_{m}$ if $m = n$.
The identity matrix and the zero matrix are denoted by $I_n$ and $0_n$, respectively, and, $\Sym{n}$ is the subset of symmetric matrices in $\mathbf{M}_{n}$.

This work is organized as follows. In Section \ref{sec:Field_equations} we follow \cite{Matos89} to write the main field equations, obtaining the Ricci tensor for an ($n+2$)-dimensional space with $n$ commutative Killing vectors. Section \ref{sec:subspace} presents the algebraic basis of the matrices used in this work. In Section \ref{sec:jordan} we express these matrices by their Jordan normal form in order to solve the final algebraic equation. In Section \ref{sec:example},  using the Jordan form of matrices we obtain the solutions of the algebraic equations. Finally, Section \ref{sec:conclusions} contains some conclusions.

\section{Field equations}\label{sec:Field_equations}

In this section we derive the main field equations for an ($n+2$)-dimensional Riemannian space with $n$ commuting Killing vectors. The metric components depend only on the coordinates $\hat{g}_{A B}=\hat{g}_{A B}(x^1,x^2)$. In this case, the Christoffel symbols are given as
\begin{equation}
    \Gamma^C_{A B} \equiv
    \frac{1}{2} \hat{g}^{C D} \  (\hat{g}_{D A , B} + \hat{g}_{D B , A} - \hat{g}_{A B , D}  )\  .
\end{equation}
For the metric (\ref{eq:metric}) we have
\begin{equation}
\begin{array}{cccccccc}
\fl \Gamma^1_{1 1}
&   = \frac{1}{2} ( \ln f )_{, 1}
&  ,\  \Gamma^1_{1 2}
&   = \frac{1}{2} ( \ln f )_{, 2}
&  ,\  \Gamma^1_{2 2}
&   = - \frac{1}{2} ( \ln f )_{, 1}
&  ,\  \Gamma^{\mu } _{\nu i}
&   = \frac{1}{2} g^{\mu \omega } g_{\omega _\nu , i}  ,
\\  \fl \Gamma^2_{2 2}
&   = \frac{1}{2} ( \ln f )_{, 2}
&  ,\  \Gamma^2_{1 2}
&   = \frac{1}{2} ( \ln f )_{, 1}
&  ,\  \Gamma^2_{1 1}
&   = - \frac{1}{2} ( \ln f )_{, 2}
&  ,\  \Gamma^i _{\mu \nu}
&   = - \frac{1}{2} g_{\mu \nu } ^{, i}  ,
\end{array}
\end{equation}
the remaining components are zero.


In order to compute the Ricci tensor with our metric,
\begin{equation}
    R_{A B}
    = \Gamma^C_{A B , C} - \Gamma^C_{A C , B} + \Gamma^C_{D C} \Gamma^D_{A B}
    - \Gamma^C_{D B} \Gamma^D_{A C}\  ,
\end{equation}
it is convenient to use the variables $z = x _1 + i x _2 $ and its complex conjugate, ${\bar z}$.
Hence, the non-zero components of the Ricci tensor are as follows:
\begin{eqnarray}
\fl    R_{1 1}
   = -2 (\ln f \alpha )_{, z {\bar z}}
    - \frac{1}{2} g^{\mu \alpha} g_{\alpha \nu , z} g^{\nu \beta} g_{\beta \mu , {\bar z}}
    - (\ln \alpha )_{, z z}
    + (\ln \alpha )_{, z} (\ln f)_{, z} \\
    - \frac{1}{4} g^{\mu \alpha} g_{\alpha \nu , z } g^{\nu \beta} g_{\beta \mu , z}
    - (\ln \alpha )_{, {\bar z} {\bar z}}
    + (\ln \alpha )_{, {\bar z}} (\ln f )_{, {\bar z}}
    - \frac{1}{4} g^{\mu \alpha} g_{\alpha \nu , {\bar z}} g^{\nu \beta} g_{\beta \mu , {\bar z}}
\\ \fl  R_{2 2}
   = -2 (\ln f \alpha )_{, z {\bar z}}
    - \frac{1}{2} g^{\mu \alpha} g_{\alpha \nu , z} g^{\nu \beta} g_{\beta \mu , {\bar z}}
    + (\ln \alpha )_{, z z}
    - (\ln \alpha )_{, z} (\ln f)_{, z} \\
    + \frac{1}{4} g^{\mu \alpha} g_{\alpha \nu , z} g^{\nu \beta} g_{\beta \mu , z}
    + (\ln \alpha )_{, {\bar z} {\bar z}}
    - (\ln \alpha )_{, {\bar z}} (\ln f )_{, {\bar z}}
    + \frac{1}{4} g^{\mu \alpha} g_{\alpha \nu , {\bar z} } g^{\nu \beta} g_{\beta \mu , {\bar z}}
\\  \fl R_{1 2}
   = i \  [
        (\ln \alpha )_{, {\bar z} {\bar z}}
        - (\ln \alpha )_{, {\bar z}} (\ln f)_{, {\bar z}}
        + \frac{1}{4} g_{\alpha \beta , {\bar z}} g^{\beta \gamma } g_{\gamma \delta , {\bar z}} g^{\delta \alpha }
        - (\ln \alpha )_{, z z}
        + (\ln \alpha )_{, z} (\ln f )_{, z} \\
        - \frac{1}{4} g_{\alpha \beta , z} g^{\beta \gamma } g_{\gamma \delta , z} g^{\delta \alpha }
    ]
\\  \fl R_{\mu } ^{\nu}
   = - \frac{1}{f\alpha } \  [
        \  (\alpha g_{\mu \omega , {\bar z}} g^{\omega \nu } )_{, z}
        + \  ( \alpha g_{\mu \omega , z} g^{\omega \nu } )_{, {\bar z}}  ]
\end{eqnarray}
where $\det g_{\mu \nu } = - \alpha ^2$.


We will use matrix notation, let us define the matrix $g$ from the components of the metric tensor
$g_{\mu \nu}$ as follows:
\begin{equation}
    (g)_{\mu \nu} = g_{\mu \nu }\  .
\end{equation}
Note that the matrix $g$ is real and symmetric, that is, denoting by $T$ transpose of a matrix,
\begin{eqnarray}
\label{eq:properties_g det}
\det g   &=& -\alpha ^2
\\\label{eq:properties_g real}
{\bar{g}}  &=& g
\\\label{eq:properties_g symmetric}
g^T  &=& g
\end{eqnarray}

The vacuum Einstein equations are given by
\begin{equation}
    R_{A B} = 0 \  .
\end{equation}
From $R_{\mu } ^{\nu } = 0 $ we obtain the chiral equations
\begin{equation}
\label{eq:chiral eq}
    \  ( \alpha g_{, {\bar z}} g^{-1} ) _{, z}
    + \  ( \alpha g_{, z} g^{-1} )_{, {\bar z}} = 0 \  .
\end{equation}
Its trace gives a differential equation for $\alpha $:
\begin{equation}
\label{eq:trace chiral eq}
    \alpha_{, z {\bar z}} = 0 \  .
\end{equation}
From now on, the index $Z$ will take the values $z$ and $\bar z$.
Using $R_{1 1} - R_{2 2} \pm 2i R_{1 2} = 0 $ we find
\begin{equation}
\label{eq:diff eq f}
(\ln f \alpha )_{, Z} = \frac{\alpha_{, Z Z}}{\alpha_{, Z}}
    + \frac{\tr( g_{, Z} g^{-1} )^2}{4(\ln \alpha )_{, _Z}}\  .
\end{equation}
Both Equations (for $z$ and ${\bar z}$) satisfy
\begin{equation}
    (\ln f \alpha )_{, z {\bar z}}
    = - \frac{1}{4} \tr ( g_{, z} g^{-1}  g_{, {\bar z}} g^{-1} ) \  .
\end{equation}
Using the transformation
\begin{equation}
\label{eq:g to det g}
     g \to -\alpha ^{-2/n} g
\end{equation}
we normalize $g$, i.e., $\det g = ( -1 ) ^{n+1}$.
Therefore, $g$ is a symmetric matrix in $SL(n,\mathbb{R})$.

The chiral equation (\ref{eq:chiral eq}) does not change under the transformation (\ref{eq:g to det g}), whereas Equation (\ref{eq:diff eq f}) takes the form
\begin{equation}
\label{eq:eq diff eq f g prime}
    ( \ln f \alpha ^{1-1/n} )_{, Z}
    = \frac{ \alpha _{, Z Z} }{\alpha _{, Z} }
    + \frac{\tr ( g_{, Z} g^{-1} )^2}{4(\ln \alpha )_{, Z} } \  .
\end{equation}
The chiral equation (\ref{eq:chiral eq}) is invariant under transformations
\begin{equation}
\label{eq:transformation g}
    g \to C g C ^T
\end{equation}
where $ C \in  SL(n,\mathbb{R}) $ is a constant matrix.
The general solution of the differential equation (\ref{eq:trace chiral eq}) for $\alpha$  is given as
\begin{equation}
    \alpha ( z , \bar z ) = \alpha _z ( z ) + \alpha _{\bar z} ( \bar z )
\end{equation}
where $ \alpha _z $ and $ \alpha _{\bar z} $ are arbitrary functions.
Chosing Weyl coordinates, i.e.,
\begin{equation}
    \alpha = \frac{ z + \bar z }{2}  \  ,
\end{equation}
Equations (\ref{eq:eq diff eq f g prime}) are reduced to
\begin{equation}
\label{diff eqs f}
    ( \ln f \alpha ^{1-1/n} )_{, Z}
    = \frac{1}{2} \alpha \tr ( g_{, Z} g^{-1} )^2 \  .
\end{equation}

The next sections will introduce important quantities to transform the differential equations (\ref{eq:chiral eq}).


\section{One-dimensional subspaces}\label{sec:One}

Suppose that $g$ depends on parameters $\xi $ which are arbitrary functions of the variables $z$ and $\bar z$.
Then, the chiral equation (\ref{eq:chiral eq}) changes to
\begin{equation}
    2\alpha \  (g_{, \xi } g^{-1})_{, \xi } \xi_{, z} \xi_{, {\bar z}}
    + g_{, \xi } g^{-1} \  ((\alpha \xi_{, z})_{, {\bar z}}
        + (\alpha \xi_{, {\bar z}} )_{, z} ) = 0 \  .
\end{equation}
Now we assume that the parameter $\xi $ satisfies the Laplace equation
\begin{equation}
\label{laplace eq}
    ( \alpha \xi_{, z} )_{, {\bar z}} + ( \alpha \xi_{, {\bar z}} )_{, z} = 0  \  ,
\end{equation}
then $g_{, \xi } g^{-1} = A$ is a constant matrix.
Note that each new solution of the Laplace equation gives another solution for $g$.
From the properties of the matrix $g$ we obtain
\begin{eqnarray}
\label{eq:properties_A real}
{\bar A} &=& A
\\
\label{eq:properties_A trace}
\tr A &=& 0
\\
\label{eq:properties_A ia}
A g &=& g A^T
\end{eqnarray}
Equations (\ref{eq:properties_A real}) and (\ref{eq:properties_A trace}) imply that $A$ belongs to the Lie algebra $\mathfrak{sl} (n,\mathbb{R} )$, the Lie algebra corresponding to the group $SL(n,\mathbb{R})$.
The matrix $A$ varies as
\begin{equation}
\label{eq:transformation A}
    A \to C A C ^{-1}
\end{equation}
under the transformation (\ref{eq:transformation g}).
The relation (\ref{eq:transformation A}) separates the set of matrices $A$ into equivalence classes.
We will work with a representative matrix of each class.

\section{The subspace $ \mathcal{I} ( A ) $}\label{sec:subspace}    

It is possible to find the general form of $g$ given $A$ if we consider the property (\ref{eq:properties_g symmetric}), together with the intertwining relation (\ref{eq:properties_A ia}) satisfied for the matrix $A$.
To do so, let us define the following set.

\begin{definition}
\label{def:subspace_ia}
For any non-zero matrix $ A \in \mathbf{M}_{n}$, define the set $\mathcal{I}( A )$ as
\begin{equation}
\mathcal{I}( A ) = \  \{ g \in \mathbf{Sym}_n : A g = g A ^T \}\  .
\end{equation}
\end{definition}

Observe that $ g \in \mathcal{I}(A) $.
Thus, the problem of finding the form of $g$ has been transformed into a linear algebra problem.
First, let us derive the following useful properties.

\begin{theorem}
   For any non-zero matrix $ A \in \mathbf{M}_{n}$,
    $\mathcal{I}( A )$ is a subspace of the vector space $\mathbf{M}_{n}$.
\end{theorem}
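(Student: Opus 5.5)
The plan is to verify the subspace criterion directly. I will show that $\mathcal{I}(A)$ contains the zero matrix and is closed under linear combinations. The point is that $\mathcal{I}(A)$ is cut out of $\mathbf{M}_{n}$ by linear conditions only.

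First, $0_n$ is symmetric, and $A\,0_n = 0_n = 0_n A^T$. Hence $0_n \in \mathcal{I}(A)$, so the set is non-empty.

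Next, take $g_1, g_2 \in \mathcal{I}(A)$ and scalars $\lambda, \mu \in \mathbb{R}$, and put $h = \lambda g_1 + \mu g_2$.

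\emph{Symmetry.} Transposition is linear, so
\begin{equation}
h^T = \lambda g_1^T + \mu g_2^T = \lambda g_1 + \mu g_2 = h ,
\end{equation}
and therefore $h \in \mathbf{Sym}_n$.

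\emph{Intertwining relation.} Matrix multiplication distributes over addition and commutes with scalars, so
\begin{equation}
A h = \lambda A g_1 + \mu A g_2 = \lambda g_1 A^T + \mu g_2 A^T = h A^T .
\end{equation}
Thus $h \in \mathcal{I}(A)$.

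Conceptually, $\mathcal{I}(A) = \mathbf{Sym}_n \cap \ker L_A$, where $L_A(g) = Ag - gA^T$ is a linear map $\mathbf{M}_{n} \to \mathbf{M}_{n}$. It is therefore an intersection of two subspaces, and this remark can stand as a one-line alternative proof.

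There is no real obstacle here; the argument is routine. The only points to watch are that the scalars are taken real, since the ambient space is the real vector space $\mathbf{M}_{n}$, and that the hypothesis $A \neq 0$ plays no role in being a subspace. It only matters in the paper's later use of $\mathcal{I}(A)$.
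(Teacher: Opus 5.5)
Your proof is correct and takes the same route as the paper: you verify directly that symmetry and the intertwining relation $Ag = gA^T$ are preserved under real linear combinations. The paper checks scalar multiples and sums separately and does not check non-emptiness; your verification that $0_n \in \mathcal{I}(A)$ fills that small omission.
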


\noindent \textit{Proof:}
Let $\alpha \in \mathbb{R}  $ and let $X, Y \in \mathcal{I}(A)$.
We have $ ( \alpha X ) ^T = \alpha X ^T = \alpha X$ and $ ( X + Y ) ^T = X ^T + Y ^T = X + Y $.
Then, $A ( \alpha X ) = \alpha ( A X ) = \alpha ( X A ^T ) = ( \alpha X ) A ^T $ and $ A ( X + Y ) = A X + A Y = X A ^T + Y A ^T = ( X + Y ) A ^T $, so that
$\alpha X \in \mathcal{I}(A)$ and $X + Y \in \mathcal{I}(A)$.

\rightline{$\square$}





\begin{definition}
    For any non-zero matrix $ A \in \mathbf{M}_{n}$ and $ \xi \in \mathbb{R} $, define
    \begin{equation}
        e^{ \xi A } = \sum _{ k = 0 } ^\infty \frac{ \xi ^k }{ k ! } A ^k \  .
    \end{equation}
\end{definition}
For more information on the exponential matrix, see, for example, \cite{hall2015lie, L_Tu}. The following lemmas are corollaries of the above.

\begin{lemma}
\label{lemma: eA X = X eA ^T}
        Let $ A, g \in \mathbf{M}_{n}$ be non-zero matrices and $ \xi \in \mathbb{R}  $. Then
        $ e ^{ \xi A } g = g e ^{ \xi A ^T } $ if and only if $ A g = g A ^T $.
\end{lemma}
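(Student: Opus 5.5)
We read the statement as an identity in the parameter $\xi$: the condition $e^{\xi A} g = g e^{\xi A^T}$ is meant to hold for all $\xi \in \mathbb{R}$, which is how it is used in Section~\ref{sec:One}, where $\xi$ varies over solutions of (\ref{laplace eq}). For a single fixed $\xi$ the ``only if'' direction cannot be expected to hold; at $\xi = 0$, for instance, the left-hand condition is empty. Both directions will then be handled by comparing power series in $\xi$ term by term.

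\emph{Sufficiency.} Assume $A g = g A^T$. First we show by induction on $k$ that $A^k g = g (A^T)^k$ for every $k \geq 0$. The case $k=0$ is trivial. The inductive step is
\begin{equation}
A^{k+1} g = A (A^k g) = A g (A^T)^k = g A^T (A^T)^k = g (A^T)^{k+1}.
\end{equation}
Next we multiply by $\xi^k/k!$ and sum over $k$. Since the exponential series converges absolutely in any matrix norm, left and right multiplication by the fixed matrix $g$ can be exchanged with the infinite sum. This gives $e^{\xi A} g = \sum_k \frac{\xi^k}{k!} A^k g = \sum_k \frac{\xi^k}{k!} g (A^T)^k = g\, e^{\xi A^T}$. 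We also use that $(A^T)^k = (A^k)^T$, so $e^{\xi A^T}$ is the series appearing in the definition applied to $A^T$.

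\emph{Necessity.} Assume $e^{\xi A} g = g e^{\xi A^T}$ for all $\xi$. Both sides are entrywise power series in $\xi$ with infinite radius of convergence, so the map $\xi \mapsto e^{\xi A} g - g e^{\xi A^T}$ is real-analytic and vanishes identically. We may therefore differentiate term by term at $\xi = 0$. The first derivative is $A g - g A^T = 0$, which is the claim. Equivalently, the coefficient of $\xi^1$ in the vanishing series must be zero.

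The only point needing care is the justification of the termwise manipulations: interchanging the infinite sum with multiplication by $g$, and differentiating the series at $\xi=0$. Both follow from absolute and locally uniform convergence of the exponential series, for which we cite \cite{hall2015lie, L_Tu}. The algebraic content of the argument is just the induction above.
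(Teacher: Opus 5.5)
Your proof is correct. Your necessity argument matches the paper's. The paper sets $F(\xi)=e^{\xi A}g\,e^{-\xi A^T}$ and reads off $F'(0)=Ag-gA^T=0$ from $F(\xi)=F(0)$, so it tacitly uses the ``for all $\xi$'' reading that you state explicitly. Your remark that the converse fails for a single fixed $\xi$ (e.g.\ $\xi=0$) is therefore a real clarification of the statement as written.

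Where you differ is sufficiency. The paper reuses the same auxiliary function: $F'(\xi)=e^{\xi A}(Ag-gA^T)e^{-\xi A^T}$ vanishes when $Ag=gA^T$, so $F(\xi)=F(0)=g$. Multiplying on the right by $e^{\xi A^T}$ then gives $e^{\xi A}g=g\,e^{\xi A^T}$. You instead prove $A^k g=g(A^T)^k$ by induction and sum the exponential series.

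Your route is purely algebraic. It needs only continuity of multiplication by $g$, not the differentiation rule for $e^{\xi A}$, the identity $e^{-\xi A^T}=(e^{\xi A^T})^{-1}$, or the fact that a function with zero derivative is constant. It also proves more: $p(A)g=g\,p(A^T)$ for every polynomial or everywhere-convergent power series $p$. The paper's route is more economical, since one function $F$ handles both directions and necessity is just $F'(0)=0$.
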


\noindent \textit{Proof:}
    We define the matrix function $ F ( \xi ) = e ^{ \xi A } g e ^{ -\xi A ^T } $.
    Its derivative is $ F' ( \xi ) = e ^{ \xi A } \  ( A g - g A ^T ) e ^{ -\xi A ^T } $.
    If $ g \in \mathcal{I} ( A ) $, then $ F' ( \xi ) = 0 $, so that $ F ( \xi ) = F ( 0 ) $.
    Therefore, $ e ^{ \xi A } g = g e ^{ \xi A ^T } $.
    Now, if $ e ^{ \xi A } g = g e ^{ \xi A ^T } $, then $ F ( \xi ) = F ( 0 ) $.
    Its derivative at $ \xi = 0 $ gives $ A g = g A ^T $.

\rightline{$\square$}







It is convenient to reduce the matrices we work with to simple matrices using the equivalence relation (\ref{eq:transformation g}). In particular, to facilitate the computation of the matrix exponentials, we will use the Jordan matrices introduced in the next section.

\section{Jordan matrices}\label{sec:jordan}

The invariance (\ref{eq:transformation g}) allows to use normal forms for the matrix $A$ which then is used to determine the matrix $g$. In this work we choose the real Jordan form of a matrix, because of its simplicity, and, since in this representation the matrix $A$ is always real even if it has complex conjugate eigenvalues. For an example of using the natural normal form of matrices instead of the Jordan form, see \cite{Matos:1992xn}, where the group $SL(3,\mathbb{R} )$ was discussed. Here we are going to focus on the group $SL(5,\mathbb{R} )$ in its Jordan representations.
For more information on the Jordan form, see, for example, \cite{Gantmacher59, horn_johnson_1985, Matos}.

It is well-known that any real square matrix may have real and complex eigenvalues, where for each complex eigenvalue $\alpha +\beta i$, also its complex conjugate $\alpha -\beta i$ is an eigenvalue. To avoid to include the complex values explicitly in the Jordan matrix, it is possible to include each such pair $\alpha \pm \beta i$ as represented by a real 2x2-matrix
\begin{equation}
\Lambda = \left[
\begin{array}{cc} \alpha & -\beta \\\beta & \alpha\end{array}\right] \  .
\end{equation}
Therefore, we will consider Jordan blocks of two kinds, one for the real eigenvalues and another type for the pairs of complex conjugate eigenvalues.
Furthermore, it is convenient for our work to represent the Jordan matrices as decomposed into blocks which make visible the type of eigenvalues.
In consequence, we introduce several types of Jordan blocks and matrices, more general as the standard notions from the common literature, as follows.


\begin{definition}
    For $ \lambda \in \mathbb{R} $, a \textbf{Jordan cell} $ J _n (\lambda) \in \mathbf{M}_{n}$ is an upper triangular matrix of the form
    \begin{equation}
        J _n (\lambda) = \left[ \begin{array}{ccccc}
            \lambda &   1       &   0       &   \cdots  &   0   \\
                    &   \lambda &   1       &   \cdots  &   0   \\
                    &           &   \ddots  &   \ddots  &   \vdots  \\
                    &           &           &   \lambda &   1   \\
                    &           &           &           &   \lambda
        \end{array}\right]
    \end{equation}
\label{def:Jordan-cell}
\end{definition}

\begin{definition}
   Suppose
    \begin{equation}
    \label{eq:Lambda}
    \Lambda
    = \left[ \begin{array}{cc}
        \alpha  &   -\beta  \\
        \beta   &   \alpha
    \end{array}\right] \in \mathbf{M}_{2}\   ,\   \hbox{with}\  \beta > 0 \   .
    \end{equation}
    A \textbf{Jordan $\Lambda$-block of the first kind} $ J _n ( \Lambda ) \in \mathbf{M}_{2n}$ is a block upper triangular matrix of the form
    \begin{equation}
        J _n ( \Lambda ) = \left[ \begin{array}{ccccc}
                \Lambda   &   I _2    &   0 _2    &   \cdots  &   0 _2   \\
                        &   \Lambda       &   I _2    &   \cdots  &   0 _2   \\
                    &           &   \ddots  &   \ddots  &   \vdots  \\
                    &           &           &   \Lambda       &   I _2   \\
                    &           &           &           &   \Lambda
        \end{array}\right]\  .
    \end{equation}
\label{def:Jordan-Lambda-block-first}
\end{definition}
In the remainder of the article, $ \Lambda $ if not specified, is supposed to have the form in (\ref{eq:Lambda}) .

\begin{definition}
    Let $ \lambda \in \mathbb{R}  $ and $ n_1, \ldots, n_m $ be positive integers such that $ n = n_1 + \ldots + n_m $. A \textbf{Jordan matrix} $ J _{ n _1, \ldots, n _m } ( \lambda ) \in \mathbf{M}_{n}$ is a block diagonal matrix
    \begin{equation}
        J _{ n _1, \ldots, n _m } ( \lambda ) = \diagonal{
            J_{n_1} ( \lambda ),
            \ldots,
            J_{n_m} ( \lambda )
        }
    \end{equation}
    where $J_{n_i} ( \lambda )$ are Jordan cells for all $i = 1, \ldots, m $.
\label{def:Jordan-matrix}
\end{definition}

\begin{definition}
    Let $ n _1, \ldots, n _m $ be positive integers such that $ n = n _1 + \ldots + n _m $.
    A \textbf{Jordan $\Lambda$-block of the second kind} $ J _{ n _1, \ldots, n _m } ( \Lambda ) \in \mathbf{M}_{2n}$ is a block diagonal matrix
    \begin{equation}
        J_ {n _1, \ldots, n _m } ( \Lambda ) = \diagonal{
            J _{n_1} ( \Lambda ),
            \ldots,
            J _{n _m} ( \Lambda )
        }
    \end{equation}
    where $J_{n _i} ( \Lambda ) $ are Jordan $\Lambda$-blocks of the first kind for all $ i = 1, \ldots, m $.
\label{def:Jordan-Lambda-block-second}
\end{definition}

\begin{definition}
    Let $ \lambda _i \in\mathbb{R} $, $i \in \{ 1,2,\cdots ,p \}$ and
    \begin{equation}
    \Lambda _k
    = \left[ \begin{array}{cc}
        \alpha _k   &   -\beta _k  \\
        \beta _k    &   \alpha _k
    \end{array}\right] \in \mathbf{M}_{2} ,\  \  k \in \{ 1,2,\cdots ,q \}
    \end{equation}
    with $ \beta _k > 0 $, such that all scalars and matrices are distinct.
    Let $ m ^i _1, \ldots, m ^i _{r _i} $ and $ n ^k _1, \ldots, n ^k _{s _k} $ be positive integers such that $ m ^i = m ^i _1 + \ldots + m ^i _{r _i} $, $ n ^k = n ^k _1 + \ldots + n ^k _{r _k} $, $ m = m ^1 + \ldots + m ^p $ and $ n = n ^1 + \ldots + n ^q $.
    A \textbf{generalized Jordan matrix} $ J \in \mathbf{M}_{m + 2n}$ is defined as a block diagonal matrix of the form
    \begin{equation}
    \fl J = \diagonal{
        J _{m ^1 _1, \ldots, m ^1 _{r _1}} ( \lambda _1 ),
        \ldots,
        J _{m ^p _1, \ldots, m ^p _{r _p}} ( \lambda _p ),
        J _{n ^1 _1, \ldots, n ^1 _{s _1}} ( \Lambda _1 ),
        \ldots,
        J _{n ^q _1, \ldots, n ^q _{s _q}} ( \Lambda _q )
    }
    \end{equation}
    where $ J _{m ^i _1, \ldots, m ^i _{r _i}} ( \lambda _i ) $ are Jordan matrices for all $ \lambda _i $, and $ J _{n ^k _1, \ldots, n ^k _{s _k}} ( \Lambda _k ) $ are Jordan $\Lambda$-blocks of the second kind for all $ i \in \{ 1, \ldots, p \} $ and $ k \in \{ 1, \ldots, q \} $.
\label{def:Jordan-generalized-matrix}
\end{definition}

\begin{theorem}[from \cite{horn_johnson_1985}]
    Each $ A \in \mathbf{M}_{n}$ is similar via a real similarity transformation matrix, to a generalized Jordan matrix of the form given in Definition \ref{def:Jordan-generalized-matrix} in which
    the scalars $ \lambda _1, \ldots, \lambda _p $ are real eigenvalues of $ A $, and its complex conjugate eigenvalues $ \alpha _k \pm i \beta _k $ are represented by the matrices $ \Lambda _k $ for all $ k \in \{ 1, \ldots, q \} $.
\end{theorem}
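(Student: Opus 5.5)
The plan is to derive the real form from the complex Jordan canonical form, which I take as known. Over $\mathbb{C}$, the matrix $A$ is similar to a direct sum of Jordan cells $J_k(\mu)$. Because $A$ is real, its characteristic polynomial has real coefficients, so nonreal eigenvalues come in conjugate pairs $\mu,\bar\mu$. The first step is to show that the Jordan structures of $\mu$ and $\bar\mu$ coincide. For this I use the ranks $\mathrm{rank}(A-\mu I)^j$ for all $j$. These ranks determine the number and sizes of the Jordan cells for $\mu$. Since $\overline{(A-\mu I)^j}=(A-\bar\mu I)^j$ and complex conjugation preserves rank, the Jordan data of $\mu$ and $\bar\mu$ agree.

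The second step is to build a real basis, working separately on each generalized eigenspace and grouping the real eigenvalues $\lambda_i$ and the conjugate pairs $\alpha_k\pm i\beta_k$ with $\beta_k>0$.

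\emph{Real eigenvalue $\lambda$.} The matrix $A-\lambda I$ is real, so its generalized eigenspace $\ker(A-\lambda I)^n$ has a real basis. Applying the Jordan theorem for nilpotent operators to $(A-\lambda I)$ restricted to this space gives real Jordan chains. This produces the block $J_{m_1,\ldots,m_r}(\lambda)$ of Definition \ref{def:Jordan-matrix}.

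\emph{Conjugate pair $\mu=\alpha+i\beta$.} Take a Jordan chain $v_1,\dots,v_k$ for $\mu$ with $(A-\mu I)v_1=0$ and $(A-\mu I)v_{j}=v_{j-1}$. Then the conjugate vectors $\bar v_j$ form a Jordan chain for $\bar\mu$. Set $v_j=x_j+iy_j$ with $x_j,y_j$ real. Taking real and imaginary parts of $Av_j=\mu v_j+v_{j-1}$ gives
\begin{equation}
A x_j=\alpha x_j-\beta y_j+x_{j-1},\qquad A y_j=\beta x_j+\alpha y_j+y_{j-1}.
\end{equation}
In the ordered basis $(x_1,y_1,x_2,y_2,\dots)$, the matrix of $A$ restricted to $\mathrm{span}\{x_j,y_j\}$ is therefore block upper bidiagonal. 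Its diagonal blocks are $\left[\begin{array}{cc}\alpha&\beta\\-\beta&\alpha\end{array}\right]$ and its superdiagonal blocks are $I_2$. Ordering the basis as $(y_j,x_j)$ instead turns each diagonal block into exactly $\Lambda$ of (\ref{eq:Lambda}) with $\beta>0$. This gives $J_k(\Lambda)$ of Definition \ref{def:Jordan-Lambda-block-first}.

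The main obstacle is showing that the real vectors collected this way form a basis of $\mathbb{R}^n$. The key fact is that the span of $\{v_j,\bar v_j\}$ over $\mathbb{C}$ equals the span of $\{x_j,y_j\}$ over $\mathbb{C}$. I choose the Jordan chains for $\bar\mu$ to be exactly the conjugates of the chains for $\mu$, which is possible by the first step. Then the union of all chains is a complex basis of $\mathbb{C}^n$. Replacing each pair $v_j,\bar v_j$ by $x_j,y_j$ is an invertible change of basis, since $x=(v+\bar v)/2$ and $y=(v-\bar v)/(2i)$. Hence the real vectors are linearly independent over $\mathbb{C}$, and therefore over $\mathbb{R}$. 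Since there are $n$ of them, they form a basis of $\mathbb{R}^n$.

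Finally, let $S$ be the real invertible matrix whose columns are these vectors, ordered by eigenvalue. Then $S^{-1}AS$ is block diagonal with the blocks described above, which is a generalized Jordan matrix as in Definition \ref{def:Jordan-generalized-matrix}. Distinct eigenvalues give distinct $\lambda_i$ and $\Lambda_k$, which completes the proof.
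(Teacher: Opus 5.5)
The paper gives no proof of this statement; it is quoted from Horn and Johnson, so there is no in-paper argument to compare against. Your derivation from the complex Jordan form is correct.

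\begin{itemize}
\item For $\mu=\alpha+i\beta$ with $\beta>0$, the ordering $(y_1,x_1,y_2,x_2,\dots)$ does put $\Lambda$ on the diagonal and $I_2$ on the block superdiagonal. This is exactly $J_k(\Lambda)$.
\item Choosing the $\bar\mu$-chains as the conjugates of the $\mu$-chains makes $\{v_j,\bar v_j\}\mapsto\{x_j,y_j\}$ an invertible change of a complex basis of $\mathbb{C}^n$. The resulting real vectors therefore give a real invertible $S$.
\end{itemize}

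The standard textbook argument divides the work differently. After the same pairing of $J_k(\mu)$ with $J_k(\bar\mu)$, it writes down an explicit complex similarity from $J_k(\mu)\oplus J_k(\bar\mu)$ to the real $2k\times 2k$ block. It concludes that $A$ is similar over $\mathbb{C}$ to a real generalized Jordan matrix. It then appeals to a separate lemma: two real matrices that are similar over $\mathbb{C}$ are similar over $\mathbb{R}$. Your route avoids that lemma by constructing the real similarity directly, at the cost of the linear-independence bookkeeping at the end.

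Two small remarks:
\begin{itemize}
\item Your first step can be dropped. The identity $\overline{(A-\mu I)^j v}=(A-\bar\mu I)^j\bar v$ shows that conjugation maps $\ker(A-\mu I)^n$ bijectively onto $\ker(A-\bar\mu I)^n$ and sends Jordan chains to Jordan chains. So the conjugated chains are automatically a Jordan basis for $\bar\mu$.
\item For real $\lambda$, you use that your real chains form a complex basis of the complex generalized eigenspace. This rests on the fact that a real matrix has the same rank over $\mathbb{R}$ as over $\mathbb{C}$, and it is worth stating explicitly.
\end{itemize}
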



\begin{theorem}
\label{theorem: ia jordan block first kind real eigenvalue}
    Let $ \lambda \in \mathbb{R} $ and $J _n ( \lambda )$ be a Jordan cell. Then $\mathcal{I}(J_n (\lambda))$ coincides with the set of all real square matrices of order $n$ which are of the form
    \begin{equation}
        \left[ \begin{array}{cccc}
            x _1    &   x _2    &   \cdots  &   x _n    \\
            x _2    &   x _3    &   \cdots  &   0   \\
            \vdots  &   \vdots  &   \ddots  &   \vdots  \\
            x _n    &   0       &   \cdots  &   0
       \end{array}\right] \  .
    \end{equation}
\end{theorem}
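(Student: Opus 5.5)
Write $J_n(\lambda) = \lambda I_n + N$, where $N$ is the nilpotent matrix with ones on the superdiagonal, so $N_{ij} = \delta_{j,i+1}$. Since $\lambda I_n g = g\,\lambda I_n$ for every $g$, the condition $J_n(\lambda) g = g J_n(\lambda)^T$ is equivalent to $N g = g N^T$. In particular $\mathcal{I}(J_n(\lambda))$ does not depend on $\lambda$, and it suffices to determine all symmetric $g$ with $Ng = gN^T$.

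Next we compute both sides entrywise. The matrix $Ng$ is $g$ shifted up one row: $(Ng)_{ij} = g_{i+1,j}$ for $i<n$, and its last row is zero. Similarly, $gN^T$ is $g$ shifted left one column: $(gN^T)_{ij} = g_{i,j+1}$ for $j<n$, and its last column is zero. The condition therefore reads
\begin{equation*}
g_{i+1,j} = g_{i,j+1} \quad (1\le i,j\le n-1), \qquad g_{i,n}=0 \ \ (2\le i\le n), \qquad g_{n,j}=0 \ \ (2\le j\le n).
\end{equation*}
The first family says that $g$ is constant along each antidiagonal $i+j=k$. So $g$ is a Hankel matrix, $g_{ij} = x_{i+j-1}$ for some reals $x_1,\dots,x_{2n-1}$. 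The boundary conditions then force $x_k = 0$ for $k \ge n+1$, since $g_{i,n}$ with $i \ge 2$ runs through exactly the antidiagonals $k=n+1,\dots,2n-1$. This yields precisely the displayed form.

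For the converse, any matrix of the displayed form is symmetric, because $g_{ij}$ depends only on $i+j$. It also satisfies the shift relations above, so $Ng = gN^T$ holds and the matrix lies in $\mathcal{I}(J_n(\lambda))$. Hence the two sets coincide. Symmetry of $g$ is not even needed in the forward direction, because the Hankel structure already gives it.

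The only delicate step is the index bookkeeping at the boundary (last row and last column) that kills the lower antidiagonals. I would check it carefully, for example by verifying the case $n=2$ or $n=3$ directly.
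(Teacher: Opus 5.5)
Your proof is correct and is essentially the paper's argument: you compare entries of $J_n(\lambda)X = XJ_n^T(\lambda)$ (where $\lambda$ cancels) to get $x_{i+1,j}=x_{i,j+1}$ together with the vanishing of the last column below its first entry, so the antidiagonals are constant and everything below the main antidiagonal is zero. You are slightly more complete than the paper, since you state the converse inclusion and observe that symmetry is automatic, whereas the paper assumes a symmetric $X$ from the outset. The only blemish is a harmless off-by-one in labeling the entries $g_{i,n}$, $i\ge 2$, as antidiagonals $k=n+1,\dots,2n-1$: with $g_{ij}=x_{i+j-1}$ these entries have $i+j=n+2,\dots,2n$ and correspond to $x_{n+1},\dots,x_{2n-1}$, so your conclusion stands.
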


\noindent \textit{Proof:}

    Let be
    \begin{equation}
        X = \left[ \begin{array}{ccc}
            x _{11} &   \cdots  &   x _{1n} \\
            \vdots  &   \ddots  &   \vdots  \\
            x _{1n} &   \cdots  &   x _{nn}
        \end{array}\right] \in \mathbf{M}_{n}
    \end{equation}
    The intertwining relation $J _n ( \lambda ) X = X J ^T _n ( \lambda )$ implies the following:
    \begin{equation}
    \label{eq:properties_antidiagonal}
    x _{i+1,j} = x _{i,j+1}\  \hbox{for}\  i, j \in \{ 1, \ldots, n - 1 \},
    \  \   x _{k n} = 0\  \hbox{for}\  k \in \{ 2, \ldots, n \}  .
    \end{equation}
    Equations (\ref{eq:properties_antidiagonal}) mean that all entries of any antidiagonal of $X$, are equal, and, that
    all antidiagonals below the main antidiagonal are zero.

\rightline{$\square$}

\begin{lemma}
\label{lemma: J m X = X J n lambda m < n}
    Let $m$ and $n$ be two positive integers such that $ m < n $ and let $ X \in \mathbf{M}_{m \times n}$. Then any Jordan cells $J_m, J_n$ satisfy that
    \begin{equation}
        J _m ( \lambda ) X = X J ^T _n ( \lambda ) \iff X = \left[ \begin{array}{cc} Y & 0 \end{array}\right], \  Y \in \mathcal{I} ( J _m ( \lambda ) ) \  .
    \end{equation}
\end{lemma}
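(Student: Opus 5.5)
The proof reduces at once to nilpotent cells and then to a comparison of entries, as in the proof of Theorem \ref{theorem: ia jordan block first kind real eigenvalue}. Write $J_k(\lambda)=\lambda I_k+N_k$, where $N_k$ is the nilpotent shift with ones on the superdiagonal. The terms $\lambda X$ on both sides cancel, so the relation $J_m(\lambda)X=XJ_n^T(\lambda)$ is equivalent to
\begin{equation}
N_m X = X N_n^T .
\end{equation}
For $X=(x_{ij})\in\RMset{m\times n}$, the $(i,j)$ entry of $N_mX$ is $x_{i+1,j}$, taken to be $0$ when $i=m$. The $(i,j)$ entry of $XN_n^T$ is $x_{i,j+1}$, taken to be $0$ when $j=n$. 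The relation is therefore the system
\begin{equation}
x_{i+1,j}=x_{i,j+1}\ (1\le i\le m-1,\ 1\le j\le n-1),\quad x_{m,j}=0\ (2\le j\le n),\quad x_{i+1,n}=0\ (1\le i\le m-1).
\end{equation}
Here I use the boundary conventions $x_{m+1,j}=0$ and $x_{i,n+1}=0$. The first family says $X$ is constant along each antidiagonal $i+j=s$, so $x_{ij}=c_{i+j}$ for $s=2,\dots,m+n$.

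The key step is to show that the boundary equations kill every antidiagonal with $s\ge m+2$. The equations $x_{m,j}=0$ for $j\ge 2$ give $c_s=0$ for $m+2\le s\le m+n$. The equations $x_{i+1,n}=0$ for $1\le i\le m-1$ give $c_s=0$ for $n+2\le s\le m+n$. Since $m<n$, the first set already covers the second. Every antidiagonal from $s=m+2$ onward meets the bottom row at a column $j\ge 2$, so all of them vanish.

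Any surviving nonzero entry must therefore satisfy $i+j\le m+1$, which forces $j\le m$ because $i\ge1$. Hence the columns $m+1,\dots,n$ are zero, and $X=[\,Y\ \ 0\,]$. The block $Y=(c_{i+j})_{i,j\le m}$ is constant on antidiagonals and vanishes below the main antidiagonal. By Theorem \ref{theorem: ia jordan block first kind real eigenvalue}, such matrices are exactly the elements of $\mathcal{I}(J_m(\lambda))$; note also that $Y$ is automatically symmetric.

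For the converse, take $X=[\,Y\ \ 0\,]$ with $Y\in\mathcal{I}(J_m(\lambda))$. I check the system directly: the entries are constant on antidiagonals, and every entry with $i+j\ge m+2$ is zero. So each equation holds, including those involving the padded zero columns. Alternatively, one can verify blockwise that $N_m[\,Y\ 0\,]=[\,N_mY\ 0\,]$ and $[\,Y\ 0\,]N_n^T=[\,YN_m^T\ \ y\,]$. Here $y$ is a column built from the last column of $Y$, and it vanishes because that column is $(c_{m+1},0,\dots,0)^T$ and the shift moves $c_{m+1}$ out of range.

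The main obstacle is this index bookkeeping at the boundary. I must confirm that the strict inequality $m<n$ is exactly what makes the bottom-row condition annihilate every antidiagonal with $s\ge m+2$. I also must confirm that the extra column $m+1$ does not carry a surviving $c_{m+1}$ entry. It does not: $c_{m+1}$ sits at $(m,1)$ and $(1,m)$ and along its antidiagonal, all inside the first $m$ columns, while $(i,m+1)$ has $i+j\ge m+2$.
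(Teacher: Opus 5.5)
Your proof is correct, but it takes a different route from the paper's.

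\textbf{The paper's route.} It removes the trailing columns by a nilpotency argument. From $J_m(0)X = XJ_n^T(0)$ it gets $J_m^p(0)X = X(J_n^p(0))^T$. For $p\ge m$ the left side vanishes, so $X(J_n^p(0))^T=0$, which forces columns $p+1,\dots,n$ of $X$ to be zero. Descending from $p=n-1$ to $p=m$ gives $X=[\,Y\ 0\,]$. It then partitions $J_n(\lambda)$ with the corner block $E_{m,1}$, which reduces the relation to the square case $J_m(\lambda)Y=YJ_m^T(\lambda)$. The converse is the same block computation read backwards.

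\textbf{Your route.} You solve the full scalar system on the $m\times n$ rectangle at once. The interior equations give Hankel structure, and the bottom-row equations kill every antidiagonal with $i+j\ge m+2$.

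\textbf{Comparison.} Your approach costs some index bookkeeping, but it gives three things:
\begin{itemize}
\item the explicit Hankel form of $Y$ directly;
\item the fact that $Y$ is automatically symmetric, which membership in $\mathcal{I}(J_m(\lambda))$ requires and the paper leaves implicit;
\item both directions from a single equivalent system.
\end{itemize}
The paper's route is shorter, but it depends on the scalar part $\lambda I$ cancelling, so that one can pass to powers of the nilpotent part. That trick does not carry over verbatim to $J_n(\Lambda)$, because $\Lambda X_{ij}\neq X_{ij}\Lambda^T$ in general. For that case the paper falls back on blockwise entry-chasing much like yours.

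\textbf{A small slip in your optional blockwise check.} Right multiplication by $N_n^T$ shifts columns to the left: column $j$ of $[\,Y\ 0\,]N_n^T$ is column $j+1$ of $[\,Y\ 0\,]$. Hence $[\,Y\ 0\,]N_n^T=[\,YN_m^T\ 0\,]$ exactly, with no correction column built from the last column of $Y$. This does not affect your proof, since your direct verification of the system already settles the converse.
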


\noindent \textit{Proof:}

    Let $p$ be a positive integer such that $ m \leq p \leq n -1 $.
    If $ J _m ( \lambda ) X = X J ^T _n ( \lambda ) $, then $ J _m ( 0 ) X = X J ^T _n ( 0 ) $, so that $ J ^p _m ( 0 ) X = X ( J ^p _n ( 0 ) ) ^T $.
    Therefore $ X ( J ^p _n ( 0 ) ) ^T = 0 $.
    If $ p = n - 1 $, then $ x _{i n} = 0 $ for each $ i \in \{ 1, \ldots, m \} $.
    Proceeding analogously in decreasing order to $ p = m $ we obtain
    \begin{equation}
        X = \left[ \begin{array}{cccccc}
            x_{11}  &   \cdots  &   x _{1m} &   0   &   \cdots  &   0
        \\  \vdots  &   \ddots  &   \vdots  &   \vdots  &   \ddots  &   \vdots
        \\  x_{m1}  &   \cdots  &   x _{mm} &   0   &   \cdots  &   0
        \end{array}\right]
    \end{equation}
    Hence, we can write $X = \left[ \begin{array}{cc} Y & 0 \end{array}\right] $, where $Y \in \mathbf{M}_{m}$.
    If we partition $ J _n ( \lambda ) $ as
    \begin{equation}
        J _n ( \lambda ) = \left[ \begin{array}{cc}
            J _m ( \lambda )    &   E _{m,1}    \\
            0   &   J _{n-m} ( \lambda )
        \end{array}\right]
    \end{equation}
    where
    \begin{equation}
        E _{m,1} = \left[ \begin{array}{cccc}
            0   &   0   &   \cdots  &   0   \\
            \vdots  &   \vdots  &   \ddots  &   \vdots  \\
            0   &   0   &   \cdots  &   0   \\
            1   &   0   &   \cdots  &   0   \\
        \end{array}\right] \in \mathbf{M}_{m \times (n - m)}
    \end{equation}
    then the intertwining relation $ J _m ( 0 ) X = X J ^T _n ( 0 ) $ implies $ J _m ( \lambda ) Y = Y J ^T _m ( \lambda ) $, so that $ Y \in \mathcal{I} ( J _m ( \lambda ) ) $.

    Now, let $X = \left[ \begin{array}{cc} Y & 0 \end{array}\right] \in \mathbf{M}_{m \times n}$ and let $ Y \in \mathbf{M}_{m}$.
    If $ Y \in \mathcal{I} ( J _m ( \lambda ) ) $, then $ J _m ( \lambda ) Y = Y J ^T _m ( \lambda ) $, which implies that $ J _m ( \lambda ) X = X J ^T _n ( \lambda ) $.

\rightline{$\square$}

\begin{lemma}
\label{lemma: J m X = X J n lambda m > n}
    Let $m$ and $n$ be two positive integers such that $ m > n $ and let $ X \in \mathbf{M}_{m \times n}$. Then any Jordan cells $J_m, J_n$ satisfy that
    \begin{equation}
        J _m ( \lambda ) X = X J ^T _n ( \lambda ) \iff X = \left[ \begin{array}{c} Y \\ 0 \end{array}\right], \  Y \in \mathcal{I} ( J _n ( \lambda ) )  \  .
    \end{equation}
\end{lemma}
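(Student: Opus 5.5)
The plan is to reduce this case to Lemma \ref{lemma: J m X = X J n lambda m < n} by transposition, and only if that proves awkward to repeat the earlier argument directly. Let $X \in \mathbf{M}_{m\times n}$ with $m>n$. Transposing the relation $J_m(\lambda)X = XJ_n^T(\lambda)$ gives $X^T J_m^T(\lambda) = J_n(\lambda) X^T$, that is, $J_n(\lambda) X^T = X^T J_m^T(\lambda)$. Here $X^T\in\mathbf{M}_{n\times m}$ and $n<m$. So Lemma \ref{lemma: J m X = X J n lambda m < n}, with the roles of $m$ and $n$ exchanged, applies directly. It gives
\begin{equation}
J_n(\lambda)X^T = X^T J_m^T(\lambda) \iff X^T = \left[ \begin{array}{cc} Z & 0 \end{array}\right],\  Z\in\mathcal{I}(J_n(\lambda)) .
\end{equation}
Transposing back yields $X = \left[ \begin{array}{c} Z^T \\ 0 \end{array}\right]$.

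The remaining point is to identify $Z^T$ with a matrix $Y\in\mathcal{I}(J_n(\lambda))$. By Definition \ref{def:subspace_ia}, every element of $\mathcal{I}(J_n(\lambda))$ is symmetric, so $Z^T=Z$ and we may take $Y=Z$. Conversely, if $X=\left[ \begin{array}{c} Y \\ 0 \end{array}\right]$ with $Y\in\mathcal{I}(J_n(\lambda))$, then $X^T=\left[ \begin{array}{cc} Y^T & 0 \end{array}\right]=\left[ \begin{array}{cc} Y & 0 \end{array}\right]$. The reverse direction of Lemma \ref{lemma: J m X = X J n lambda m < n} then gives $J_n(\lambda)X^T = X^T J_m^T(\lambda)$, and transposing this gives the required relation. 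The two equivalences together prove the statement.

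The main obstacle is making sure the symmetry used above is legitimate. The set $\mathcal{I}$ is defined with the symmetry requirement built in, and the previous lemma relies on that. If one instead wanted a self-contained proof, I would mirror the proof of Lemma \ref{lemma: J m X = X J n lambda m < n}. One passes to $J_m(0)X = XJ_n^T(0)$ and iterates to $J_m^p(0)X = X (J_n^p(0))^T$ for $n\le p\le m-1$. Since $J_n^p(0)=0$ for $p\ge n$, this forces $J_m^p(0)X=0$, and taking $p=m-1$ down to $p=n$ kills the last $m-n$ rows of $X$ successively. One then partitions $J_m(\lambda)=\left[ \begin{array}{cc} J_n(\lambda) & E \\ 0 & J_{m-n}(\lambda)\end{array}\right]$. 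Because the lower block of $X$ vanishes, the top block yields $J_n(\lambda)Y=YJ_n^T(\lambda)$. With the upper-triangular $J_m$ and the rows now zero, the term $E\cdot 0$ disappears cleanly, so this route should also work.
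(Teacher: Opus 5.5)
Your proof is correct and is the paper's argument: transpose to $J_n(\lambda)X^T = X^T J_m^T(\lambda)$ and apply the preceding lemma (the case $m<n$) with $m$ and $n$ exchanged. What you add is the explicit step $Y^T=Y$ (elements of $\mathcal{I}(J_n(\lambda))$ are symmetric), which the paper uses silently when it writes $X=\left[\begin{array}{c} Y \\ 0\end{array}\right]$, together with the explicit converse direction. In your fallback self-contained route, concluding $Y\in\mathcal{I}(J_n(\lambda))$ from $J_n(\lambda)Y=YJ_n^T(\lambda)$ also requires $Y$ to be symmetric; this holds because that relation alone gives $y_{i+1,j}=y_{i,j+1}$, so $Y$ is a Hankel matrix and hence symmetric.
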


\noindent \textit{Proof:}

    We rewrite $ J _m ( \lambda ) X = X J ^T _n ( \lambda ) $ as $ J _n ( \lambda ) X ^T = X ^T J ^T _m ( \lambda ) $.
    By Lemma \ref{lemma: J m X = X J n lambda m < n} we have $X^T = \left[ \begin{array}{cc} Y & 0 \end{array}\right] $ with $Y \in \mathcal{I} ( J _n ( \lambda ) ) $, hence $ X = \left[ \begin{array}{c} Y \\ 0 \end{array}\right] $.

\rightline{$\square$}

\begin{theorem}
\label{theorem: ia jordan block second kind real eigenvalue}
    Let $ n _1, \ldots, n _m $ be positive integers such that $ n = n _1 + \ldots + n _m $,
    $ \lambda \in \mathbb{R}  $, and let $ J _{ n _1, \ldots, n _m } ( \lambda ) \in \mathbf{M}_{n}$ be a Jordan matrix. Then every matrix $X \in \mathcal{I}( J _{ n _1, \ldots, n _m } ( \lambda ) )$ is a block matrix of the form
    \begin{equation}
        X = \left[ \begin{array}{ccc}
            X _{1 1}    &   \cdots  &   X _{1 m}    \\
            \vdots      &   \ddots  &   \vdots  \\
            X _{m 1}    &   \cdots  &   X _{m m}
        \end{array}\right]
    \end{equation}
    where for each $i, j \in\{ 1, \ldots, m\} $, $ X _{i j} \in \mathbf{M}_{n _i\times n _j}$ satisfies $X ^T _{i j}  = X _{j i}  $ and are of the following form:
    \begin{enumerate}
        \item   If $n_i = n_j$ then $X_{ij} \in\mathcal{I}( J_{n_i}(\lambda ))$.
        \item   If $n_i < n_j$ then $X_{ij} = \left[ \begin{array}{cc} Y _{ij} & 0 \end{array}\right] $ with $Y_{ij}\in\mathcal{I}(J_{n_i}(\lambda ))$.
        \item   If $n_i > n_j$ then $X_{ij} = \left[ \begin{array}{c} Y _{ij} \\ 0 \end{array}\right] $ with $Y_{ij} \in\mathcal{I}(J_{n_j}(\lambda ))$.
    \end{enumerate}
\end{theorem}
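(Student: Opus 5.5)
Partition any $X \in \mathcal{I}( J _{ n _1, \ldots, n _m } ( \lambda ) )$ conformally with the block diagonal structure of $J = J _{ n _1, \ldots, n _m } ( \lambda )$. Write $X = [X_{ij}]$ with $X_{ij} \in \mathbf{M}_{n_i \times n_j}$. Since $X$ is symmetric, comparing blocks of $X^T = X$ gives $X_{ij}^T = X_{ji}$ at once.

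Next, expand the intertwining relation $J X = X J^T$ block by block. Both $J$ and $J^T = \diagonal{J^T_{n_1}(\lambda), \ldots, J^T_{n_m}(\lambda)}$ are block diagonal. Hence the $(i,j)$ block of $JX$ is $J_{n_i}(\lambda) X_{ij}$, and the $(i,j)$ block of $XJ^T$ is $X_{ij} J^T_{n_j}(\lambda)$. So the relation is equivalent to the family of decoupled equations
\begin{equation}
J_{n_i}(\lambda) X_{ij} = X_{ij} J^T_{n_j}(\lambda), \qquad i,j \in \{1,\ldots,m\}.
\end{equation}
Each block can now be treated separately.

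We split into three cases according to the sizes.
\begin{enumerate}
\item If $n_i < n_j$, Lemma \ref{lemma: J m X = X J n lambda m < n} gives $X_{ij} = [\,Y_{ij}\ 0\,]$ with $Y_{ij} \in \mathcal{I}(J_{n_i}(\lambda))$.
\item If $n_i > n_j$, Lemma \ref{lemma: J m X = X J n lambda m > n} gives $X_{ij} = \left[\begin{array}{c} Y_{ij} \\ 0 \end{array}\right]$ with $Y_{ij} \in \mathcal{I}(J_{n_j}(\lambda))$.
\item If $n_i = n_j$, the block $X_{ij}$ is square and satisfies the intertwining relation with the same Jordan cell.
\end{enumerate}

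The only delicate point is the equal-size case. Here $\mathcal{I}$ was defined only for symmetric matrices, and an off-diagonal block $X_{ij}$ with $i \neq j$ need not be symmetric a priori. The fix is to rerun the computation in the proof of Theorem \ref{theorem: ia jordan block first kind real eigenvalue} without assuming symmetry. The relation $J_{n}(\lambda) Y = Y J_n^T(\lambda)$ reduces to $J_n(0) Y = Y J_n(0)^T$. Entrywise this gives $y_{i+1,j} = y_{i,j+1}$ and forces the entries below the main antidiagonal to vanish. So $Y$ is constant along antidiagonals, i.e. it is a Hankel matrix, and any Hankel matrix is automatically symmetric. Thus $X_{ij} \in \mathcal{I}(J_{n_i}(\lambda))$ holds even when $i \neq j$.

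The same remark covers the blocks $Y_{ij}$ in the two unequal cases. The lemmas there were proved by reducing to the square equation $J_m(\lambda) Y = Y J_m^T(\lambda)$, so the same Hankel argument shows that membership in $\mathcal{I}$ is consistent with that statement.
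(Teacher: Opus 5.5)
Your proposal is correct and follows the paper's proof. You split $JX = XJ^T$ blockwise into $J_{n_i}(\lambda)X_{ij} = X_{ij}J^T_{n_j}(\lambda)$, read $X_{ji}=X_{ij}^T$ off the symmetry of $X$, and invoke the two rectangular-block lemmas when $n_i\neq n_j$. Your Hankel argument for the equal-size case adds rigor the paper lacks: the paper simply asserts that $X_{ij}$ is symmetric when $n_i=n_j$, which is immediate only for $i=j$, and the same argument also justifies the step $Y\in\mathcal{I}(J_m(\lambda))$ inside those lemmas.
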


\noindent \textit{Proof:}
    Let $ i, j \in \{ 1, \ldots, m\} $.
    If $X \in \mathcal{I}( J _{ n _1, \ldots, n _m } ( \lambda ) )$ then $ J _{ n _1, \ldots, n _m } ( \lambda ) X = X J ^T _{ n _1, \ldots, n _m } ( \lambda ) $ and $ X ^T = X $, so that $ J _{n _i} (\lambda) X _{i j} = X _{i j} J _{n _j} ^T (\lambda) $ and $ X _{j i} = X ^T _{i j} $.
    If $n _i = n _j $ then $X _{i j}$ is symmetric, so that $X _{i j} \in \mathcal{I}( J _{n _i} (\lambda) )$.
    By Lemma \ref{lemma: J m X = X J n lambda m < n} we have that $X_{ij} = \left[ \begin{array}{cc} Y_{ij} & 0 \end{array}\right] $ with $Y_{ij} \in \mathcal{I}(J_{n_i}(\lambda ))$ for $n_i < n_j$.
    For $n _i > n _j$, by Lemma \ref{lemma: J m X = X J n lambda m > n} we find $ X_{ij} = \left[ \begin{array}{c} Y _{ij} \\ 0 \end{array}\right] $ with $Y_{ij}\in\mathcal{I}(J_{n_j}(\lambda ))$.

\rightline{$\square$}


\begin{lemma}
\label{lemma: ia Lambda}
For any $\Lambda  = \left[ \begin{array}{cc} \alpha & -\beta \\ \beta & \alpha \end{array} \right] \in \mathbf{M}_{2}$ with $\beta >0$, $\mathcal{I}(\Lambda)$ coincides with the set of all real symmetric 2x2- matrices of the form
\begin{equation}
   \left[ \begin{array}{cc} a & b \\ b & -a \end{array} \right]
\end{equation}
\end{lemma}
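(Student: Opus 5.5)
We prove the statement by a direct computation, in the same way as the proof of Theorem \ref{theorem: ia jordan block first kind real eigenvalue}. By Definition \ref{def:subspace_ia}, an element of $\mathcal{I}(\Lambda)$ is a real symmetric matrix
\begin{equation}
X = \left[ \begin{array}{cc} a & b \\ b & c \end{array} \right],
\end{equation}
so we only need to find the conditions on $a,b,c$ imposed by the intertwining relation $\Lambda X = X \Lambda^T$. First we remove the scalar part: write $\Lambda = \alpha I_2 + \beta K$ with
\begin{equation}
K = \left[ \begin{array}{cc} 0 & -1 \\ 1 & 0 \end{array} \right].
\end{equation}
The term $\alpha I_2$ commutes with everything, and $\beta \neq 0$. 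Hence the relation is equivalent to $K X = X K^T = -XK$, that is, $KX + XK = 0$.

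Next we compute both products:
\begin{equation}
KX = \left[ \begin{array}{cc} -b & -c \\ a & b \end{array} \right], \qquad
XK = \left[ \begin{array}{cc} b & -a \\ c & -b \end{array} \right].
\end{equation}
Their sum is
\begin{equation}
\left[ \begin{array}{cc} 0 & -(a+c) \\ a+c & 0 \end{array} \right].
\end{equation}
This vanishes exactly when $c = -a$, with no condition on $b$. It follows that $X \in \mathcal{I}(\Lambda)$ if and only if $X$ has the form $\left[ \begin{array}{cc} a & b \\ b & -a \end{array} \right]$.

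For the converse inclusion we note that each step above was an equivalence, so every matrix of this form lies in $\mathcal{I}(\Lambda)$. As a check, $\mathcal{I}(\Lambda)$ is then a two-dimensional subspace, in agreement with the earlier theorem that $\mathcal{I}(A)$ is a subspace. The only point that needs care is the use of $\beta > 0$ to divide out $\beta$: if $\beta = 0$ the relation would hold for every symmetric $X$. Everything else is routine $2\times 2$ arithmetic.
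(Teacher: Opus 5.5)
Your proof is correct and is essentially the paper's argument: a direct $2\times 2$ computation of the intertwining relation, using $\beta>0$ to cancel the factor $\beta$. The one difference is that the paper starts from an arbitrary $X\in\mathbf{M}_{2}$ and derives symmetry ($x_3=x_2$) along with $x_4=-x_1$ from the relation itself. Your computation with general entries would give the same result, since $KX+XK$ then has entries $x_2-x_3$ on the diagonal and $\pm(x_1+x_4)$ off it. The paper needs this stronger form, with symmetry automatic, when it later applies the lemma to off-diagonal blocks that are not a priori symmetric.
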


\noindent \textit{Proof:}
    For any
    \begin{equation}
    X = \left[ \begin{array}{cc} x _1 & x _2 \\ x _3 & x _4 \end{array}\right] \in\mathbf{M}_{2}\  ,
    \end{equation}
    from the intertwining relation $\Lambda X = X \Lambda ^T$ together with $ \beta > 0 $ we get $ x _3 = x _2 $ $ x _4 = - x _1 $.

\rightline{$\square$}

\begin{lemma}
\label{lemma: linear eq ia}
    Let $ X \in \mathbf{M}_{2}$,
   $\Lambda  = \left[ \begin{array}{cc} \alpha & -\beta \\ \beta & \alpha \end{array} \right] \in \mathbf{M}_{2}$ with $\beta > 0$.
    Any $ Y \in \mathcal{I} ( \Lambda ) $ satisfies that
    \begin{equation}
        \Lambda X = X \Lambda ^T + Y \iff X \in \mathcal{I} ( \Lambda ), Y = 0 \  .
    \end{equation}
\end{lemma}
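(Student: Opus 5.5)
The plan is to write $\Lambda = \alpha I_2 + \beta K$ with $K = \left[ \begin{array}{cc} 0 & -1 \\ 1 & 0 \end{array}\right]$, so that $\Lambda^T = \alpha I_2 - \beta K$. The relation $\Lambda X = X\Lambda^T + Y$ then becomes
\begin{equation}
\beta\,(KX + XK) = Y ,
\end{equation}
since the $\alpha$-terms cancel. For a general $X = \left[ \begin{array}{cc} x_1 & x_2 \\ x_3 & x_4 \end{array}\right]$, no symmetry being assumed, a direct multiplication gives
\begin{equation}
KX + XK = \left[ \begin{array}{cc} x_2 - x_3 & -(x_1+x_4) \\ x_1 + x_4 & x_2 - x_3 \end{array}\right] .
\end{equation}
This is a matrix of the form $pI_2 + qK$: a multiple of the identity plus an antisymmetric part.

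For the forward direction I would use Lemma \ref{lemma: ia Lambda}, which says $Y \in \mathcal{I}(\Lambda)$ has the form $\left[ \begin{array}{cc} a & b \\ b & -a \end{array}\right]$, i.e.\ $Y$ is symmetric and traceless. Comparing entries with $\beta(pI_2+qK)$ gives four equations:
\begin{itemize}
\item the $(1,2)$ and $(2,1)$ entries give $b = -\beta q$ and $b = \beta q$;
\item the diagonal entries give $a = \beta p$ and $-a = \beta p$.
\end{itemize}
Hence $a=b=0$ and $\beta p = \beta q = 0$. Since $\beta > 0$, we get $p=q=0$, that is, $x_2 = x_3$ and $x_4 = -x_1$, and also $Y = 0$.

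With $x_2 = x_3$, $X$ is symmetric, and the relation reduces to $\Lambda X = X\Lambda^T$. So $X \in \mathcal{I}(\Lambda)$ by definition, which is consistent with the explicit form in Lemma \ref{lemma: ia Lambda}. For the converse, if $X \in \mathcal{I}(\Lambda)$ and $Y=0$, the relation is exactly the defining identity, so there is nothing to prove.

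The only step needing care is the observation that the image of $X \mapsto \Lambda X - X\Lambda^T$ (the span of $I_2$ and $K$) meets $\mathcal{I}(\Lambda)$ (the symmetric traceless matrices) only in $0$. The entrywise comparison above settles this, and it is the only place where $\beta > 0$ is used.
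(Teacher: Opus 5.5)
Your proof is correct and follows the same route as the paper. You write $Y$ in the form $\left[\begin{array}{cc} a & b \\ b & -a\end{array}\right]$ given by Lemma \ref{lemma: ia Lambda}, compare entries of $\Lambda X = X\Lambda^T + Y$, and use $\beta>0$ to get $a=b=0$, $x_2=x_3$ and $x_4=-x_1$; your splitting $\Lambda=\alpha I_2+\beta K$ only spells out the cancellation of the $\alpha$-terms and the trivial intersection of the image of $X\mapsto \Lambda X - X\Lambda^T$ with $\mathcal{I}(\Lambda)$, which the paper leaves implicit.
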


\noindent \textit{Proof:}

    Let $X = \left[ \begin{array}{cc} x & y \\ z & t \end{array}\right] \in\mathbf{M}_{2}$.
    If $Y \in \mathcal{I} ( \Lambda ) $, then $ Y = \left[ \begin{array}{cc} a & b \\ b & -a \end{array}\right] \in\mathbf{M}_{2}$.
    The relation $ \Lambda X = X \Lambda ^T + Y $ together with $ \beta > 0 $ implies $ a = b = 0 $, $ z = y $ and $ t = -x $.
    On the other hand, if $ X \in \mathcal{I} ( \Lambda ) $ then $ \Lambda X = X \Lambda ^T $, hence $ Y = 0 $.

\rightline{$\square$}

\begin{theorem}
\label{theorem: ia jordan block first kind complex conjugate eigenvalues}
    Let $J _n ( \Lambda )$ be a Jordan $\Lambda$-block of the first kind with complex conjugate eigenvalues represented by $\Lambda  = \left[ \begin{array}{cc} \alpha & -\beta \\ \beta & \alpha \end{array} \right] $.
    Then
    \begin{equation}
    \fl \mathcal{I}( J _n ( \Lambda ) ) = \  \left\{ \left[ \begin{array}{cccc}
            X _1    &   X _2    &   \cdots  &   X _n    \\
            X _2    &   X _3    &   \cdots  &   0 _2   \\
            \vdots  &   \vdots  &   \ddots  &   \vdots  \\
            X _n    &   0 _2    &   \cdots  &   0 _2
        \end{array}\right] \in\mathbf{M}_{n} :\,  X _i \in\mathcal{I}(\Lambda )\  \hbox{for}\   i \in \{1, \ldots, n \} \right\}
    \end{equation}
\end{theorem}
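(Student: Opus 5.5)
Write $X\in\mathcal{I}(J_n(\Lambda))$ in $2\times 2$ blocks $X=[X_{ij}]_{i,j=1}^n$ with $X_{ij}\in\mathbf{M}_{2}$. Symmetry of $X$ means $X_{ji}=X_{ij}^T$. Write $J_n(\Lambda)=\mathrm{diag}[\Lambda,\ldots,\Lambda]+N$, where $N$ has $I_2$ on the block superdiagonal. The intertwining relation $J_n(\Lambda)X=XJ_n^T(\Lambda)$ then reads, blockwise,
\begin{equation}
\Lambda X_{ij}+X_{i+1,j}=X_{ij}\Lambda^T+X_{i,j+1},
\end{equation}
with the convention $X_{n+1,j}=X_{i,n+1}=0_2$. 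Equivalently, $\Lambda X_{ij}=X_{ij}\Lambda^T+(X_{i,j+1}-X_{i+1,j})$. This is the form to which Lemma \ref{lemma: linear eq ia} is meant to apply, playing the role that the shift relations played in the proof of Theorem \ref{theorem: ia jordan block first kind real eigenvalue}.

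I would run a downward induction on the antidiagonal index $s=i+j$, from $s=2n$ down to $s=2$, and show two things at each level. First, if $s\ge n+2$, then $X_{ij}=0_2$. Second, if $s\le n+1$, then $X_{ij}$ depends only on $s$ and lies in $\mathcal{I}(\Lambda)$.

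Start at $s=2n$. The equation for $(n,n)$ is $\Lambda X_{nn}=X_{nn}\Lambda^T$, so $X_{nn}\in\mathcal{I}(\Lambda)$ as a matrix; note $X_{nn}$ is symmetric automatically.

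The inductive step needs some care, because Lemma \ref{lemma: linear eq ia} requires the inhomogeneous term $Y=X_{i,j+1}-X_{i+1,j}$ to lie in $\mathcal{I}(\Lambda)$. By the induction hypothesis, both blocks in $Y$ lie on level $s+1$. So either both are $0_2$, or both equal the common value on that level, and in either case $Y=0_2$. I will organize the induction so that it proves $Y=0$ and keeps all blocks in $\mathcal{I}(\Lambda)$. Then the relation becomes the homogeneous equation $\Lambda X_{ij}=X_{ij}\Lambda^T$, so each $X_{ij}$ on level $s$ satisfies the intertwining equation with $\Lambda$.

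Next I would show that these blocks are zero for $s\ge n+2$ and constant along the antidiagonal for $s\le n+1$. For this, use the relation at level $s-1$. Its inhomogeneous term is $X_{i,j+1}-X_{i+1,j}$ with $(i,j+1)$ and $(i+1,j)$ on level $s$, and these now lie in $\mathcal{I}(\Lambda)$, or at least satisfy its commutation condition. Lemma \ref{lemma: linear eq ia} then forces $X_{i,j+1}=X_{i+1,j}$, so consecutive blocks on level $s$ agree. At the boundary, the relation with $i=n$ (or $j=n$) and $j<n$ reads $\Lambda X_{nj}=X_{nj}\Lambda^T+X_{n,j+1}$. The lemma gives $X_{n,j+1}=0$ whenever $j+1\ge 2$ and $n+j+1\ge n+2$, which kills everything strictly below the main block antidiagonal, exactly as in the scalar case. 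On and above the main antidiagonal, equality along each antidiagonal gives $X_{ij}=X_{i+j-1}$. Combined with $X_{ji}=X_{ij}^T$, this makes each $X_k$ symmetric, hence in $\mathcal{I}(\Lambda)$ by Lemma \ref{lemma: ia Lambda}.

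The main obstacle is a gap in applying Lemma \ref{lemma: linear eq ia} as stated: it is phrased for $Y\in\mathcal{I}(\Lambda)$, which includes symmetry, while off-diagonal blocks $X_{ij}$ need not a priori be symmetric. I would first strengthen the lemma slightly. The argument is that $X\mapsto\Lambda X-X\Lambda^T$ maps $\mathbf{M}_{2}$ onto a complement of the commutant-type subspace $\{Y:\Lambda Y=Y\Lambda^T\}$. This follows from a $4\times 4$ computation using $\beta>0$, and it shows that any $Y$ satisfying $\Lambda Y=Y\Lambda^T$ and lying in the image must vanish. Once this is in place, the induction works with that commutation condition in place of full membership in $\mathcal{I}(\Lambda)$.

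For the converse, it is a direct check that a block Hankel matrix with blocks $X_k\in\mathcal{I}(\Lambda)$ and zero blocks below the main antidiagonal satisfies the blockwise relation. In that case both sides of the relation reduce to $\Lambda X_{i+j-1}$ plus equal shifted terms.
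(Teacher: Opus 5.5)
Your argument is correct, but it is organized differently from the paper's.

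The paper inducts on $n$. It partitions $J_{n+1}(\Lambda)$ as $\Lambda$ bordered by $E_1$ and $J_n(\Lambda)$, and applies the induction hypothesis to the trailing $2n\times 2n$ block of $\mathfrak{X}$. It then solves the resulting chain of $2\times 2$ equations for the border blocks $Y_k,Z_k$ and the corner $T$ one at a time with Lemma~\ref{lemma: linear eq ia}, after treating $n=1,2$ separately.

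You instead write all $n^2$ block equations $\Lambda X_{ij}-X_{ij}\Lambda^T=X_{i,j+1}-X_{i+1,j}$ at once and run a downward induction on the antidiagonal index. Your proof is therefore a direct block analogue of the scalar Theorem~\ref{theorem: ia jordan block first kind real eigenvalue}. Its single tool is that $\mathbf{M}_2$ is the direct sum of the kernel $\{\left[\begin{array}{cc} a & b\\ b & -a\end{array}\right]\}$ and the image $\{\left[\begin{array}{cc} p & -q\\ q & p\end{array}\right]\}$ of $X\mapsto\Lambda X-X\Lambda^T$.

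Your route has three advantages. It needs no special base cases. It is the same mechanism the paper itself later uses for the rectangular equation $J_m(\Lambda)X=XJ_n^T(\Lambda)$. It also checks the reverse inclusion explicitly, which the paper leaves implicit. The paper's route gains by reusing the statement for smaller $n$, so that each step involves only one border row and column.

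The ``gap'' you identify is not really there. The computation in Lemma~\ref{lemma: ia Lambda} never uses symmetry: it shows that $\Lambda Y=Y\Lambda^T$ alone forces $Y$ into the form $\left[\begin{array}{cc} a & b\\ b & -a\end{array}\right]$. So the commutation condition and membership in $\mathcal{I}(\Lambda)$ coincide, and Lemma~\ref{lemma: linear eq ia} applies to off-diagonal blocks exactly as stated. For the same reason, your final appeal to $X_{ji}=X_{ij}^T$ is redundant, though harmless. For the converse, add the one-line observation that the block Hankel matrix is symmetric because each $X_k$ is.
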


\noindent \textit{Proof:}

    For $ n = 1 $ see Lemma \ref{lemma: ia Lambda}.
    For $ n = 2 $, let
    \begin{equation}
        \mathfrak{X} = \left[ \begin{array}{cc}
            X   &   Y
        \\  Z   &   T
        \end{array}\right] \in \mathbf{M}_{4}
    \end{equation}
    where $ X, Y, Z, T \in \mathbf{M}_{2}$.
    The intertwining relation $ J _2 ( \Lambda ) \mathfrak{X} = \mathfrak{X} J ^T _2 ( \Lambda ) $ implies
    \begin{equation}
    \label{eq:ia_Lamba_eq_1_4}
    \eqalign{
    \Lambda T     &=  T \Lambda^T  \\
    \Lambda Z     &=  Z \Lambda^T + T   \\
    \Lambda Y + T &=  Y \Lambda^T   \\
    \Lambda X + Z &=  X \Lambda^T + Y
    } \end{equation}  
    The first one of the Equations (\ref{eq:ia_Lamba_eq_1_4}) implies that $ T \in \mathcal{I} ( \Lambda ) $.
    Applying Lemma \ref{lemma: linear eq ia} to the second and third ones of Equations (\ref{eq:ia_Lamba_eq_1_4}), we obtain $Y, Z \in \mathcal{I} ( \Lambda ) $ and $ T = 0 $.
    Using Lemma \ref{lemma: linear eq ia} in the last one of Equations (\ref{eq:ia_Lamba_eq_1_4}) we find that $X \in \mathcal{I} ( \Lambda ) $ and $ Z = Y $. Thus
    \begin{equation}
        \mathcal{I}( J _2 ( \Lambda ) ) = \  \left\{
        \left[ \begin{array}{cc} X & Y \\ Y & 0 \end{array}\right] :
        X, Y \in \mathcal{I} ( \Lambda ) \right\}  \  .
    \end{equation}
    Now, assume that the property is true for $n$ and let us prove that it is satisfied for $n + 1$.

    $ J _{n + 1} ( \Lambda ) $ can be partitioned as follows:
    \begin{equation}
        J _{n + 1} ( \Lambda ) = \left[ \begin{array}{cc}
            \Lambda &   E_1
        \\  0       &   J _n ( \Lambda ) \end{array}\right]
    \end{equation}
    where $E _1 = \left[ \begin{array}{cccc} I_2 & 0 & \cdots & 0 \end{array}\right] \in \mathbf{M}_{2\times 2n}$.
    Let
    \begin{equation}
        \mathfrak{X} = \left[ \begin{array}{cc}
            T   &   Y
        \\  Z   &   X
        \end{array}\right] \in \mathbf{M}_{2(n+1)}
    \end{equation}
    where $ X \in \mathbf{M}_{2n}$, $Y = \left[ \begin{array}{ccc} Y_1 & \cdots & Y_n \end{array}\right],
    Z ^T = \left[ \begin{array}{ccc} Z_1^T & \cdots & Z_n^T \end{array}\right]$, and all matrices $T, Y_1, \ldots, Y_n, Z_1, \ldots, Z_n$ belong to $\mathbf{M}_{2}$.

    If $ J _{n + 1} ( \Lambda ) \mathfrak{X} = \mathfrak{X} J ^T _{n + 1} ( \Lambda ) $ then
    \begin{equation}
    \label{eq:ia_Lambda_eq_all_n+1}
    \eqalign{
    J _n ( \Lambda ) X &= J ^T _n ( \Lambda ) \\
    J _n ( \Lambda ) Z &= \Lambda Y + E _1 X  \\
    Z \Lambda ^T + X E _1 ^T &= Y J ^T _n ( \Lambda )  \\
    \Lambda T + E _1 Z &= T \Lambda ^T + Y E _1 ^T
    } \end{equation}
    From the one of the Equations (\ref{eq:ia_Lambda_eq_all_n+1}) we obtain $ X \in \mathcal{I} ( J _n ( \Lambda ) ) $.
    By the induction hypothesis we can write
    \begin{equation}
        X = \left[ \begin{array}{cccc}
            X_1 &   X_2 &   \cdots  &   X_n
        \\  X_2 &   X_3 &   \cdots  &   0
        \\  \vdots  &   \vdots  &   \ddots  &   \vdots
        \\  X_n &   0   &   \cdots  &   0
        \end{array}\right]
    \end{equation}
    The second and third one of Equations (\ref{eq:ia_Lambda_eq_all_n+1}) imply that
    \begin{equation}  
    \label{eq:ia Lamba n+1}
    \eqalign{
    \Lambda Z_n &= Z_n \Lambda ^T + X_n\, ,
    \  \Lambda Y_n + X_n = Y_n \Lambda ^T \\
    \Lambda Z_{n-1} + Z_n &= Z_{n-1} \Lambda ^T + X_{n-1}\, ,
    \  \Lambda Y_{n-1} + X_{n-1} = Y_{n-1} \Lambda ^T + Y_n \\
     &\vdots   \\
    \Lambda Z_2 + Z_3  &= Z_2 \Lambda ^T + X_2\, ,
    \  \Lambda Y_2 + X_2 = Y_2 \Lambda ^T + Y_3 \\
    \Lambda Z_1 + Z_2 &= Z_1 \Lambda ^T + X_1\, ,
    \  \Lambda Y_1 + X_1 = Y_1 \Lambda ^T + Y_2\  \  .
    } \end{equation}
    Using Lemma \ref{lemma: linear eq ia} for Equations (\ref{eq:ia Lamba n+1}), one obtains $ X_n = 0, Z_n = Y_n = X_{n-1}, \ldots, Z_3 = Y_3 = X_2, Z_2 = Y_2 = X_1 $ and $ Z_1, Y_1 \in \mathcal{I} ( \Lambda ) $.
    Then, applying Lemma \ref{lemma: linear eq ia} in the last one of Equations (\ref{eq:ia_Lambda_eq_all_n+1}) gives $ Z _1 = Y _1 $.
    Therefore,
    \begin{equation}
        \mathfrak{X} = \left[ \begin{array}{ccccc}
            T       &   Y_1 &   X_1 &   \cdots  &   X_{n-1}
        \\  Y_1     &   X_1 &   X_2 &   \cdots  &   0
        \\  X_1     &   X_2 &   X_3 &   \cdots  &   0
        \\  \vdots  &   \vdots  &   \vdots  &   \ddots  &   \vdots
        \\  X_{n-1} &   0   &   0   &   \cdots  &   0
        \end{array}\right]  \  .
    \end{equation}
    Finally, it is obvious that $\mathfrak{X} ^T = \mathfrak{X} $.

\rightline{$\square$}

\begin{lemma}
\label{lemma: J m X = X J T n Lambda m < n}
    Let $m$ and $n$ be two positive integers such that $ m < n $, and let $ X \in \mathbf{M}_{2m\times 2n}$. Then
    \begin{equation}
        J _m ( \Lambda ) X = X J ^T _n ( \Lambda ) \  \hbox{if and only if}\  X = \left[ \begin{array}{cc} Y & 0 \end{array}\right] , Y \in \mathcal{I} ( J _m ( \Lambda ) )
    \end{equation}
\end{lemma}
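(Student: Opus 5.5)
We mimic the proof of Lemma \ref{lemma: J m X = X J n lambda m < n}, working with $2\times 2$ blocks instead of scalar entries and using Lemma \ref{lemma: linear eq ia} to handle the terms generated by $\Lambda$. Write $X=[X_{ij}]$ with $X_{ij}\in\mathbf{M}_{2}$, $i\in\{1,\ldots,m\}$, $j\in\{1,\ldots,n\}$. Set $N_k = J_k(\Lambda)-\diagonal{\Lambda,\ldots,\Lambda}$, the block shift with $I_2$ on the superdiagonal. Then the relation $J_m(\Lambda)X = XJ^T_n(\Lambda)$ reads, blockwise,
\begin{equation}
\Lambda X_{ij} + X_{i+1,j} = X_{ij}\Lambda^T + X_{i,j+1},
\end{equation}
with the conventions $X_{m+1,j}=0$ and $X_{i,n+1}=0$.

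\textbf{Forward direction: first the last column, then the last row.}

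Step 1: take $i=m$, $j=n$. The equation becomes $\Lambda X_{mn}=X_{mn}\Lambda^T$, so $X_{mn}\in\mathcal{I}(\Lambda)$.

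Step 2: take $i=m-1$, $j=n$. The equation is $\Lambda X_{m-1,n}=X_{m-1,n}\Lambda^T - X_{mn}$, where $-X_{mn}\in\mathcal{I}(\Lambda)$. Lemma \ref{lemma: linear eq ia} then gives $X_{mn}=0$ and $X_{m-1,n}\in\mathcal{I}(\Lambda)$.

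Step 3: ascend the last column the same way. Each step kills the block below and puts the current block in $\mathcal{I}(\Lambda)$. At the end, $X_{2n}=\cdots=X_{mn}=0$ and $X_{1n}\in\mathcal{I}(\Lambda)$.

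Step 4: the top-right block is the key point, and it is where $m<n$ enters. Take $i=m$, $j=n-1$. The equation is $\Lambda X_{m,n-1}=X_{m,n-1}\Lambda^T + X_{mn}$, and $X_{mn}=0$. Walking along the last row to the left (row $m$ sees no block below it), the recursion $\Lambda X_{m,j}=X_{m,j}\Lambda^T+X_{m,j+1}$ propagates, giving $X_{m,j+1}=0$ and $X_{mj}\in\mathcal{I}(\Lambda)$. So the nonzero data sit only on block antidiagonals. Each antidiagonal $i+j=\mathrm{const}$ is forced to be constant in $i$ by the same lemma: the equations read $\Lambda X_{ij}-X_{ij}\Lambda^T = X_{i,j+1}-X_{i+1,j}$, and once the right-hand side is known to lie in $\mathcal{I}(\Lambda)$ it must vanish.

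Step 5: an antidiagonal that meets the last column of blocks at a row $i\ge 2$ is therefore zero. Antidiagonals with $i+j>m+1$ include these, plus those that leave through the bottom row below block column $n$. Because $m<n$, every antidiagonal with $i+j\ge m+2$ either hits column $n$ at some $i\ge2$ or hits row $m$ at some $j\ge 2$. Tracing along, it gets killed either from the right (the column-$n$ zeros) or from the bottom. Concretely, I expect a descending induction on $j$ from $n$ down to $m+1$ to show that block columns $m+1,\ldots,n$ vanish entirely, exactly like the scalar argument via $X(N_n^p)^T=0$.

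Alternatively, and more cleanly, one can exploit that $\Lambda$ commutes with $\Lambda^T$ only up to the structure recorded in Lemma \ref{lemma: linear eq ia}. This is the main obstacle: the scalar proof subtracted $\lambda$ to reduce to nilpotent shifts, which fails here since $\Lambda\ne\Lambda^T$. So I will proceed by the block-by-block recursion above, inducting on $j$ descending. The inductive claim is that column $j$ (for $j>m$) is zero, given that column $j+1$ is zero. The equations for column $j$ then reduce to $\Lambda X_{ij}+X_{i+1,j}=X_{ij}\Lambda^T$, which is the same system as Steps 1–3 with column $j+1$ replaced by zero. That system gives $X_{2j}=\cdots=X_{mj}=0$ and $X_{1j}\in\mathcal{I}(\Lambda)$.

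It remains to kill $X_{1j}$. For that, use the column $j-1$ equations: the row-$m$ equation gives $\Lambda X_{m,j-1}-X_{m,j-1}\Lambda^T = X_{mj}=0$. Going up the column, the equation in row $i$ is $\Lambda X_{i,j-1}-X_{i,j-1}\Lambda^T = X_{ij}-X_{i+1,j-1}$. This forces $X_{i+1,j-1}\in\mathcal{I}(\Lambda)$ and the right side to vanish, so $X_{i+1,j-1}=X_{ij}$, and by the zeros already found these blocks equal $0$ down the chain. Since $j>m$, the chain starting at $X_{1j}$ runs through $X_{2,j-1},\ldots,X_{m,j-m+1}$ and reaches the bottom row with column index $\ge 2$. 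Combining with the row-$m$ recursion of Step 4, which shows the bottom-row blocks feeding back are zero, we get $X_{1j}=0$. Hence $X=[\,Y\ 0\,]$ with $Y\in\mathbf{M}_{2m}$.

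\textbf{Identifying $Y$.} Partition $J_n(\Lambda)=\left[\begin{array}{cc} J_m(\Lambda) & E \\ 0 & J_{n-m}(\Lambda)\end{array}\right]$, where $E$ has $I_2$ in its bottom-left block. Then
\begin{equation}
XJ_n^T(\Lambda)=\left[\begin{array}{cc} YJ_m^T(\Lambda) & YE^T\end{array}\right]\quad\mbox{and}\quad J_m(\Lambda)X=\left[\begin{array}{cc} J_m(\Lambda)Y & 0\end{array}\right].
\end{equation}
Comparing the left parts gives $Y\in\mathcal{I}(J_m(\Lambda))$ once symmetry is checked. Symmetry should follow from Theorem \ref{theorem: ia jordan block first kind complex conjugate eigenvalues}'s computation, since the same block equations force the Hankel form with blocks in $\mathcal{I}(\Lambda)$, which is symmetric. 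The right part gives $YE^T=0$, i.e.\ the last block column of $Y$ vanishes. This is consistent with the Hankel form, whose last block column is $(X_m,0,\ldots,0)^T$, and so forces the corresponding top block to vanish. I will check that this is compatible with the statement, or note that it is absorbed into the form of $Y$.

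\textbf{Converse.} If $X=[\,Y\ 0\,]$ with $Y\in\mathcal{I}(J_m(\Lambda))$, a direct block multiplication with the partition above verifies the relation, as in the scalar lemma.
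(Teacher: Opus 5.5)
Your route matches the paper's. You write the relation blockwise as $\Lambda X_{ij}+X_{i+1,j}=X_{ij}\Lambda^T+X_{i,j+1}$, use Lemma \ref{lemma: linear eq ia} to force each right-hand side lying in $\mathcal{I}(\Lambda)$ to vanish, and kill the antidiagonals with the zeros $X_{m2}=\cdots=X_{mn}=0$ of the last block row. Your forward direction is loosely organized. You say you only ``expect'' the descending induction to work, and the chain $X_{1j}=X_{2,j-1}=\cdots=X_{m,j-m+1}$ uses equalities in columns $j-2,j-3,\ldots$ that you only established for column $j-1$. It closes cleanly if you induct on decreasing $s=i+j$. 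If $X_{i,j+1}$ and $X_{i+1,j}$ are each in $\mathcal{I}(\Lambda)$ or zero, the lemma gives $X_{i,j+1}=X_{i+1,j}$ and $X_{ij}\in\mathcal{I}(\Lambda)$. So $X$ is block Hankel with blocks in $\mathcal{I}(\Lambda)$. Every antidiagonal with $i+j\ge m+2$ meets row $m$ at a column $\ge 2$ and therefore vanishes. This already gives $X=[\,Y\ 0\,]$ with $Y$ of the form in Theorem \ref{theorem: ia jordan block first kind complex conjugate eigenvalues}, and $Y$ is automatically symmetric.

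The genuine error is in ``Identifying $Y$''. Since
$J_n^T(\Lambda)=\left[\begin{array}{cc} J_m^T(\Lambda) & 0 \\ E^T & J_{n-m}^T(\Lambda)\end{array}\right]$, the block $E^T$ sits in the lower-left corner and multiplies the zero block of $X$. Hence $[\,Y\ 0\,]J_n^T(\Lambda)=[\,YJ_m^T(\Lambda)\ \ 0\,]$, not $[\,YJ_m^T(\Lambda)\ \ YE^T\,]$. Note that $YE^T$ is not even defined unless $n=2m$. The condition you describe, that the last block column of $Y$ vanishes, is $YE=0$, so you multiplied by $J_n(\Lambda)$ instead of $J_n^T(\Lambda)$. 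That spurious condition would force the antidiagonal block $X_m$ of $Y$ to be zero, which contradicts the statement. For $m=1$, $n=2$, the matrix $X=[\,X_{11}\ 0\,]$ with any nonzero $X_{11}\in\mathcal{I}(\Lambda)$ satisfies the relation. You left this contradiction unresolved, and with your formula the converse fails, since the term $YE^T$ remains. With the correct product, both the identification of $Y$ and the converse reduce exactly to $J_m(\Lambda)Y=YJ_m^T(\Lambda)$, which is what the paper uses.
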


\noindent \textit{Proof:}

    Let
    \begin{equation}
        \mathfrak{X} = \left[ \begin{array}{ccc}
            X _{1 1}    &   \cdots  &   X _{1 n}    \\
            \vdots      &   \ddots  &   \vdots  \\
            X _{m 1}    &   \cdots  &   X _{m n}
        \end{array}\right] \in \mathbf{M}_{2m \times 2n}
    \end{equation}
    where $ X _{i j} \in \mathbf{M}_{2}$ for all $ i \in \{ 1, \ldots, m \} $ and $ j \in \{ 1, \ldots, n \} $.
    The intertwining relation $ J _m ( \Lambda ) \mathfrak{X} = \mathfrak{X} J ^T _n ( \Lambda ) $ implies the equations
    \begin{equation}
    \label{eq:Lambda_and_X mn}
    \eqalign{
    \Lambda X _{m n}                 &= X _{m n} \Lambda ^T                   \\
    \Lambda X _{i n} + X _{i + 1, n} &= X _{i n} \Lambda ^T                   \\
    \Lambda X _{m j}                 &= X _{m j} \Lambda ^T + X _{m, j + 1 }  \\
    \Lambda X _{i j} + X _{i + 1, j} &= X _{i j} \Lambda ^T + X _{i, j + 1 }  \\
    } \end{equation}
    for each $ i \in \{ 1, \ldots, m - 1 \}$ and $ j \in \{ 1, \ldots, n - 1 \} $.
    The first one of the Equations (\ref{eq:Lambda_and_X mn}) implies $ X _{m n} \in \mathcal{I} ( \Lambda ) $.
    Moreover, applying Lemma \ref{lemma: linear eq ia} to the second and third ones of Equations (\ref{eq:Lambda_and_X mn}), one obtains that $X _{1 n} \in \mathcal{I} ( \Lambda ), X _{2 n} = \cdots = X _{m n} = 0 $ and $ X _{m 1} \in \mathcal{I} ( \Lambda ), X _{m 2} = \cdots = X _{m n} = 0 $, respectively.
    Now, we will only use the last one of Equations (\ref{eq:Lambda_and_X mn}).
    Let us write $ X _n $ instead of $ X _{1 n} $. For $ j = n - 1 $, taking into account Lemma \ref{lemma: linear eq ia}, we get $ X _{1, n - 1} \in \mathcal{I} ( \Lambda ), X _{2, n - 1} = X _n, X _{3, n - 1} = \cdots X _{m, n - 1} = 0 $.
    Also, let us write $ X _{n - 1 } $ instead of $ X _{1, n - 1} $.
    Proceeding analogously as before, it turns out that
    \begin{equation}
        \mathfrak{X} = \left[ \begin{array}{cccccc}
            X _1    &   \cdots  &   X _{n - m + 1} &   X _{n - m + 2} &   \cdots  &   X _n      \\
            \vdots      &   \ddots  &   \vdots  &   \vdots  &   \ddots    &   \vdots  \\
            X _m    &   \cdots  &   X _n   &   0    &   \cdots  &   0
        \end{array}\right]
    \end{equation}
    where $ X _{i + j - 1} = X _{i j} $ with $ i + j \leq n + 1 $.
    However, we had found that only $ X _m $ is non-zero in the last row.
    Then, $ X _{m + 1} = \cdots = X _n = 0 $, so that
    \begin{equation}
        \mathfrak{X} = \left[ \begin{array}{cccccc}
            X _1    &   \cdots  &   X _m    &   0       &   \cdots  &   0   \\
            \vdots  &   \ddots &   \vdots  &   \vdots  &   \ddots &   \vdots   \\
            X _m    &   \cdots  &   0       &   0       &   \cdots  &   0
        \end{array}\right]  \  .
    \end{equation}

    On the other hand, we may partition $ J _n ( \Lambda ) $ as
    \begin{equation}
        J _n ( \Lambda ) = \left[ \begin{array}{cc}
            J _m ( \Lambda )  &   E _{m,1}    \\
            0                       &   J _{n - m} ( \Lambda )
        \end{array}\right]
    \end{equation}
    where
    \begin{equation}
        E _{m,1} = \left[ \begin{array}{cccc}
            0   &   0   &   \cdots  &   0   \\
            \vdots  &   \vdots  &   \ddots  &   \vdots  \\
            0   &   0   &   \cdots  &   0   \\
            I_2 &   0   &   \cdots  &   0   \\
        \end{array}\right] \in \mathbf{M}_{2m \times 2(n-m)}  .
    \end{equation}
    Let $ X \in \mathbf{M}_{2m}$ and $ \mathfrak{X} = \left[ \begin{array}{cc} X & 0 \end{array}\right] \in \mathbf{M}_{2m \times 2n}$.
    If $ X \in \mathcal{I} ( J _m ( \Lambda ) ) $ then $ J _m ( \Lambda ) X = X J ^T _m ( \Lambda ) $, hence
    $ J _m ( \Lambda ) \mathfrak{X} = \mathfrak{X} J ^T _n ( \Lambda ) $.

\rightline{$\square$}

\begin{lemma}
\label{lemma: J m X = X J T n Lambda m > n}
    Let $m$ and $n$ be two positive integer numbers such that $ m > n $, and let $ X \in \mathbf{M}_{2m \times 2n}$. Then
    \begin{equation}
        J _m ( \Lambda ) X = X J ^T _n ( \Lambda ) \iff X = \left[ \begin{array}{c} Y \\ 0 \end{array}\right], \  Y\in \mathcal{I} ( J _n ( \Lambda ) )  \  .
    \end{equation}
\end{lemma}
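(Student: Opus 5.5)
The plan is to reduce this lemma to Lemma \ref{lemma: J m X = X J T n Lambda m < n} by transposition, exactly as Lemma \ref{lemma: J m X = X J n lambda m > n} was obtained from Lemma \ref{lemma: J m X = X J n lambda m < n}.

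Let $X \in \mathbf{M}_{2m\times 2n}$ with $m>n$. Transposing the intertwining relation $J_m(\Lambda) X = X J^T_n(\Lambda)$ gives
\begin{equation}
X^T J^T_m(\Lambda) = J_n(\Lambda) X^T ,
\end{equation}
that is,
\begin{equation}
J_n(\Lambda) X^T = X^T J^T_m(\Lambda),
\end{equation}
where $X^T \in \mathbf{M}_{2n\times 2m}$ and $n<m$. Transposition is an involution, so the original relation holds if and only if this one does. The equivalence therefore transfers in both directions.

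Next we apply Lemma \ref{lemma: J m X = X J T n Lambda m < n}, with the roles of $m$ and $n$ interchanged, to the transposed relation. It says that the transposed relation holds if and only if $X^T = \left[ \begin{array}{cc} W & 0 \end{array}\right]$ for some $W \in \mathcal{I}(J_n(\Lambda))$. Transposing back, this is equivalent to
\begin{equation}
X = \left[ \begin{array}{c} W^T \\ 0 \end{array}\right].
\end{equation}

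The one point that needs care is that $\mathcal{I}(J_n(\Lambda))$ consists of symmetric matrices by Definition \ref{def:subspace_ia}. Hence $W^T = W$, and we may set $Y = W \in \mathcal{I}(J_n(\Lambda))$, which gives exactly the form in the statement.

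For the converse direction, if $X = \left[ \begin{array}{c} Y \\ 0 \end{array}\right]$ with $Y \in \mathcal{I}(J_n(\Lambda))$, then $X^T = \left[ \begin{array}{cc} Y & 0 \end{array}\right]$. The converse half of Lemma \ref{lemma: J m X = X J T n Lambda m < n} then gives the transposed relation, and transposing back gives the original one.

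No real obstacle is expected, since all the work was done in the $m<n$ lemma. The only things to check are the symmetry of $Y$, so that transposition keeps us inside $\mathcal{I}(J_n(\Lambda))$, and that the index roles are interchanged consistently when the earlier lemma is applied.
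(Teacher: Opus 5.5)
Your proof is correct, but it takes a different route from the paper. The paper only says this lemma ``can be proved in a similar way'' to Lemma \ref{lemma: J m X = X J T n Lambda m < n}. That means rerunning the block-entry recursion of that proof, with repeated use of Lemma \ref{lemma: linear eq ia}, now in the case $m>n$. You instead transpose the relation and use Lemma \ref{lemma: J m X = X J T n Lambda m < n} as a black box with the roles of $m$ and $n$ swapped. This is the same trick the paper itself uses to get Lemma \ref{lemma: J m X = X J n lambda m > n} from its $m<n$ counterpart in the real-eigenvalue case.

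Your route has three advantages. It is shorter. It handles both directions of the equivalence at once, because transposition is an involution. It also avoids duplicating the index bookkeeping. Its only extra ingredient is the point you flagged: elements of $\mathcal{I}(J_n(\Lambda))$ are symmetric by definition, so $W^T=W$ stays in that set.

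The direct recomputation the paper has in mind would be self-contained, since it would not depend on the $m<n$ lemma. However, the paper leaves it unwritten, so your argument is the more complete of the two as presented.
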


\noindent \textit{Proof:}

   This can be proved in a similar way to the proof of Lemma \ref{lemma: J m X = X J T n Lambda m < n}.

\rightline{$\square$}

\begin{theorem}
\label{theorem: ia jordan block second kind complex conjugate eigenvalues}
    Let $ n _1, \ldots, n _m $ be positive integers such that $ n = n _1 + \ldots + n _m $, and let $ J _{ n _1, \ldots, n _m } ( \Lambda ) \in \mathbf{M}_{2n}$ be a Jordan matrix with complex conjugate eigenvalues represented by $\Lambda  = \left[ \begin{array}{cc} \alpha & -\beta \\ \beta & \alpha \end{array} \right] $.
    Then every matrix $X \in \mathcal{I}( J _{ n _1, \ldots, n _m } ( \Lambda ) )$ is a block matrix of the form
    \begin{equation}
        X = \left[ \begin{array}{ccc}
            X _{1 1}    &   \cdots  &   X _{1 m}    \\
            \vdots      &   \ddots  &   \ddots  \\
            X _{m 1} &   \cdots  &   X _{m m}
        \end{array}\right]
    \end{equation}
    where for each $i, j \in\{ 1, \ldots, m\} $, $X _{i j} \in \mathbf{M}_{2n_i \times 2n_j}$ and $X _{j i} = X ^T _{i j} $ which are of the following form:
    \begin{enumerate}
        \item   If $n_i = n_j$, then $X_{ij} \in \mathcal{I}( J _{n _i} ( \Lambda ) )$.
        \item   If $n_i < n_j$, then $X_{ij} = \left[ \begin{array}{cc} Y_{ij} & 0 \end{array}\right] $ with $Y_{ij} \in\mathcal{I}(J_{n_i} (\Lambda ) )$.
        \item   If $n_i > n_j$, then $X_{ij} = \left[ \begin{array}{c} Y_{ij} \\ 0 \end{array}\right] $ with $Y_{ij} \in\mathcal{I}(J_{n_j} (\Lambda ) )$.
    \end{enumerate}
\end{theorem}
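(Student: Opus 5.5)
The plan mirrors the proof of Theorem \ref{theorem: ia jordan block second kind real eigenvalue}, with the complex-eigenvalue lemmas in place of the real ones. First I partition $X \in \mathcal{I}( J _{ n _1, \ldots, n _m } ( \Lambda ) )$ into blocks $X_{ij} \in \mathbf{M}_{2n_i \times 2n_j}$, following the block-diagonal structure
\begin{equation}
J _{ n _1, \ldots, n _m } ( \Lambda ) = \diagonal{J_{n_1}(\Lambda), \ldots, J_{n_m}(\Lambda)} .
\end{equation}
Since $X$ is symmetric, $X^T = X$ immediately gives $X_{ji} = X_{ij}^T$ for all $i,j$. Since the Jordan matrix is block diagonal, the intertwining relation $J X = X J^T$ decouples block by block. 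Comparing the $(i,j)$ blocks on both sides yields
\begin{equation}
J_{n_i}(\Lambda) X_{ij} = X_{ij} J^T_{n_j}(\Lambda) \quad \hbox{for each } i,j \in \{1,\ldots,m\} .
\end{equation}
There are no cross terms, because the off-diagonal blocks of $J$ vanish.

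I then split into three cases according to the sizes $n_i$ and $n_j$.

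\begin{enumerate}
\item If $n_i = n_j$, the relation above says that $X_{ij}$ intertwines $J_{n_i}(\Lambda)$ with itself. To conclude $X_{ij} \in \mathcal{I}(J_{n_i}(\Lambda))$ I also need $X_{ij}$ to be symmetric. For diagonal blocks ($i=j$) this follows from $X_{ii} = X_{ii}^T$. For off-diagonal blocks of equal size I would instead rerun the computation in the proof of Theorem \ref{theorem: ia jordan block first kind complex conjugate eigenvalues}. That argument uses only the intertwining equations (\ref{eq:ia_Lamba_eq_1_4}) and (\ref{eq:ia_Lambda_eq_all_n+1}) together with Lemma \ref{lemma: linear eq ia}, and it produces the block-Hankel form with entries in $\mathcal{I}(\Lambda)$. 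That form is automatically symmetric, so symmetry comes out of the intertwining relation rather than being assumed. The paper's real-eigenvalue proof glosses over this point; here I would make it explicit.
\item If $n_i < n_j$, Lemma \ref{lemma: J m X = X J T n Lambda m < n} applies directly and gives $X_{ij} = [\,Y_{ij}\;\; 0\,]$ with $Y_{ij} \in \mathcal{I}(J_{n_i}(\Lambda))$.
\item If $n_i > n_j$, Lemma \ref{lemma: J m X = X J T n Lambda m > n} gives $X_{ij} = \left[ \begin{array}{c} Y_{ij} \\ 0 \end{array}\right]$ with $Y_{ij} \in \mathcal{I}(J_{n_j}(\Lambda))$. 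Alternatively, this case follows from case 2 by transposing and using $X_{ji} = X_{ij}^T$, exactly as in the proof of Lemma \ref{lemma: J m X = X J n lambda m > n}.
\end{enumerate}

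The main obstacle I expect is the equal-size off-diagonal case, where symmetry is not given for free. My first approach is to observe that the inductive argument of Theorem \ref{theorem: ia jordan block first kind complex conjugate eigenvalues} never invokes $\mathfrak{X}^T = \mathfrak{X}$ until its final line. The solution space of $J_n(\Lambda) X = X J_n^T(\Lambda)$ is therefore already the symmetric block-Hankel family, and this settles case 1 for every pair $(i,j)$.

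A lesser concern is that Lemma \ref{lemma: J m X = X J T n Lambda m > n} was only sketched in the paper. If needed, I would derive it from Lemma \ref{lemma: J m X = X J T n Lambda m < n} by transposition. Everything else is direct bookkeeping of the block equations.
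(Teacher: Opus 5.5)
Your proposal is correct and follows the paper's proof. You partition into blocks, read off $J_{n_i}(\Lambda)X_{ij}=X_{ij}J^T_{n_j}(\Lambda)$ and $X_{ji}=X_{ij}^T$, and settle the cases $n_i<n_j$ and $n_i>n_j$ with Lemmas~\ref{lemma: J m X = X J T n Lambda m < n} and~\ref{lemma: J m X = X J T n Lambda m > n}.

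What you add is a genuine justification for equal-size off-diagonal blocks, a step the paper just calls ``obvious''. It works as long as you run the induction of Theorem~\ref{theorem: ia jordan block first kind complex conjugate eigenvalues} with the hypothesis strengthened to all (not necessarily symmetric) solutions of the intertwining relation. That is legitimate because Lemmas~\ref{lemma: ia Lambda} and~\ref{lemma: linear eq ia} never use symmetry.
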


\noindent \textit{Proof:}
    Let $ i, j \in\{ 1, \ldots, m \}$.
    If $X \in \mathcal{I}( J _{ n _1, \ldots, n _m } ( \Lambda ) )$ then $ J _{ n _1, \ldots, n _m } ( \Lambda ) X = X J ^T _{ n _1, \ldots, n _m } ( \Lambda ) $ and $ X ^T = X $, hence $ J _{n _i} ( \Lambda ) X _{ij} = X _{ij} J ^T _{n _j} ( \Lambda ) $ and $ X _{ji} = X ^T _{ij} $.
    It is obvious that $ X _{ij} \in \mathcal{I}( J _{n _i} ( \Lambda ) )$ for $n _i = n _j$.
    If $n _i < n _j$, by Lemma \ref{lemma: J m X = X J T n Lambda m < n} we have $X_{ij} = \left[ \begin{array}{cc} Y _{ij} & 0 \end{array}\right] $ with $Y_{ij} \in \mathcal{I}( J_{n_i} ( \Lambda ) )$.
    Using Lemma \ref{lemma: J m X = X J T n Lambda m > n} we find that $ X _{ij} = \left[ \begin{array}{c} Y _{ij} \\ 0 \end{array}\right] $ with $Y_{ij} \in \mathcal{I}( J_{n_j} ( \Lambda ) )$ for $n_i > n_j$.

\rightline{$\square$}


\begin{theorem}
\label{theorem: ia jordan matrix}
    Let $ J $ be a generalized Jordan matrix due to Definition \ref{def:Jordan-generalized-matrix}. Then $\mathcal{I}( J )$ is the set of all matrices $ \diagonal{ X_1, \ldots, X_p, Y _1, \ldots, Y_q } $
    such that $X_i \in\mathcal{I}(J_{m ^i _1, \ldots, m ^i _{r _i}}(\lambda_i )) $ for all  $i \in \{ 1, \ldots, p \}$ , and $Y_j \in\mathcal{I}(J_{n ^j _1, \ldots, n ^j _{s _j}}(\Lambda_j )) $ for all $j \in \{ 1, \ldots, q \}$.
\end{theorem}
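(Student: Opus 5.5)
We partition an arbitrary $X\in\mathcal{I}(J)$ into blocks $X_{ab}$ that follow the diagonal blocks of $J$. Call these diagonal blocks $B_1,\ldots,B_{p+q}$, where $B_a=J_{m^a_1,\ldots,m^a_{r_a}}(\lambda_a)$ for $a\le p$ and $B_{p+k}=J_{n^k_1,\ldots,n^k_{s_k}}(\Lambda_k)$. Since $J$ is block diagonal, the relation $JX=XJ^T$ splits blockwise into the Sylvester equations
\begin{equation}
B_a X_{ab} = X_{ab} B_b^T , \qquad a,b\in\{1,\ldots,p+q\},
\end{equation}
and symmetry of $X$ gives $X_{ba}=X_{ab}^T$. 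The diagonal blocks are therefore exactly the symmetric solutions of $B_aX_{aa}=X_{aa}B_a^T$, which means $X_{aa}\in\mathcal{I}(B_a)$; these are the $X_i$ and $Y_j$ of the statement. Conversely, any block diagonal matrix whose diagonal blocks lie in these sets is symmetric and satisfies $JX=XJ^T$ by block multiplication. So the theorem comes down to showing that every off-diagonal block vanishes.

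The key step is the standard fact that $PX=XQ$ has only the solution $X=0$ when $P$ and $Q$ have no common (complex) eigenvalue. We would prove it directly. From $PX=XQ$ we get $\phi(P)X=X\phi(Q)$ for every polynomial $\phi$. Take $\phi$ to be the characteristic polynomial of $P$. By Cayley--Hamilton, $\phi(P)=0$, so $X\phi(Q)=0$. The eigenvalues of $\phi(Q)$ are $\phi(\mu)$ for the eigenvalues $\mu$ of $Q$, and none of these is zero because $\mu$ is not an eigenvalue of $P$. Hence $\phi(Q)$ is invertible and $X=0$. We apply this with $P=B_a$ and $Q=B_b^T$ for $a\neq b$; here $B_b^T$ has the same spectrum as $B_b$.

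What remains is to check that distinct blocks have disjoint spectra. The spectrum of $B_a$ is $\{\lambda_a\}$ for $a\le p$ and $\{\alpha_k\pm i\beta_k\}$ for $a=p+k$. Two real blocks have distinct $\lambda$'s, and a real eigenvalue never equals $\alpha_k\pm i\beta_k$ because $\beta_k>0$. For two complex blocks, $\Lambda_k\neq\Lambda_l$ together with $\beta_k,\beta_l>0$ forces $\alpha_k+i\beta_k\neq\alpha_l+i\beta_l$. The conjugates also differ, since a coincidence $\alpha_k+i\beta_k=\alpha_l-i\beta_l$ would require $\beta_k=-\beta_l<0$. These spectrum checks are the only place where the hypotheses of Definition~\ref{def:Jordan-generalized-matrix} (distinctness and $\beta>0$) enter. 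We expect them, and not the linear algebra, to be the main point needing care.
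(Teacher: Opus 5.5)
Your proposal is correct and follows essentially the same route as the paper: partition along the diagonal blocks of $J$, use Sylvester's theorem (disjoint spectra) to show the off-diagonal blocks vanish, and use symmetry to place the diagonal blocks in the corresponding $\mathcal{I}$-sets. The paper simply cites Sylvester's theorem and leaves the rest implicit, whereas you prove the needed half of it via Cayley--Hamilton and explicitly check both the spectral disjointness (including the conjugate pairs, where $\beta_k>0$ is used) and the reverse inclusion.
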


\noindent \textit{Proof:}

    Let $i,j \in \{ 1, \ldots, p \} $ and $k,l \in \{ 1, \ldots, q \}$.
    Let
    \begin{equation}
        \mathfrak{X} = \left[ \begin{array}{cccccc}
            X _{1 1}    &   \cdots  &   X _{1 p}    &   Z _{1 1}    &   \cdots  &   Z _{1 q}    \\
            \vdots      &   \ddots  &   \vdots      &   \vdots      &   \ddots  &   \vdots  \\
            X _{p 1}    &   \cdots  &   X _{p p}    &   Z _{p 1}    &   \cdots  &   Z _{p q}    \\
            T _{1 1}    &   \cdots  &   T _{1 p}    &   Y _{1 1}    &   \cdots  &   Y _{1 q}    \\
            \vdots      &   \ddots  &   \vdots      &   \vdots      &   \ddots  &   \vdots  \\
            T _{q 1}    &   \cdots  &   T _{q p}    &   Y _{q 1}    &   \cdots  &   Y _{q q}
        \end{array}\right] \in \mathbf{M}_{m+2n}
    \end{equation}
    where $X _{i j} \in \mathbf{M}_{m ^i \times m ^j}$, $Y _{k l} \in \mathbf{M}_{2n^k \times 2n^l}$,
    $Z _{i l} \in \mathbf{M}_{m^i \times 2n^l}$ and $ T _{k j} \in \mathbf{M}_{2n^k \times m^j}$.
    If $J \mathfrak{X} =  \mathfrak{X} J ^T$, then
    \begin{eqnarray}
    \label{eq:properties_lambda}
    J _{m ^i _1, \ldots, m ^i _{r _i}} ( \lambda _i ) X _{i j}
    &=& X _{i j} J ^T _{m ^j _1, \ldots, m ^j _{r _j}} ( \lambda _j ) \\
    J _{n ^k _1, \ldots, n ^k _{s _k}} ( \Lambda _k ) Y _{k l}
    &=& Y _{k l} J ^T _{n ^l _1, \ldots, n ^l _{s _l}} ( \Lambda _l ) \\
    J _{m ^i _1, \ldots, m ^i _{r _i}} ( \lambda _i ) Z _{i k}
    &=& Z _{i k} J ^T _{n ^k _1, \ldots, n ^k _{s _k}} ( \Lambda _k )
    \end{eqnarray}
    Since the Jordan matrices do not have common eigenvalues, by the Sylvester's theorem on linear matrix equations \cite{Gantmacher59,horn_johnson_1985} we have $ X_{i j} = 0 $ for $i \neq j$, $Y _{k l} = 0$ for $k \neq l$, $Z _{i l} = 0$ and $ T _{k j} = 0 $.
    Furthermore, if $ \mathfrak{X} $ is symmetric, $ X _{ii} $ and $ Y _{k k} $ are also symmetric, so that $X _{i i} \in \mathcal{I}( J _{m ^i _1, \ldots, m ^i _{r _i}} ( \lambda _i ) )$ and $Y _{k k} \in \mathcal{I}( J _{n ^k _1, \ldots, n ^k _{s _k}} ( \Lambda _k ) )$.

 \rightline{$\square$}


\begin{theorem}
\label{theorem: det ia real eigenvalue}
    For any $x _i \in \mathbb{R}$, $i \in \{ 1, \ldots, n \}$, the following determinant formula holds:
    \begin{equation}
        \left| \begin{array}{cccc}
            x _1    &   x _2    &   \cdots  &   x _n    \\
            x _2    &   x _3    &   \cdots  &   0       \\
            \vdots  &   \vdots  &   \ddots  &   \vdots  \\
            x _n    &   0       &   \cdots  &   0       \end{array}\right|
        = (-) ^{n(n-1)/2} \cdot x^n_n
    \end{equation}
\end{theorem}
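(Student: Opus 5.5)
The matrix in question is a Hankel matrix whose entries vanish strictly below the main antidiagonal: the $(i,j)$ entry is $x_{i+j-1}$ if $i+j\le n+1$ and $0$ otherwise. My plan is to show that only one permutation contributes to the Leibniz expansion, namely the antidiagonal one, and then compute its sign.

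First, in $\det X=\sum_{\sigma\in S_n}\mathrm{sgn}(\sigma)\prod_i x_{i,\sigma(i)}$, a product is nonzero only if $i+\sigma(i)\le n+1$ for every $i$. Summing over $i$ gives $\sum_i (i+\sigma(i)) = n(n+1)$, which is exactly the upper bound $n(n+1)$. Hence every inequality must be an equality, so $\sigma(i)=n+1-i$ for all $i$. The surviving product is $\prod_i x_{i,n+1-i}=x_n^n$, since every antidiagonal entry equals $x_n$.

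Second, I compute the sign of the reversal permutation $\sigma_0(i)=n+1-i$. Every pair $i<j$ is an inversion, so there are $\binom{n}{2}=n(n-1)/2$ inversions and $\mathrm{sgn}(\sigma_0)=(-1)^{n(n-1)/2}$. This gives the stated formula $\det X=(-1)^{n(n-1)/2}x_n^n$, which is what the notation $(-)^{n(n-1)/2}$ in the statement means.

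As an alternative, one could argue by induction, expanding along the last row. Its only nonzero entry is $x_n$, in column $1$, and the complementary minor has the same shape of size $n-1$ but with antidiagonal entries $x_n$. Tracking the cofactor sign $(-1)^{n+1}$ through the recursion, $(-1)^{n+1}(-1)^{(n-1)(n-2)/2}=(-1)^{n(n-1)/2}$, would require a parity check. The permutation argument avoids this, so I would use it. No step is a real obstacle; the only point needing care is correctly identifying the sign of the reversal permutation.
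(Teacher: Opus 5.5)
Your argument is correct, but it takes a different route from the paper. The paper observes that reversing the columns of the matrix $H$ gives the upper triangular Toeplitz matrix $T$ with first row $(x_n, x_{n-1}, \ldots, x_1)$, i.e.\ $H = T K_n$ with $K_n$ the exchange matrix. It then concludes from multiplicativity of the determinant, $\det T = x_n^n$, and the quoted value $\det K_n = (-1)^{n(n-1)/2}$.

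You instead stay inside the Leibniz expansion. Summing the bound $i+\sigma(i)\le n+1$ against $\sum_i (i+\sigma(i)) = n(n+1)$ isolates the reversal permutation, and the inversion count gives its sign. In effect you prove both of the paper's ingredients at once. The sign you compute is precisely $\det K_n$, which the paper takes as known, and your summation trick plays the role of the triangular-determinant fact. So your proof is fully self-contained and shows directly why only $x_n$ survives.

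What the factorization buys is brevity and reuse. The next theorem (the block analogue with $2\times 2$ blocks $X_i$) is proved by the same computation, with $K_n$ replaced by the block exchange matrix and $\det T$ by the determinant of a block triangular matrix. Your counting argument still works at the level of block indices, but it would need extra bookkeeping to extract $(\det X_n)^n$ and the global sign.

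Your side remark on the cofactor induction is also sound: $n(n-1)/2$ and $(n-1)(n-2)/2$ differ by $n-1$, which has the same parity as $n+1$.
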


\noindent \textit{Proof:}

    Let
    \begin{equation}
         K _n = \left[ \begin{array}{ccc} & & 1 \\ & \cdots & \\ 1 & &  \end{array}\right] \in \mathbf{M}_{n}
    \end{equation}
    be the exchange matrix.
    Using that $ \det K _n = (-) ^{n (n-1)/2} $ we find
    \begin{equation}
    \fl \left| \begin{array}{cccc}
            x _1    &   x _2    &   \cdots  &   x _n    \\
            x _2    &   x _3    &   \cdots  &   0       \\
            \vdots  &   \vdots  &   \ddots  &   \vdots  \\
            x _n    &   0       &   \cdots  &   0       \end{array}\right| 
        =  \left| \begin{array}{cccc}
            x _n    &   x _{n-1}    &   \cdots  &   x _1    \\
            0       &   x _n        &   \cdots  &   x _2    \\
            \vdots  &   \vdots      &   \ddots  &   \vdots  \\
            0       &   0           &   \cdots  &   x _n    \end{array}\right| \cdot 
         \left| \begin{array}{cccc}
              &        &   & 1  \\
              &        & 1 &    \\
              & \cdots &   &    \\
             1 &       &   &    \end{array}\right| 
         = (-) ^{n(n-1)/2} \cdot x^n_n
    \end{equation}

\rightline{$\square$}

\begin{theorem}
\label{theorem: det ia complex conjugate eigenvalues}
    Let $ X _i \in \mathcal{I}(Z)$ for $i \in \{ 1, \ldots, n \}$. The following determinant formula holds:
    \begin{equation}
        \left| \begin{array}{cccc}
            X _1    &   X _2    &   \cdots  &   X _n    \\
            X _2    &   X _3    &   \cdots  &   0       \\
            \vdots  &   \vdots  &   \ddots  &   \vdots  \\
            X _n    &   0       &   \cdots  &   0       \end{array}\right|
        = |X _n| ^n
    \end{equation}
\end{theorem}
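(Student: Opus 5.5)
The plan is to copy the proof of Theorem \ref{theorem: det ia real eigenvalue}, replacing scalars by $2\times 2$ blocks. Here I read $\mathcal{I}(Z)$ as $\mathcal{I}(\Lambda)$, the set of $2\times2$ matrices described in Lemma \ref{lemma: ia Lambda}. The only property of the $X_i$ the argument needs is that they are square blocks of the same size $2\times 2$; their special form $\left[\begin{array}{cc} a & b\\ b & -a\end{array}\right]$ is not used. Denote the block Hankel matrix in the statement by $H\in\mathbf{M}_{2n}$.

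First, introduce the block exchange matrix $P = K_n\otimes I_2\in\mathbf{M}_{2n}$. This is the $n\times n$ block matrix with $I_2$ on the block antidiagonal and $0_2$ elsewhere, where $K_n$ is the exchange matrix from the proof of Theorem \ref{theorem: det ia real eigenvalue}. Right multiplication by $P$ reverses the order of the block columns of $H$. The $i$-th block row of $H$ is $(X_i, X_{i+1},\ldots,X_n,0_2,\ldots,0_2)$. After reversal it becomes $(0_2,\ldots,0_2,X_n,X_{n-1},\ldots,X_i)$, so the block $X_n$ now sits in block position $(i,i)$. Hence $HP$ is block upper triangular, with every diagonal block equal to $X_n$ and the blocks $X_{n-1},\ldots,X_1$ above the diagonal, exactly as in the scalar case.

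Second, compute the two determinants. A block upper triangular matrix with square diagonal blocks has determinant equal to the product of the determinants of its diagonal blocks; this follows, for instance, from repeated Laplace expansion or the Schur complement formula. Therefore $\det(HP)=|X_n|^n$. For the permutation matrix we have $\det P=\det(K_n\otimes I_2)=(\det K_n)^2(\det I_2)^n=\big((-1)^{n(n-1)/2}\big)^2=1$. Equivalently, $P$ permutes row pairs, so every block transposition is a product of two ordinary transpositions. Thus $|H|=\det(HP)/\det P=|X_n|^n$.

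The only point needing care is the sign bookkeeping. In the scalar Theorem \ref{theorem: det ia real eigenvalue} a factor $(-1)^{n(n-1)/2}$ appears, whereas here it must disappear. This happens precisely because each block swap moves two rows at once, and this is the step I would double-check, for example against the case $n=2$ computed directly.
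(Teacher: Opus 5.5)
Your proposal is correct and follows the paper's proof. Like the paper, you factor the block Hankel matrix as a block upper triangular matrix with diagonal blocks $X_n$ times the block exchange matrix, and you show that the exchange matrix has determinant $1$; the only difference is that you get this from $\det(K_n\otimes I_2)=(\det K_n)^2$, whereas the paper replaces each $I_2$ by $K_2$ (a factor $(-1)^n$) and uses $\det K_{2n}=(-1)^n$.
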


\noindent \textit{Proof:}  

    By means of the properties of the determinants we have
	\begin{equation}
         \left| \begin{array}{ccc}
                &        & I_2 \\
                & \cdots &     \\
            I_2 &        &     \end{array}\right| 
            = ( -1 )^n  \left| \begin{array}{ccc}
                &        & K_2 \\
                & \cdots &     \\
            K_2 &        &     \end{array}\right| 
            = ( -1 )^n \det K_{2 n} = 1
        \end{equation}
    Then,
    \begin{equation}
    \fl \left| \begin{array}{cccc}
            X _1    &   X _2    &   \cdots  &   X _n    \\
            X _2    &   X _3    &   \cdots  &   0       \\
            \vdots  &   \vdots  &   \ddots  &   \vdots  \\
            X _n    &   0       &   \cdots  &   0       \end{array}\right|
         = \left| \begin{array}{cccc}
            X _n    &   X _{n-1}    &   \cdots  &   X _1    \\
            0       &   X _n        &   \cdots  &   X _2    \\
            \vdots  &   \vdots      &   \ddots  &   \vdots  \\
            0       &   0           &   \cdots  &   X _n    \end{array}\right|
        \cdot \left| \begin{array}{cccc}
                &        &     & I_2  \\
                &        & I_2 &      \\
                & \cdots &     &      \\
            I_2 &        &     &      \end{array}\right|
         = \  | X _n | ^n
    \end{equation}

\rightline{$\square$}

\section{Computing one-dimensional subspaces}   %
\label{sec:computing_1D-subspaces}

Now we will apply the properties of Jordan matrices deduced in the last section, to obtain knowledge about the matrix $g$.

\begin{theorem}
\label{theorem: e J jordan block first kind real eigenvalue}
    Let $ \lambda \in \mathbb{R}  $, and let $ J _n ( \lambda ) $ be a Jordan cell.
    Suppose $g \in \mathbf{Sym}_n$ as a matrix function such that $g_{, \xi} = J _n ( \lambda ) g$.
    Then
    \begin{equation}
    \label{eq:metric form lambda block 1}
        g _n ( \lambda ) = \left[ \begin{array}{cccc}
            X _1    &   X _2    &   \cdots  &   X _n    \\
            X _2    &   X _3    &   \cdots  &   0    \\
            \vdots  &   \vdots  &   \ddots  &   \vdots    \\
            X _n    &   0       &   \cdots  &   0    \\
        \end{array}\right]
    \end{equation}
    where
    \begin{equation}
    \label{eq:sol diff eq lambda}
        X _i ( \xi ) = e ^{\lambda \xi} \sum _{j = 0} ^{n -i} \frac{\xi ^j}{j !} C _{i + j}
    \end{equation}
    and $ C _i $ is constant for each $ i = 1, \ldots, n $.
\end{theorem}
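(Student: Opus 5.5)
The ODE $g_{,\xi}=J_n(\lambda)g$ has general solution $g(\xi)=e^{\xi J_n(\lambda)}g(0)$, with $g(0)$ a constant matrix. The plan is to identify $g(0)$ through the intertwining relation and then compute the exponential explicitly.

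First, since $g$ is symmetric, $g_{,\xi}=J_n(\lambda)g$ implies $g_{,\xi}=g J_n^T(\lambda)$ by transposition. Hence $J_n(\lambda)g=gJ_n^T(\lambda)$, so $g(\xi)\in\mathcal{I}(J_n(\lambda))$ for every $\xi$, and in particular $g(0)\in\mathcal{I}(J_n(\lambda))$. By Theorem \ref{theorem: ia jordan block first kind real eigenvalue}, $g(0)$ is a Hankel-type matrix with first row $(C_1,\ldots,C_n)$ and zeros below the main antidiagonal. The same theorem already tells us that $g(\xi)$ has the shape (\ref{eq:metric form lambda block 1}) for some functions $X_i(\xi)$. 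Lemma \ref{lemma: eA X = X eA ^T} is consistent with this: it guarantees that $e^{\xi J}g(0)$ remains symmetric, so no further constraint arises.

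Second, we determine the $X_i$. Write $J_n(\lambda)=\lambda I_n+N$ with $N=J_n(0)$ nilpotent, so that $e^{\xi J_n(\lambda)}=e^{\lambda\xi}\sum_{j=0}^{n-1}\frac{\xi^j}{j!}N^j$. The matrix $N^j$ shifts rows upward by $j$: the $k$-th row of $N^jg(0)$ is row $k+j$ of $g(0)$, or zero if $k+j>n$. Row $k+j$ of $g(0)$ has entries $C_{k+j},C_{k+j+1},\ldots$ followed by zeros. Therefore the first-row entry in column $i$ of $e^{\xi J}g(0)$ is $e^{\lambda\xi}\sum_j \frac{\xi^j}{j!}C_{i+j}$, where the sum stops once $i+j>n$, that is, at $j=n-i$. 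This gives (\ref{eq:sol diff eq lambda}).

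Alternatively, one can substitute the form (\ref{eq:metric form lambda block 1}) directly into $g_{,\xi}=J_n(\lambda)g$ and read off the first row, which yields the triangular system $X_i'=\lambda X_i+X_{i+1}$ for $i<n$ and $X_n'=\lambda X_n$. Solving it backwards from $X_n=e^{\lambda\xi}C_n$ by induction on $n-i$ gives the same formula. I would present this backward induction as the main computation, since it avoids the index bookkeeping of $N^j g(0)$.

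The only point needing care is checking that the remaining rows impose no extra conditions. Row $k$ of the ODE reads $X_{k+i-1}'=\lambda X_{k+i-1}+X_{k+i}$, with the convention $X_m=0$ for $m>n$, which is the same system. So consistency is automatic, and all $C_1,\ldots,C_n$ remain free.
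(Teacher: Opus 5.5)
Your proposal is correct and matches the paper's proof: symmetry of $g$ turns the equation into $J_n(\lambda)g=gJ_n^T(\lambda)$, the characterization of $\mathcal{I}(J_n(\lambda))$ gives the Hankel shape, and the first-row system $X_n'=\lambda X_n$, $X_i'=\lambda X_i+X_{i+1}$ is integrated backwards. Your check that the other rows give the same system, and your $e^{\xi J_n(\lambda)}g(0)$ computation, are correct additions that the paper does not include.
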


\noindent \textit{Proof:}

    Applying $ g = g^T $ to $g_{, \xi} = J _n ( \lambda ) g$ we get $ J _n ( \lambda ) g= g J ^T _n ( \lambda ) $, then $g\in \mathcal{I} ( J _n ( \lambda ) ) $.
    By Theorem \ref{theorem: ia jordan block first kind real eigenvalue}, $g$ has the form given in Equation (\ref{eq:metric form lambda block 1}).
    From $g_{, \xi } = J _n ( \lambda ) g$ we obtain
    \begin{equation}
    \eqalign{
    X_{n, \xi} &= \lambda X_n  \\  
    X_{n-1, \xi} &= \lambda X_{n-1} + X_n \\
    & \vdots \\
    X_{1, \xi} &= \lambda X_1 + X_2
    } \end{equation}
    Integrating successively we get Equation (\ref{eq:sol diff eq lambda}).

\rightline{$\square$}

\begin{theorem}
\label{theorem: e J jordan block second kind real eigenvalue}
    Let $n_1, \ldots, n_m $ be positive integers such that $ n = n_1 + \ldots + n_m $.
    Let $ \lambda \in \mathbb{R}  $, and let $ J _{ n _1, \ldots, n _m } ( \lambda ) \in \mathbf{M}_{n}$ be a Jordan matrix.
    If $g\in \mathbf{Sym}_n$ is a matrix function such that $g_{, \xi } = J _{ n _1, \ldots, n _m } ( \lambda ) g$, then
    \begin{equation}
        g_{n_1 , \ldots , n_m} ( \lambda ) = \left[ \begin{array}{ccc}
            X_{1 1}    &   \cdots  &   X_{1 m}    \\
            \vdots      &   \ddots  &   \ddots  \\
            X_{m 1} &   \cdots  &   X_{m m}  \end{array}\right]
    \end{equation}
    where for each $i, j \in\{ 1, \ldots, m\} $, the matrix $X_{i j}$ satisfies $X_{i j}^T = X_{j i}$ and is defined as follows:
    \begin{enumerate}
        \item if $n_i = n_j$ then $X_{i j} = g_{n_i} ( \lambda ) $,
        \item if $n_i \leq n_j$ then $X_{i j} = \left[ \begin{array}{cc} g_{n_i} ( \lambda) &  0 \end{array}\right] $,
        \item if $n_i \geq n_j$ then $X_{i j} = \left[ \begin{array}{c} g_{n_j} ( \lambda) \\  0 \end{array}\right] $,
    \end{enumerate}
   where $g_{n_i} ( \lambda ) $ is defined as in Theorem \ref{theorem: e J jordan block first kind real eigenvalue}.
\end{theorem}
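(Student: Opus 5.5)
The proof follows the same pattern as Theorem \ref{theorem: e J jordan block first kind real eigenvalue}. First, combine symmetry with the differential equation. Since $g = g^T$, transposing $g_{,\xi} = J_{n_1,\ldots,n_m}(\lambda) g$ gives $g_{,\xi} = g J^T_{n_1,\ldots,n_m}(\lambda)$. Hence $J_{n_1,\ldots,n_m}(\lambda) g = g J^T_{n_1,\ldots,n_m}(\lambda)$, that is, $g \in \mathcal{I}(J_{n_1,\ldots,n_m}(\lambda))$ for every $\xi$. Theorem \ref{theorem: ia jordan block second kind real eigenvalue} then gives the block structure pointwise in $\xi$. Write $g = [X_{ij}]$ with $X_{ji} = X_{ij}^T$. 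Each $X_{ij}$ lies in $\mathcal{I}(J_{n_i}(\lambda))$ when $n_i = n_j$, equals $[\,Y_{ij}\ 0\,]$ with $Y_{ij} \in \mathcal{I}(J_{n_i}(\lambda))$ when $n_i < n_j$, and equals the transposed form when $n_i > n_j$.

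Second, because the Jordan matrix is block diagonal, the equation $g_{,\xi} = J_{n_1,\ldots,n_m}(\lambda) g$ splits blockwise into
\begin{equation}
X_{ij,\xi} = J_{n_i}(\lambda) X_{ij}, \qquad i,j \in \{1,\ldots,m\}.
\end{equation}
If $n_i = n_j$, then $X_{ij}$ has the Hankel form of Theorem \ref{theorem: ia jordan block first kind real eigenvalue} and satisfies $X_{ij,\xi} = J_{n_i}(\lambda) X_{ij}$. The computation in the proof of Theorem \ref{theorem: e J jordan block first kind real eigenvalue} applies verbatim and yields $X_{ij} = g_{n_i}(\lambda)$, with its own independent constants $C_1,\ldots,C_{n_i}$. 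That computation integrates the triangular system $X_{k,\xi} = \lambda X_k + X_{k+1}$ backwards from $k = n_i$. The symmetry of $X_{ij}$ is only needed there to place it in $\mathcal{I}(J_{n_i}(\lambda))$, which we already know.

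If $n_i < n_j$, write $X_{ij} = [\,Y_{ij}\ 0\,]$. The first $n_i$ columns of the block equation give $Y_{ij,\xi} = J_{n_i}(\lambda) Y_{ij}$, and the remaining columns give the trivial equation $0 = 0$. Since $Y_{ij} \in \mathcal{I}(J_{n_i}(\lambda))$, the same argument gives $Y_{ij} = g_{n_i}(\lambda)$. The case $n_i > n_j$ should not be solved with the row equation directly. Instead use $X_{ij} = X_{ji}^T$ and the previous case applied to $X_{ji}$, which gives $X_{ij} = [\,g_{n_j}(\lambda)\ ;\ 0\,]$.

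The main point needing care is consistency in the case $n_i > n_j$. There the zero rows of $X_{ij}$ must be compatible with $X_{ij,\xi} = J_{n_i}(\lambda) X_{ij}$. The lower-left rows of $J_{n_i}(\lambda) X_{ij}$ involve the superdiagonal $1$ in position $(n_j, n_j+1)$, acting on row $n_j+1$, which is zero. So the equation restricted to the top $n_j$ rows is exactly $Y_{ij,\xi} = J_{n_j}(\lambda) Y_{ij}$. I will verify this directly, which also shows the transposition route agrees with solving the row equation. Finally, the constants in the blocks $X_{ij}$ and $X_{ji}$ are tied by transposition, while distinct unordered pairs $\{i,j\}$ carry independent constants. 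This completes the description of $g_{n_1,\ldots,n_m}(\lambda)$.
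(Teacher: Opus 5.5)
Your proposal is correct and follows essentially the same route as the paper: symmetry places $g$ in $\mathcal{I}(J_{n_1,\ldots,n_m}(\lambda))$, Theorem \ref{theorem: ia jordan block second kind real eigenvalue} gives the block structure, and each block equation $X_{ij,\xi}=J_{n_i}(\lambda)X_{ij}$ reduces to the single-cell computation of Theorem \ref{theorem: e J jordan block first kind real eigenvalue}. The only difference is cosmetic and concerns the case $n_i>n_j$: the paper handles the row equation directly, rewriting it through the intertwining relation as $X_{ij,\xi}=X_{ij}J^T_{n_j}(\lambda)$, so your caution against the direct route is unnecessary, but your transposition argument (and your own direct check) works equally well.
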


\noindent \textit{Proof:}

    Let $ i,j \in \{ 1, \ldots, m \} $.
    Applying $g = g^T $ to $g_{, \xi } = J_{n_1, \ldots, n_m} ( \lambda ) g$ we get $J_{n_1, \ldots, n_m} (\lambda ) g = g J^T _{n_1, \ldots, n_m } (\lambda )) $, then $g\in \mathcal{I} (J_{ n_1, \ldots, n_m } (\lambda ) ) $.
    By Theorem \ref{theorem: ia jordan block second kind real eigenvalue},
    \begin{equation}
        g ( \xi ) = \left[ \begin{array}{ccc}
            X _{1 1} ( \xi )    &   \cdots  &   X _{1 m} ( \xi )    \\
            \vdots              &   \ddots  &   \ddots  \\
            X _{m 1} ( \xi )    &   \cdots  &   X _{m m} ( \xi )
        \end{array}\right]
    \end{equation}
    where $X_{i j} \in \mathbf{M}_{n_i \times n_j}$ satisfies $ X_{j i} = X_{i j}^T $ and is of the following form:
    \begin{enumerate}
        \item   If $n _i = n _j$ then $X_{ij} \in \mathcal{I}( J_{n_i} (\lambda) )$.
        \item   If $n _i < n _j$ then $X_{ij} = \left[ \begin{array}{cc} Y_{n_i} & 0 \end{array}\right]$ with $Y_{n_i} \in \mathcal{I}( J_{n_i} (\lambda) )$.
        \item   If $n _i > n _j$ then $X_{ij} = \left[ \begin{array}{c} Y_{n_j} \\ 0 \end{array}\right]$ with $Y_{n_j} \in \mathcal{I}( J_{n_j} (\lambda) )$.
    \end{enumerate}
    From $g_{, \xi } = J_{ n_1, \ldots, n_m } (\lambda ) g$ we have $X_{i j , \xi } = J_{n_i} (\lambda) X_{i j}$.
    Observe that $ X_{j i, \xi } = \  (  J_{n_i} X_{i j} )^T = X_{j i} J^T_{n_i} (\lambda ) = J_{n_j} (\lambda ) X_{j i} $.
    By Theorem \ref{theorem: e J jordan block first kind real eigenvalue} we obtain
    \begin{enumerate}
        \item If $n_i = n_j$, then $X_{i j} = g_{n_i} ( \lambda )$.
        \item If $n_i \leq n_j$, then $X_{i j , \xi } = J_{n_i} (\lambda) X_{i j} $ implies $Y_{n_i , \xi } = J_{n_i} (\lambda) Y_{n_i }$, so that $Y_{n_i } = g_{n_i} ( \lambda )$.
        Therefore $X_{i j} = \left[ \begin{array}{cc} g_{n_i} (\lambda) & 0 \end{array}\right] $.
        \item If $n_i \geq n_j$, then $X_{i j , \xi } = J_{n_i} (\lambda ) X_{i j} = X_{i j} J_{n_j}^T (\lambda) $ implies $Y_{n_j , \xi } =
        Y_{n_j } J_{n_j}^T (\lambda ) = J_{n_j} (\lambda ) Y_{n_j } $, so that $Y_{n_j } = g_{n_j} ( \lambda )$.
        Hence $X_{i j} = \left[ \begin{array}{c} g_{n_j}( \lambda ) \\ 0 \end{array}\right] $.
    \end{enumerate}

\rightline{$\square$}


\begin{theorem}
\label{theorem: e J jordan block first kind complex conjugate eigenvalues}
    For any Jordan $\Lambda$-block of the first kind $J_n (\Lambda )$, if $g\in \mathbf{Sym}_{2n}$ is a matrix function such that $g_{, \xi } = J_n (\Lambda ) g$, then
       \begin{equation}
        g_n ( \Lambda ) = \left[ \begin{array}{cccc}
            Z _1    &   Z _2    &   \cdots  &   Z _n    \\
            Z _2    &   Z _3    &   \cdots  &   0    \\
            \vdots  &   \vdots  &   \ddots  &   \vdots    \\
            Z _n    &   0       &   \cdots  &   0    \\
        \end{array}\right]
    \end{equation}
    where
    \begin{equation}        
    \label{eq:sol diff eq Lambda}
    \eqalign{
    Z_l &= \left[ \begin{array}{cc}
            X _l   &   Y _l   \\
            Y _l   &   -X _l  \end{array}\right]  \\
    X_l (\xi ) &= e ^{\alpha \xi } \cos \beta \xi \sum _{k = 0} ^{n - l} \frac{\xi ^k}{k !} C _{k + l}
                  - e ^{\alpha \xi } \sin \beta \xi \sum _{k = 0} ^{n - l} \frac{\xi ^k}{k !} D _{k + l}  \\
    Y_l (\xi ) &= e ^{\alpha \xi } \cos \beta \xi \sum _{k = 0} ^{n - l} \frac{\xi ^k}{k !} D _{k + l}
                  + e ^{\alpha \xi } \sin \beta \xi \sum _{k = 0} ^{n - l} \frac{\xi ^k}{k !} C _{k + l}
    } \end{equation}
   and $C_l$, $D_l$ are constant for $l = 1, \ldots, n$.
\end{theorem}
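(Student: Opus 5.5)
The plan follows the proof of Theorem \ref{theorem: e J jordan block first kind real eigenvalue}, with scalars replaced by $2\times 2$ blocks in $\mathcal{I}(\Lambda)$.

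\emph{Step 1: reduce to the structure theorem.} Since $g=g^T$, transposing $g_{,\xi}=J_n(\Lambda)g$ gives $g_{,\xi}=gJ_n^T(\Lambda)$. Hence $J_n(\Lambda)g=gJ_n^T(\Lambda)$ for every $\xi$, so $g\in\mathcal{I}(J_n(\Lambda))$. Theorem \ref{theorem: ia jordan block first kind complex conjugate eigenvalues} then says $g$ is block Hankel with zeros below the main block antidiagonal. Its blocks are $Z_l(\xi)\in\mathcal{I}(\Lambda)$, and by Lemma \ref{lemma: ia Lambda} each has the form
\begin{equation}
Z_l=\left[\begin{array}{cc}X_l & Y_l\\ Y_l & -X_l\end{array}\right].
\end{equation}

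\emph{Step 2: extract the ODE system.} Compare first block rows, i.e.\ block entries $(1,l)$, on both sides of $g_{,\xi}=J_n(\Lambda)g$. This yields the triangular system
\begin{equation}
Z_{l,\xi}=\Lambda Z_l+Z_{l+1}\  (l<n),\qquad Z_{n,\xi}=\Lambda Z_n .
\end{equation}
Since the other block rows only repeat these equations because of the Hankel structure, this system is the whole content of $g_{,\xi}=J_n(\Lambda)g$.

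\emph{Step 3: identify $\mathcal{I}(\Lambda)$ with $\mathbb{C}$.} This is the one genuinely new computation. Write $Z=\left[\begin{array}{cc}x&y\\y&-x\end{array}\right]$. Then
\begin{equation}
\Lambda Z=\left[\begin{array}{cc}\alpha x-\beta y & \alpha y-\beta x\\ \beta x+\alpha y & \beta y-\alpha x\end{array}\right].
\end{equation}
This is not of the form required by Lemma \ref{lemma: ia Lambda}: the $(1,2)$ entry $\alpha y-\beta x$ differs from the $(2,1)$ entry $\alpha y+\beta x$. So I will not multiply on the left. Instead I will use the symmetric product $\frac12(\Lambda Z+Z\Lambda^T)$, which does lie in $\mathcal{I}(\Lambda)$ and corresponds to multiplying $w=x+iy$ by $\alpha+i\beta$.

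Transposing the block equation and averaging it with itself is legitimate because $Z_l$ is symmetric. The $(1,1)$ and $(1,2)$ entries then give scalar equations. With $w_l=X_l+iY_l$ these read
\begin{equation}
w_{l,\xi}=(\alpha+i\beta)w_l+w_{l+1},\qquad w_{n,\xi}=(\alpha+i\beta)w_n .
\end{equation}

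\emph{Step 4: solve.} The complex system is exactly the one in Theorem \ref{theorem: e J jordan block first kind real eigenvalue} with $\lambda=\alpha+i\beta$. Integrating successively from $l=n$ down to $l=1$ gives
\begin{equation}
w_l=e^{(\alpha+i\beta)\xi}\sum_{k=0}^{n-l}\frac{\xi^k}{k!}(C_{k+l}+iD_{k+l}).
\end{equation}
Taking real and imaginary parts, with $e^{i\beta\xi}=\cos\beta\xi+i\sin\beta\xi$, yields the stated formulas for $X_l$ and $Y_l$.

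\emph{Main obstacle.} The delicate point is Step 3. The literal componentwise equations of $g_{,\xi}=J_n(\Lambda)g$ are overdetermined when combined with the constraint $Z_l\in\mathcal{I}(\Lambda)$. The symmetrization is what makes the ansatz consistent, and it is the reading under which the stated formula holds.
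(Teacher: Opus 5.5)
Your overall route is the paper's: Steps 1 and 2 match its proof exactly. Your complexification in Steps 3--4 is a repackaging of the paper's integration $Z_l=e^{\xi\Lambda}\sum_{j=0}^{n-l}\frac{\xi^j}{j!}\mathfrak{C}_{l+j}$, with $\mathfrak{C}_l=Z_l(0)\in\mathcal{I}(\Lambda)$, followed by multiplying out $e^{\xi\Lambda}$.

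However, Step 3 is wrong as written because of an arithmetic error. The $(1,2)$ entry of $\Lambda Z$ is $\alpha y+\beta x$, not $\alpha y-\beta x$, so
\[
\Lambda Z=\left[\begin{array}{cc}\alpha x-\beta y & \alpha y+\beta x\\ \alpha y+\beta x & -(\alpha x-\beta y)\end{array}\right]\in\mathcal{I}(\Lambda).
\]
You could have seen this without computing anything. The relation $Z\in\mathcal{I}(\Lambda)$ means $\Lambda Z=Z\Lambda^T=(\Lambda Z)^T$, so $\Lambda Z$ is symmetric. Also $\Lambda(\Lambda Z)=\Lambda Z\Lambda^T$, so $\Lambda Z\in\mathcal{I}(\Lambda)$; in general $\mathcal{I}(A)$ is stable under left multiplication by $A$.

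In particular, your ``symmetric product'' $\frac12(\Lambda Z+Z\Lambda^T)$ is literally $\Lambda Z$ by the defining relation, so the symmetrization does nothing. The literal equations $g_{,\xi}=J_n(\Lambda)g$ are not overdetermined, and the theorem holds exactly as stated, not merely under a ``symmetrized reading.'' Your ``main obstacle'' paragraph is false and should be dropped.

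Step 3 as written also fails to justify the complex equation you derive. With the entries you displayed, the average $\frac12(\Lambda Z+(\Lambda Z)^T)$ would be $\left[\begin{array}{cc}\alpha x-\beta y&\alpha y\\ \alpha y&-(\alpha x-\beta y)\end{array}\right]$. That corresponds to $w\mapsto(\alpha x-\beta y)+i\alpha y$, not to multiplication by $\alpha+i\beta$. Your equation $w_{l,\xi}=(\alpha+i\beta)w_l+w_{l+1}$ is correct only because of the corrected product above.

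The fix is simply to read off the entries of $Z_{l,\xi}=\Lambda Z_l+Z_{l+1}$ directly. The $(1,1)$ and $(1,2)$ entries give $X_{l,\xi}=\alpha X_l-\beta Y_l+X_{l+1}$ and $Y_{l,\xi}=\beta X_l+\alpha Y_l+Y_{l+1}$, and the $(2,1)$ and $(2,2)$ entries merely repeat these. With that replacement, Step 4 yields exactly the stated $X_l$ and $Y_l$, and the proof is complete and essentially the paper's.
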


\noindent \textit{Proof:}

    Let $ i,j \in \{ 1, \ldots, n \} $.
    Applying $g = g^T$ to $g_{, _\xi } = J_n (\Lambda )g$ we get $J_n (\Lambda ) g = g J^T_n (\Lambda )$, then $g \in \mathcal{I} (J_n (\Lambda ))$.
    By Theorem \ref{theorem: ia jordan block first kind complex conjugate eigenvalues} we can express
    \begin{equation}
        g (\xi ) = \left[ \begin{array}{cccc}
            Z _1 ( \xi )    &   Z _2 ( \xi )    &   \cdots  &   Z _n ( \xi )    \\
            Z _2 ( \xi )    &   Z _3 ( \xi )    &   \cdots  &   0    \\
            \vdots          &   \vdots          &   \ddots  &   \vdots    \\
            Z _n ( \xi )    &   0               &   \cdots  &   0    \\
        \end{array}\right]
    \end{equation}
    where
    \begin{equation}
	Z _i ( \xi ) = \left[ \begin{array}{cc}
		X _i (\xi )	&	Y _i (\xi )	 \\
		Y _i (\xi )	&	-X _i (\xi )   \end{array}\right]  \  .
    \end{equation}
    From $g_{, \xi } = J_n ( \Lambda ) g$ we have
    \begin{equation}
    \eqalign{        
    Z_{n, \xi} &= \lambda Z_n \\
    Z_{n-1, \xi} &= \lambda Z_{n-1} + Z_n \\
    & \vdots \\
    Z_{1, \xi} &= \lambda Z_1 + Z_2
    } \end{equation}
    Integrating successively we get
    \begin{equation}
    Z_i (\xi ) = e ^{\xi \Lambda} \sum _{j = 0} ^{n -i} \frac{\xi ^j}{j !} \mathfrak{C} _{i + j} \  ,
    \  \  \mathfrak{C} _i = \left[ \begin{array}{cc} C_i & D_i \\ D_i & -C_i \end{array}\right] \  .
    \end{equation}
    Using
    \begin{equation}
        e ^{\xi \Lambda}
    =   e ^{\alpha \xi} \left[ \begin{array}{cc}
            \cos \beta \xi  &   -\sin \beta \xi \\
            \sin \beta \xi  &   \cos \beta \xi  \end{array}\right]
    \end{equation}
    we obtain equation (\ref{eq:sol diff eq Lambda}).

\rightline{$\square$}

\begin{theorem}
\label{theorem: e J jordan block second kind complex conjugate eigenvalues}
    Let $n_1, \ldots, n_m $ be positive integers such that $ n = n_1 + \ldots + n_m $, and let $ J_{ n_1, \ldots, n_m } ( \Lambda ) \in \mathbf{M}_{2 n}$ be a Jordan $\Lambda$-block of the second kind.
    If $g\in \mathbf{Sym}_{2n}$ is a matrix function such that $g_{, \xi } = J_{ n_1, \ldots, n_m } (\Lambda ) g$, then
    \begin{equation}
        g_{n_1 , \ldots , n_m} ( \Lambda ) = \left[ \begin{array}{ccc}
            X _{1 1}    &   \cdots  &   X _{1 m}    \\
            \vdots      &   \ddots  &   \ddots  \\
            X _{m 1} &   \cdots  &   X _{m m}  \end{array}\right]  \  .
    \end{equation}
    The matrices $X_{i j}$ satisfy $X_{i j}^T = X_{j i}$ and are defined as follows:
     \begin{enumerate}
        \item if $n_i = n_j$, then $X_{i j} = g_{n_i} ( \Lambda )$,
        \item if $n_i \leq n_j$, then $X_{i j} = \left[ \begin{array}{cc} g_{n_i} ( \Lambda ) & 0 \end{array}\right] $,
        \item if $n_i \geq n_j$, then $X_{i j} = \left[ \begin{array}{c} g_{n_j} ( \Lambda ) \\ 0 \end{array}\right] $,
    \end{enumerate}
   where for each $i, j \in\{ 1, \ldots, m\} $, $g_{n_i} ( \Lambda ) $ is defined as in Theorem \ref{theorem: e J jordan block first kind complex conjugate eigenvalues}.
\end{theorem}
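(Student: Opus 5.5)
The argument mirrors the proof of Theorem \ref{theorem: e J jordan block second kind real eigenvalue}, with the real Jordan cells replaced by the Jordan $\Lambda$-blocks of the first kind. The plan is to read off the algebraic shape of $g$ first, and only then impose the differential equation block by block.

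First I use the symmetry of $g$. From $g = g^T$ and $g_{,\xi} = J_{n_1,\ldots,n_m}(\Lambda) g$, transposing gives $g_{,\xi} = g J^T_{n_1,\ldots,n_m}(\Lambda)$. Hence
\[
J_{n_1,\ldots,n_m}(\Lambda)\, g = g\, J^T_{n_1,\ldots,n_m}(\Lambda),
\]
so $g(\xi) \in \mathcal{I}(J_{n_1,\ldots,n_m}(\Lambda))$ for every $\xi$. Theorem \ref{theorem: ia jordan block second kind complex conjugate eigenvalues} then gives the block decomposition $g = [X_{ij}]$ with $X_{ji} = X_{ij}^T$, and with $X_{ij}$ lying in $\mathcal{I}(J_{n_i}(\Lambda))$, or of the form $[\,Y_{ij}\;\;0\,]$, or of the form $\left[\begin{array}{c} Y_{ij}\\ 0\end{array}\right]$, according to whether $n_i = n_j$, $n_i < n_j$ or $n_i > n_j$.

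Next I split the differential equation into blocks. Since $J_{n_1,\ldots,n_m}(\Lambda)$ is block diagonal, the equation becomes $X_{ij,\xi} = J_{n_i}(\Lambda) X_{ij}$ for every pair $i,j$. Transposing the intertwining relation also gives $X_{ij,\xi} = X_{ij} J^T_{n_j}(\Lambda)$.

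I then treat the three cases.

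\begin{enumerate}
\item If $n_i = n_j$, the block $X_{ij}$ is a symmetric solution of $X_{,\xi} = J_{n_i}(\Lambda) X$. Theorem \ref{theorem: e J jordan block first kind complex conjugate eigenvalues} applies directly and gives $X_{ij} = g_{n_i}(\Lambda)$.
\item If $n_i < n_j$, write $X_{ij} = [\,Y_{ij}\;\;0\,]$. The equation $X_{ij,\xi} = J_{n_i}(\Lambda) X_{ij}$ restricts on the first $2n_i$ columns to $Y_{ij,\xi} = J_{n_i}(\Lambda) Y_{ij}$, and $Y_{ij} \in \mathcal{I}(J_{n_i}(\Lambda))$. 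So the same theorem gives $Y_{ij} = g_{n_i}(\Lambda)$.
\item If $n_i > n_j$, write $X_{ij} = \left[\begin{array}{c} Y_{ij}\\ 0\end{array}\right]$. Using the form $X_{ij,\xi} = X_{ij} J^T_{n_j}(\Lambda)$ on the top $2n_j$ rows gives $Y_{ij,\xi} = Y_{ij} J^T_{n_j}(\Lambda) = J_{n_j}(\Lambda) Y_{ij}$, since $Y_{ij}$ is in $\mathcal{I}(J_{n_j}(\Lambda))$. Hence $Y_{ij} = g_{n_j}(\Lambda)$.
\end{enumerate}

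The main obstacle is the block-restriction step in the off-diagonal cases. One has to check that the padding zeros in $X_{ij}$ are consistent with the equation. Concretely, in case (ii) multiplying $[\,Y\;\;0\,]$ on the right by $J^T_{n_j}(\Lambda)$ must not spill a nonzero entry into the zero columns. The partition of $J_{n_j}(\Lambda)$ with the coupling block $E_{m,1}$ used in Lemma \ref{lemma: J m X = X J T n Lambda m < n} shows that the spill-over term is $Y E_{m,1}$-type and vanishes because the last block-row of $Y \in \mathcal{I}(J_{n_i}(\Lambda))$ has zeros past its first block.

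A second point to watch is that the phrase "$X_{ij} = g_{n_i}(\Lambda)$" means the same functional form with block-dependent constants $C_l, D_l$. I would state this explicitly, noting that $X_{ji} = X_{ij}^T$ forces the constants of $X_{ji}$ to equal those of $X_{ij}$.
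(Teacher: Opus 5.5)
Your proposal is correct and follows the paper's route: the paper's proof just says the argument is similar to that of Theorem~\ref{theorem: e J jordan block second kind real eigenvalue}. That argument is the one you give: symmetry gives $g\in\mathcal{I}(J_{n_1,\ldots,n_m}(\Lambda))$, Theorem~\ref{theorem: ia jordan block second kind complex conjugate eigenvalues} fixes the block shape, and Theorem~\ref{theorem: e J jordan block first kind complex conjugate eigenvalues} solves each block's equation.

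The ``spill-over'' obstacle you flag does not arise, because $[\,Y\;\;0\,]\,J^T_{n_j}(\Lambda)=[\,Y J^T_{n_i}(\Lambda)\;\;0\,]$ exactly: the coupling block $E_{m,1}$ only ever multiplies the zero block, so no appeal to the shape of $Y$ is needed (and in case (ii) you only use left multiplication, which restricts to the first $2n_i$ columns trivially).
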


\noindent \textit{Proof:}

    The proof is similar to that of Theorem \ref{theorem: e J jordan block second kind real eigenvalue}.

\rightline{$\square$}

\begin{theorem}
\label{theorem: e J jordan matrix}
    Let $J$ be a generalized Jordan matrix due to Definition \ref{def:Jordan-generalized-matrix} and
    $g\in \mathbf{Sym}_{m+2n}$ a matrix function such that $g_{, \xi } = J g$. Then
    \begin{equation}
    \fl g = \diagonal{
        g_{{m^1_1} , \ldots , {m^1_{r_1}}} ( \lambda_1 ),
        \ldots,
        g_{{m^p_1} , \ldots , {m^p_{r_p}}} ( \lambda_p ),
        g_{{n^1_1} , \ldots , {n^1 {s_1}}} ( \Lambda_1 ),
        \ldots,
        g_{{n^q_1} , \ldots , {n^q {s_q}}} ( \Lambda_q )
    }
    \end{equation}
    where $g_{{m^i_1} , \ldots , {m^i_{r_i}}} ( \lambda_i )$ and $g_{{n^k_1} , \ldots , {n^k_{s_k}}} ( \Lambda_k ) $ are the functions defined as in Theorems \ref{theorem: e J jordan block second kind real eigenvalue} and \ref{theorem: e J jordan block second kind complex conjugate eigenvalues}, respectively, for each $i, j \in\{ 1, \ldots, p\} $ and $k, l\in\{ 1, \ldots, q \} $.
\end{theorem}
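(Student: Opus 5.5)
The argument follows the pattern of Theorems \ref{theorem: e J jordan block second kind real eigenvalue} and \ref{theorem: e J jordan block second kind complex conjugate eigenvalues}: first use symmetry to place $g$ in $\mathcal{I}(J)$, then use Theorem \ref{theorem: ia jordan matrix} to reduce everything to the individual diagonal blocks.

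\emph{Step 1: $g(\xi)\in\mathcal{I}(J)$ for every $\xi$.} Since $g=g^T$, transposing $g_{,\xi}=Jg$ gives $g_{,\xi}=gJ^T$. Hence $Jg=gJ^T$ at every $\xi$.

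\emph{Step 2: block-diagonal form.} By Theorem \ref{theorem: ia jordan matrix}, for each fixed $\xi$,
\begin{equation*}
g(\xi)=\diagonal{X_1(\xi),\ldots,X_p(\xi),Y_1(\xi),\ldots,Y_q(\xi)},
\end{equation*}
with $X_i(\xi)\in\mathcal{I}(J_{m^i_1,\ldots,m^i_{r_i}}(\lambda_i))$ and $Y_k(\xi)\in\mathcal{I}(J_{n^k_1,\ldots,n^k_{s_k}}(\Lambda_k))$. All off-diagonal blocks vanish identically, because Sylvester's theorem forces them to be zero at each $\xi$ separately.

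\emph{Step 3: the equation splits block by block.} $J$ is block diagonal with the same partition as $g$, so $Jg$ is block diagonal as well. The equation $g_{,\xi}=Jg$ therefore decouples into
\begin{equation*}
X_{i,\xi}=J_{m^i_1,\ldots,m^i_{r_i}}(\lambda_i)\,X_i,\qquad Y_{k,\xi}=J_{n^k_1,\ldots,n^k_{s_k}}(\Lambda_k)\,Y_k .
\end{equation*}
Each $X_i$ and $Y_k$ is symmetric, being a diagonal block of a symmetric matrix.

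\emph{Step 4: apply the earlier theorems.} Each diagonal block now satisfies exactly the hypotheses of Theorem \ref{theorem: e J jordan block second kind real eigenvalue} (real eigenvalue $\lambda_i$) or Theorem \ref{theorem: e J jordan block second kind complex conjugate eigenvalues} (pair $\Lambda_k$). Hence $X_i=g_{m^i_1,\ldots,m^i_{r_i}}(\lambda_i)$ and $Y_k=g_{n^k_1,\ldots,n^k_{s_k}}(\Lambda_k)$, which is the claimed form of $g$.

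\emph{Main point of care.} There is no real obstacle. The only thing to watch is that Theorem \ref{theorem: ia jordan matrix} is applied pointwise in $\xi$, and that the partition of $g$ used in Step 2 matches the block partition of $J$; this matching is what makes the decoupling in Step 3 valid.
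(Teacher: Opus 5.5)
Your proposal is correct and follows the paper's proof step for step: symmetry gives $Jg=gJ^T$, Theorem~\ref{theorem: ia jordan matrix} makes $g$ block diagonal, $g_{,\xi}=Jg$ decouples block by block, and Theorems~\ref{theorem: e J jordan block second kind real eigenvalue} and~\ref{theorem: e J jordan block second kind complex conjugate eigenvalues} identify each block. You also state explicitly that each diagonal block is symmetric, which those two theorems require as a hypothesis; the paper leaves this implicit.
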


\noindent \textit{Proof:}

    Applying $g = g^T $ to $g_{, \xi } = J g$ we get $J g = g J^T $, then $g\in \mathcal{I} (J)$.
    By Theorem \ref{theorem: ia jordan matrix} we have $ g = \diagonal{ X _1  ( \xi ), \ldots, X _p  ( \xi ), Y _1  ( \xi ), \ldots, Y _q  ( \xi ) } $, where $ X _i ( \xi ) \in \RMset{m ^i} $ and $ Y _k ( \xi ) \in \RMset{n ^k} $ are matrix functions for $ i \in \{ 1, \ldots, p \} $ and $ k \in \{ 1, \ldots, q \} $.
    The linear differential equation $g_{, \xi } = J g$ implies $X_{i ,\xi } = J_{m^i_1, \ldots, m^i_{r_i}} (\lambda_i ) X_i$ and $Y_{k ,\xi } = J_{n^k_1, \ldots, n^k_{s_k}} (\Lambda_k ) Y_k $. By Theorems \ref{theorem: e J jordan block second kind real eigenvalue} and \ref{theorem: e J jordan block second kind complex conjugate eigenvalues} we get $ X_i = g_{{m^i_1} , \ldots , {m^i_{r_i}}} ( \lambda_i ) $ and $Y_k = g_{{n^k_1} , \ldots , {n^k_{s_k}}} ( \Lambda_k ) $, respectively, for each $i \in\{ 1, \ldots, p\} $ and $k \in\{ 1, \ldots, q\} $.

\rightline{$\square$}

\section{Equivalence classes for the matrix $A$}    

In this section we resume some facts from linear algebra which permit to describe the similarity equivalence classes for the matrix $A\in SL(n,\mathbb{R})$ from Section \ref{sec:Field_equations}, recall that $A$ is a real traceless matrix which satisfies that $Ag=gA^T$.

\begin{definition}
    A real square matrix is non-derogatory if its minimal polynomial and characteristic polynomial are equal.
\end{definition}

\begin{definition}
    Let
    \begin{equation}
        p(\lambda ) = \lambda ^n + a_{n-1} \lambda ^{n-1} + \ldots + a_1 \lambda + a_0
    \end{equation}
    be a polynomial and $a_i \in\mathbb{R}$ for $i = \{ 1, \ldots, n \} $.
    The matrix
    \begin{equation}
    \label{eq:natural normal cell}
        \left[ \begin{array}{ccccc}
        		0	&	1	 &	0	 &	\cdots	&	0        \\
	            0	&	0	 &	1	 &	\cdots	&	0        \\
	       \vdots	&	\vdots	&	\vdots	&	\ddots	&	\vdots \\
	            0	&	0	 &	0	 &	\cdots	&	1         \\
	        -a_0	&	-a_1 &	-a_2 &	\cdots	&	-a_{n-1}  \end{array}\right]
    \end{equation}
    is the companion matrix of the polynomial $ p(\lambda ) $.
    The matrices of the form (\ref{eq:natural normal cell}) are called natural normal cells.
\end{definition}

\begin{theorem}[from \cite{horn_johnson_1985}]
\label{theorem: companion matrix}
    Let $ A $ be a real square matrix with characteristic polynomial $ p ( \lambda ) $.
    If $A$ is non-derogatory, then $A$ is similar to the companion matrix of $ p ( \lambda ) $.
\end{theorem}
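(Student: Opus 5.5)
The plan is to build an explicit cyclic basis for $\mathbb{R}^n$ adapted to $A$ and to read off the companion matrix as the matrix of $A$ in that basis. Write $p(\lambda)=\lambda^n+a_{n-1}\lambda^{n-1}+\cdots+a_0$ for the characteristic polynomial. The work splits into two steps: first, show that a non-derogatory $A$ has a cyclic vector $v\in\mathbb{R}^n$, i.e. a vector for which $v, Av, \ldots, A^{n-1}v$ are linearly independent; second, check that in a suitably ordered version of this basis $A$ acts exactly as the matrix (\ref{eq:natural normal cell}).

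For the first step I would use the primary decomposition. Factor $p=\prod_k p_k^{e_k}$ over $\mathbb{R}$ into distinct monic irreducible factors, so each $p_k$ is either $\lambda-\lambda_i$ or an irreducible quadratic corresponding to a pair $\alpha_k\pm i\beta_k$. Then $\mathbb{R}^n=\bigoplus_k \ker p_k(A)^{e_k}$. Since the minimal polynomial equals $p$, on each summand the minimal polynomial of $A$ is $p_k^{e_k}$. This means there is a vector $v_k$ in the summand with $p_k(A)^{e_k-1}v_k\neq 0$. Its annihilating polynomial divides $p_k^{e_k}$ but does not divide $p_k^{e_k-1}$, so it must equal $p_k^{e_k}$. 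The cyclic subspace generated by $v_k$ therefore has dimension $\deg p_k^{e_k}$, which is the dimension of the summand, so $v_k$ is cyclic there.

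Now put $v=\sum_k v_k$. If $q(A)v=0$, then projecting onto each invariant summand gives $q(A)v_k=0$ for every $k$, so every $p_k^{e_k}$ divides $q$. Because these factors are pairwise coprime, $p$ divides $q$. Hence no nonzero polynomial of degree less than $n$ annihilates $v$, and $v,Av,\ldots,A^{n-1}v$ are linearly independent. I expect this step to be the main obstacle. The delicate point is producing the local cyclic vector while working over $\mathbb{R}$ rather than $\mathbb{C}$; the irreducible-factor argument above avoids complexification.

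Alternatively, one can work directly with the real Jordan form given in the Theorem from \cite{horn_johnson_1985}. In that form, being non-derogatory means each eigenvalue $\lambda_i$ and each $\Lambda_k$ has exactly one block. One then takes the sum of the last basis vector of each block as $v$.

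For the second step, let $P$ be the matrix whose columns are $v,Av,\ldots,A^{n-1}v$. By Cayley--Hamilton, $A^n v=-\sum_j a_jA^jv$. Consequently $P^{-1}AP$ is the companion matrix in its transposed (column) convention, with ones on the subdiagonal and $-a_j$ in the last column.

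To reach the paper's row convention (\ref{eq:natural normal cell}), I would use the fact that every matrix is similar to its transpose. More concretely, apply the same construction to $A^T$, which is non-derogatory with the same characteristic polynomial, and then transpose the resulting similarity. This gives a real invertible $Q$ with $Q^{-1}AQ$ equal to (\ref{eq:natural normal cell}). Finally, if a determinant-one similarity is wanted, as with $C\in SL(n,\mathbb{R})$ in (\ref{eq:transformation A}), I would rescale $Q$ by a real scalar when $\det Q>0$. When $\det Q<0$ and $n$ is even, a rescaling cannot fix the sign, so I would instead compose with a companion-commuting matrix $r(C_p)$ of negative determinant when one exists; this last point is only noted as a remark and is not needed for the statement as written.
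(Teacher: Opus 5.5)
Your proof is correct. The paper gives no proof to compare it with: the statement is imported from \cite{horn_johnson_1985} and used as given.

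The standard route in that reference goes through the Jordan canonical form. The companion matrix of $p$ is itself non-derogatory, so $A$ and it both have exactly one Jordan block per eigenvalue, sized by the algebraic multiplicity. They therefore share a Jordan form and are similar over $\mathbb{C}$. One then needs the extra fact that real matrices similar over $\mathbb{C}$ are similar over $\mathbb{R}$.

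Your cyclic-vector construction avoids both the complex Jordan form and that descent step. The primary decomposition over $\mathbb{R}[\lambda]$, the local vectors $v_k$, and the coprimality argument for $v=\sum_k v_k$ work over any field. They also produce the conjugating matrix explicitly: $Q^{-1}$ has rows $w^T, w^TA,\ldots,w^TA^{n-1}$ for a cyclic vector $w$ of $A^T$. This settles the row-versus-column convention cleanly, and your handling of it via $A^T$ is right. The Jordan route is shorter given what the paper already quotes.

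Your second variant bridges the two approaches. It works for a $\Lambda$-block because $J_m(\Lambda)$ acts on $\mathbb{R}^{2m}\cong\mathbb{C}^m$ as the complex Jordan block $J_m(\alpha+i\beta)$. Hence any nonzero vector in the last $2\times 2$ slot has annihilator $((\lambda-\alpha)^2+\beta^2)^m$.

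Two facts you use tacitly deserve a line each:
\begin{enumerate}
\item The restriction of $A$ to $\ker p_k(A)^{e_k}$ has minimal polynomial $p_k^{e_k}$, because the least common multiple of the restricted minimal polynomials is $p$.
\item That summand has dimension $e_k\deg p_k$, because the restricted characteristic polynomials are powers of the distinct $p_k$ whose product is $p$.
\end{enumerate}

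Your closing remark on $\det Q<0$ is accurate and in fact sharp. For even $n$ with no real eigenvalues, every invertible polynomial in the companion matrix has positive determinant, so an $SL(n,\mathbb{R})$ similarity may not exist. This does not affect the statement, though it bears on the paper's later use of $SL(n,\mathbb{R})$-equivalence classes.
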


\begin{definition}
\label{def: natural normal matrix}
	Let $n_1, \ldots, n_m $ be positive integers such that $n = n_1 + \ldots + n_m$.
	A matrix of the form
	\begin{equation}
            A = \diagonal{ A_1, \ldots, A_m } \in \mathbf{M}_n
	\end{equation}
	is called natural normal form if
	\begin{enumerate}
		\item  $A_i \in \mathbf{M}_{n_i}$ are natural normal cell with characteristic polynomial $p_i(\lambda ) $ for $i \in \{ 1, \ldots, m \} $,
		\item  for every $j \in \{ 1, \ldots, m-1\} $, the polynomial $p_j(\lambda ) $ is a divisor of $p_{j+1}(\lambda ) $.
	\end{enumerate}
\end{definition}

\begin{theorem}[from \cite{Gantmacher59}]
\label{theorem: A natural normal form}
	Every real square matrix is similar to a unique natural normal form.
\end{theorem}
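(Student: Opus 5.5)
The statement is the rational canonical form theorem: every real square matrix $A\in\mathbf{M}_n$ is similar to a unique matrix in natural normal form (Definition \ref{def: natural normal matrix}). I would prove it through the structure theory of finitely generated modules over the principal ideal domain $\mathbb{R}[\lambda]$, proving existence and uniqueness separately.

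\emph{Existence.} Make $V=\mathbb{R}^n$ into an $\mathbb{R}[\lambda]$-module by setting $p(\lambda)\cdot v = p(A)v$. The module $V$ is finitely generated and torsion, since $\chi_A(A)=0$ by Cayley--Hamilton. The invariant-factor form of the structure theorem then gives a decomposition
\[
V \cong \bigoplus_{i=1}^m \mathbb{R}[\lambda]/(p_i(\lambda)),
\]
where the $p_i$ are monic, non-constant, and satisfy $p_1\mid p_2\mid\cdots\mid p_m$.

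Take a generator $v_i$ of the $i$-th summand and let $n_i=\deg p_i$. The vectors $v_i, Av_i,\ldots,A^{n_i-1}v_i$ form a basis of that summand, and $A$ acts on it by a companion matrix of $p_i$. The convention (\ref{eq:natural normal cell}) is the transpose form of the companion matrix. To match it, I will either use the dual ordering of the basis, or observe that a companion matrix and its transpose are similar, since each is non-derogatory with the same characteristic polynomial (Theorem \ref{theorem: companion matrix}).

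Concatenating these bases gives a real invertible $S$ with $S^{-1}AS=\diagonal{A_1,\ldots,A_m}$, and the divisibility chain is exactly condition (ii). An alternative route is to reduce $\lambda I-A$ to Smith normal form over $\mathbb{R}[\lambda]$ and use that $A\sim B$ iff $\lambda I-A$ and $\lambda I-B$ are equivalent over $\mathbb{R}[\lambda]$. That criterion, however, is itself the theorem in \cite{Gantmacher59}.

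\emph{Uniqueness.} This is the step I expect to be the main obstacle. Suppose $A$ is similar to two natural normal forms $\diagonal{A_1,\ldots,A_m}$ and $\diagonal{B_1,\ldots,B_k}$. I will show that the chain of polynomials is determined by similarity invariants. A companion matrix of $p$ has Smith form $\diagonal{1,\ldots,1,p}$, so the Smith form of $\lambda I-\diagonal{A_1,\ldots,A_m}$ has invariant factors $1,\ldots,1,p_1,\ldots,p_m$. Here the divisibility condition is used to justify that the direct sum's Smith form is just the concatenation.

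The invariant factors are ratios $d_j/d_{j-1}$ of determinantal divisors, where $d_j$ is the gcd of the $j\times j$ minors. These divisors are unchanged under $\lambda I-A\mapsto S^{-1}(\lambda I-A)S$. Hence the two lists of $p_i$ agree, so $m=k$ and $A_i=B_i$. The care needed is in proving the Smith form of the companion matrix: the minor obtained by deleting the first column and last row equals $\pm1$, so $d_{n-1}=1$. It also requires the gcd computation for direct sums under the divisibility chain. Equivalently, in module language, uniqueness of invariant factors follows by computing $\dim_{\mathbb{R}} \ker q(A)^k$ for irreducible $q$, which determines the elementary divisors and hence the $p_i$.
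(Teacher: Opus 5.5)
The paper gives no proof of its own. The statement is quoted from \cite{Gantmacher59}. However, the results quoted next to it (Lemmas~\ref{lemma: invariant factors companion matrix} and \ref{lemma: invariant factors natural normal form}, Theorem~\ref{theorem: similar invariant factors}) show the intended route: read everything off the invariant factors of $\lambda I-A$. On that route both halves follow from the criterion ``similar iff same invariant factors''. For existence, the block diagonal of companion matrices of the non-constant invariant factors of $A$ has, by the two lemmas, the same invariant factors as $A$, so it is similar to $A$. For uniqueness, two natural normal forms with the same invariant factors have the same cells.

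Your proposal is correct. Your uniqueness half is exactly this argument, and it rightly uses only the easy direction of the criterion: invariance of the determinantal divisors under $\lambda I-A\mapsto S^{-1}(\lambda I-A)S$, by Cauchy--Binet. It also uses the block-diagonal gcd computation, which does give $D_{n-m+j}=p_1\cdots p_j$ under the chain $p_1\mid\cdots\mid p_m$.

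Your existence half takes a genuinely different route. The paper's route needs the hard direction of the criterion: equal invariant factors imply similarity, which in \cite{Gantmacher59} requires dividing matrix polynomials by $\lambda I-A$. You instead invoke the structure theorem over $\mathbb{R}[\lambda]$ and build the similarity explicitly from cyclic bases. This gives you an explicit change of basis and shows where the companion cells come from. The price is importing the structure theorem, which is itself usually proved via a Smith form, so the total work is comparable. The paper's route has the advantage that the same criterion is what it later uses to classify the classes of $A\in\mathfrak{sl}(n,\mathbb{R})$.

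One small precision about matching the paper's convention. If ``dual ordering'' means reversing the cyclic basis $v,Av,\ldots,A^{n-1}v$, that does not give the form (\ref{eq:natural normal cell}); the coefficients land in the first column. The basis $w_n=v$, $w_k=\sum_{j=k}^{n}a_jA^{j-k}v$ with $a_n=1$ does work, since $Aw_k=w_{k-1}-a_{k-1}w_n$ for $k\geq 2$ and $Aw_1=-a_0w_n$. Your fallback is valid anyway: the transposed companion matrix is non-derogatory with the same characteristic polynomial, so Theorem~\ref{theorem: companion matrix} applies.
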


\begin{definition}
    Let
    \begin{equation}
        P = \left[ \begin{array}{ccc}
            p_{11} ( \lambda )  &   \cdots  &   p_{1n} ( \lambda )
        \\  \vdots  &   \ddots  &   \vdots
        \\  p_{n1} ( \lambda )  &   \cdots  &   p_{nn} ( \lambda )
        \end{array}\right] \in \mathbf{M}_{n}
    \end{equation}
    be a polynomial matrix and $ D _k ( \lambda ) $ the greatest common divisor of all minors of order $k$ in $P$ for $ k \in \{ 1, \ldots, n \} $.
    The invariant factors of P are defined as follows:
    \begin{equation}
   \fl  d_1(\lambda ) = D_1(\lambda ),
    d_2(\lambda ) = \frac{D_2(\lambda )}{D_1(\lambda ) }, \cdots ,
    d_r(\lambda ) = \frac{D_r(\lambda )}{D_{r-1}(\lambda )},
    d_{r+1}(\lambda ) = 0, \cdots ,
    d_n (\lambda ) = 0 .
    \end{equation}
    If all minors of order $k$ are equal to zero, then $ D _k (\lambda ) = 0 $.
\end{definition}

\begin{lemma}[from \cite{Matos:1992xn}]
\label{lemma: invariants factors companion matrix}
    Let $ A \in \RMset{n} $ be the companion matrix of the polynomial $ p ( \lambda ) $.
    The invariant factors of the matrix $A$ are equal to $ 1, \ldots, 1, p ( \lambda ) $, where the number of the 1's equals $ (n - 1) $.
\end{lemma}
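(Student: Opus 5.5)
We compute the invariant factors of the characteristic matrix $\lambda I_n - A$ directly from the definition via the determinantal divisors $D_k(\lambda)$. (As usual for similarity classification, the invariant factors of $A$ are those of the polynomial matrix $P = \lambda I_n - A$.) Writing this out for the companion matrix (\ref{eq:natural normal cell}) gives
\begin{equation}
\lambda I_n - A = \left[ \begin{array}{ccccc}
\lambda & -1 & 0 & \cdots & 0 \\
0 & \lambda & -1 & \cdots & 0 \\
\vdots & \vdots & \ddots & \ddots & \vdots \\
0 & 0 & \cdots & \lambda & -1 \\
a_0 & a_1 & \cdots & a_{n-2} & \lambda + a_{n-1}
\end{array}\right] .
\end{equation}
The plan has two steps. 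First we show $D_n(\lambda) = \det(\lambda I_n - A) = p(\lambda)$. Then we show $D_k(\lambda) = 1$ for every $k \le n-1$. With both in hand, the definition of invariant factors gives $d_1 = \cdots = d_{n-1} = 1$ and $d_n = D_n/D_{n-1} = p(\lambda)$, which is the claim.

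For the determinant, we expand along the last row, or equivalently do induction on $n$ by expanding along the first column. The cofactor of the entry $a_{j-1}$ in position $(n,j)$ is a block triangular determinant: an upper triangular block with diagonal $\lambda$ of size $j-1$, and a lower triangular block with diagonal $-1$ of size $n-j$. Its value is $\lambda^{j-1}(-1)^{n-j}$, and the sign $(-1)^{n+j}$ cancels the $(-1)^{n-j}$. The entries then contribute $a_{j-1}\lambda^{j-1}$, while the corner entry contributes $(\lambda + a_{n-1})\lambda^{n-1}$. Summing gives $\lambda^n + a_{n-1}\lambda^{n-1} + \cdots + a_0 = p(\lambda)$. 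This is routine bookkeeping.

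For the lower divisors, delete the first column and the last row. The remaining $(n-1)\times(n-1)$ minor is lower triangular with all diagonal entries $-1$, so it equals $(-1)^{n-1}$, a nonzero constant. Hence $D_{n-1}(\lambda)$ divides a unit, and we normalize it to $D_{n-1} = 1$, using the monic convention for greatest common divisors. For $k < n-1$, the leading principal $k\times k$ submatrix of that same triangular minor also has determinant $(-1)^k$, so $D_k = 1$ as well. Alternatively, since $D_k$ divides $D_{k+1}$, it suffices to note $D_{n-1} = 1$.

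The only point needing care is this choice of the minor with constant value, together with the convention that the gcd is taken monic. We also use the fact, implicit in the paper's definition, that $P$ here means the characteristic matrix $\lambda I_n - A$ rather than $A$ itself. No step is a real obstacle once the right submatrix is identified.
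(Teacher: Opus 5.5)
Your proof is correct. The paper gives no proof of its own for this lemma and simply cites \cite{Matos:1992xn}, so there is nothing in the paper to compare against. What you give is the standard determinantal-divisor argument found in textbooks such as \cite{Gantmacher59}:
\begin{itemize}
\item $D_n(\lambda)=\det(\lambda I_n - A)=p(\lambda)$, by cofactor expansion along the last row;
\item deleting the first column and the last row of $\lambda I_n - A$ leaves a lower triangular $(n-1)\times(n-1)$ minor with all diagonal entries $-1$;
\item the leading $k\times k$ corners of that minor give $D_1=\cdots=D_{n-1}=1$, and hence $d_n=D_n/D_{n-1}=p(\lambda)$.
\end{itemize}

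You read the paper's loosely worded definition correctly: ``invariant factors of $A$'' means those of the characteristic matrix $\lambda I_n - A$, with monic greatest common divisors. One cosmetic point: in your determinant expansion, the matrix left after deleting row $n$ and column $j$ is block diagonal, not just block triangular. This changes nothing, since the determinant $\lambda^{j-1}(-1)^{n-j}$ is the same.
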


\begin{lemma}[from \cite{Matos:1992xn}]
\label{lemma: invariant factors natural normal form}
	Let $A$ be the matrix of Definition \ref{def: natural normal matrix}.
	The invariant factors of the matrix $A$, are equal to $ 1, \ldots, 1, p_1 ( \lambda ), \ldots,  p_m ( \lambda ) $, where the number of the 1's is given by $ (n - m) $.
\end{lemma}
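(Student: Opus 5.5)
The plan is to reduce the natural normal form to the companion-matrix case of Lemma \ref{lemma: invariants factors companion matrix}, using the direct-sum structure of $\lambda I - A$. Write $A=\diagonal{A_1,\ldots,A_m}$ and consider the characteristic matrix
\begin{equation}
\lambda I_n - A = \diagonal{\lambda I_{n_1}-A_1,\ldots,\lambda I_{n_m}-A_m}.
\end{equation}
Invariant factors are unchanged under elementary row and column operations over $\mathbb{R}[\lambda]$, because such operations preserve every $D_k(\lambda)$. By Lemma \ref{lemma: invariants factors companion matrix}, each block $\lambda I_{n_i}-A_i$ can be brought by such operations, acting only within that block, to the Smith form $\diagonal{1,\ldots,1,p_i(\lambda)}$ with $n_i-1$ ones. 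Doing this block by block, and then permuting rows and columns, turns $\lambda I_n - A$ into
\begin{equation}
D=\diagonal{1,\ldots,1,p_1(\lambda),\ldots,p_m(\lambda)},
\end{equation}
with $\sum_i (n_i-1)=n-m$ ones.

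Next I compute the determinantal divisors $D_k$ of $D$ directly. For a diagonal matrix, the only nonzero $k\times k$ minors are products of $k$ diagonal entries. Since $p_1\mid p_2\mid\cdots\mid p_m$, the diagonal sequence $1,\ldots,1,p_1,\ldots,p_m$ is a divisibility chain. The product of any $k$ entries of such a chain is divisible by the product of the $k$ entries of smallest position. So $D_k$ is the product of the first $k$ diagonal entries. Taking successive quotients $D_k/D_{k-1}$ gives exactly the invariant factors $1,\ldots,1,p_1,\ldots,p_m$.

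The main care point is the divisibility argument in the minor computation. For a diagonal entry sequence $d_1\mid d_2\mid\cdots\mid d_n$ and indices $i_1<\cdots<i_k$, we have $i_j\ge j$, so $d_j\mid d_{i_j}$ for each $j$, and hence $d_1\cdots d_k$ divides the product $d_{i_1}\cdots d_{i_k}$. Since $d_1\cdots d_k$ is itself one of the minors, it is the gcd of all of them. None of the $p_i$ is zero (each is monic), so all $D_k\neq0$. I would also state explicitly the invariance of the $D_k$ under unimodular transformations, which is standard from \cite{Gantmacher59}.
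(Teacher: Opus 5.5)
The paper does not prove this lemma; it is imported from \cite{Matos:1992xn}. There is therefore no in-paper argument to compare against, and your proposal supplies a self-contained one. It is correct and follows the standard route. You reduce $\lambda I_n-A$ blockwise to an equivalent diagonal $\lambda$-matrix, then read off the determinantal divisors from the chain $1\mid\cdots\mid 1\mid p_1\mid\cdots\mid p_m$. Your treatment of the minors of a diagonal matrix, and the step $i_j\ge j\Rightarrow d_j\mid d_{i_j}$, is exactly what is needed.

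The one ingredient you use without naming it is the existence of the Smith normal form. Lemma~\ref{lemma: invariants factors companion matrix} only gives you the invariant factors of $\lambda I_{n_i}-A_i$. To conclude that this block can actually be transformed into $\diagonal{1,\ldots,1,p_i(\lambda)}$ by elementary operations, you need the theorem that every $\lambda$-matrix is equivalent to the diagonal matrix of its invariant factors (\cite{Gantmacher59}, Ch.~VI). That is not Theorem~\ref{theorem: similar invariant factors}, which concerns similarity of constant matrices, so state it explicitly.

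If you prefer not to rely on it, there are two alternatives.
\begin{enumerate}
\item Reduce $\lambda I_{n_i}-A_i$ by hand. The column operations $C_j\leftarrow C_j+\lambda C_{j+1}$, for $j=n_i-1,\ldots,1$, use the $-1$'s on the superdiagonal to move $p_i(\lambda)$ into the bottom-left corner by Horner's scheme. Row operations then clear the rest of the last row. This makes even the companion lemma unnecessary.
\item Skip the reduction and compute $D_k$ of the block-diagonal matrix directly. Its nonzero $k$-minors are products of $k_i$-minors of the blocks with $\sum_i k_i=k$, so only $D_j(\lambda I_{n_i}-A_i)$ enters, which is exactly what the companion lemma provides.
\end{enumerate}

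Compared with the paper's bare citation, your argument costs a few lines but makes the lemma rest only on standard facts about $\lambda$-matrices.
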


\begin{theorem}[from \cite{Gantmacher59}]
\label{theorem: similar invariant factors}
    Two real square matrices are similar if and only if they have the same invariant factors.    
\end{theorem}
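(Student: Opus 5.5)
We prove Theorem \ref{theorem: similar invariant factors}: two real square matrices $A,B\in\RMset{n}$ are similar if and only if the polynomial matrices $\lambda I_n - A$ and $\lambda I_n - B$ have the same invariant factors. The plan has two parts: the forward direction follows from determinantal divisors being invariant under multiplication by constant invertible matrices, and the converse goes through the natural normal form and uniqueness.

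For the forward direction, suppose $B = C A C^{-1}$ with $C$ real and invertible. Then
\begin{equation}
\lambda I_n - B = C(\lambda I_n - A)C^{-1}.
\end{equation}
By the Cauchy--Binet formula, every $k\times k$ minor of a product $PQ$ is a linear combination of products of $k\times k$ minors of $P$ and $Q$. Applying this to $C(\lambda I_n - A)C^{-1}$, each $k$-minor of $\lambda I_n - B$ is a real linear combination of $k$-minors of $\lambda I_n - A$. Hence $D_k(\lambda I_n - A)$ divides $D_k(\lambda I_n - B)$. Writing $\lambda I_n - A = C^{-1}(\lambda I_n - B)C$ gives the reverse divisibility. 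Normalizing the $D_k$ to be monic, they coincide, and so do their successive quotients, the invariant factors.

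For the converse, we use Theorem \ref{theorem: A natural normal form}: $A$ is similar to a natural normal form $N_A=\diagonal{A_1,\ldots,A_m}$ and $B$ to $N_B=\diagonal{B_1,\ldots,B_{m'}}$. By the forward direction, $A$ and $N_A$ have the same invariant factors, and likewise $B$ and $N_B$. By Lemma \ref{lemma: invariant factors natural normal form}, the nontrivial invariant factors of $N_A$ are precisely the characteristic polynomials $p_1,\ldots,p_m$ of its cells, listed in divisibility order. The same holds for $N_B$. If $A$ and $B$ share invariant factors, then the lists $p_i$ coincide, so $m=m'$ and each cell has the same characteristic polynomial. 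A companion matrix is determined by its polynomial, so $A_i=B_i$ for every $i$ and $N_A=N_B$. By transitivity, $A$ is similar to $B$.

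The main obstacle is that Lemma \ref{lemma: invariant factors natural normal form} is only cited from \cite{Matos:1992xn}, and the argument above leans entirely on it. To justify it directly, one shows that $\lambda I - (\text{companion of } p)$ is equivalent, under unimodular row and column operations, to $\diagonal{1,\ldots,1,p}$; this is Lemma \ref{lemma: invariants factors companion matrix}. Applying this cell by cell turns $\lambda I - N_A$ into $\diagonal{1,\ldots,1,p_1,\ldots,p_m}$. Since $p_j$ divides $p_{j+1}$, we can compute the gcds of the $k$-minors of this diagonal matrix explicitly: $D_k$ equals the product of the $k$ smallest entries in the divisibility chain. This yields exactly the stated invariant factors.

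One further point must be checked: that invariant factors are preserved under unimodular equivalence, not only under similarity. The same Cauchy--Binet argument covers this, because a unimodular polynomial matrix has a polynomial inverse.
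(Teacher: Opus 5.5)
The paper gives no proof of this statement; it is quoted from \cite{Gantmacher59}, where the converse is proved analytically. There, equal invariant factors give, via the Smith form, $\lambda I_n - B = P(\lambda)(\lambda I_n - A)Q(\lambda)$ with $P,Q$ unimodular. Dividing $P$ on the left and $Q$ on the right by $\lambda I_n - B$ leaves constant remainders $P_0,Q_0$ with $\lambda I_n - B = P_0(\lambda I_n - A)Q_0$, whence $P_0Q_0=I_n$ and $B=P_0AP_0^{-1}$.

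Your forward direction is the standard Cauchy--Binet argument and is fine. Your converse takes a genuinely different route, through the natural normal form, and as a reduction it is correct: the computation of $D_k$ for a diagonal matrix with a divisibility chain and the invariance under unimodular equivalence are both right.

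The lemma you single out as the main obstacle is actually routine. Let $C_p$ be the companion matrix of $p$. Deleting the first column and last row of $\lambda I - C_p$ leaves a triangular minor equal to $\pm 1$, so $D_1=\cdots=D_{n-1}=1$ and $D_n=p$.

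The real weight of your converse is the \emph{existence} half of Theorem \ref{theorem: A natural normal form}. In Gantmacher's analytic development, that existence is itself deduced from the criterion you are proving: one checks that the normal form has the same invariant polynomials as $A$ and then invokes the criterion. So your argument avoids circularity only if existence comes from an independent source. Two options are a decomposition of $\mathbb{R}^n$ into cyclic $A$-invariant subspaces with nested minimal polynomials, or the structure theorem for finitely generated modules over the PID $\mathbb{R}[\lambda]$.

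The trade-off is this. The polynomial-division proof is self-contained and yields the normal forms as corollaries. Yours is shorter and more conceptual, but it pushes essentially all of the difficulty into that cyclic decomposition.
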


\begin{definition}
    Let $n$ and $m$ be positive integers such that $ 1 < m \leq n $.
    \begin{equation}
    \fl    N _{m, n} = \{
            ( n_1, \ldots, n_m ) \in \mathbb{Z} ^m:
            0 < n_1 \leq \cdots \leq n _m,
            n = n_1 + \ldots + n _m
        \}
\end{equation}
\end{definition}

\begin{theorem}
    Let $n$ and $m$ be positive integers such that $1 < m < n$.
    The equivalence classes of the matrix $ A \in \mathfrak{sl}(n,\mathbb{R} )$ are as follows:
    \begin{equation*}
        [A] _1 = \left\{
            \left[ \begin{array}{cccccc}
            0	&	1	&	0	&	\cdots	&    0   &   0  \\	
            0	&	0	&	1	&	\cdots	&    0   &   0  \\	
            \vdots	&	\vdots	&	\vdots	&	\ddots	&	\vdots  &   \vdots  \\	
            0	&	0	&	0	&	\cdots	&    0   &   1  \\	
            -a_0	&	-a_1	&	-a_2	&	\cdots	&	-a_{n-2}  &   0  \end{array}\right] \in \mathbf{M}_{n}
        \right\}
    \end{equation*}
    and $[A] _{(n_1, \ldots, n_m)}$, which is the set of matrices $\diagonal{ A_1, \ldots, A_m }$, where the matrices $A_1, \ldots, A_m$ satisfy the following:
    
    \noindent -- $ A_i \in \mathbf{M}_{n_i} $ are natural normal cells for $ i = \{ 1, \ldots, m \} $,
    
    \noindent -- $ (n_1, \ldots, n_m) \in N_{n, m} $,
  
    \noindent -- $ p_{A _j} ( \lambda ) $ is a divisor of $p_{A_{j+1}}(\lambda )$ for $j = \{ 1, \ldots, m-1 \} $,
   
    \noindent -- $ \tr A_1 + \ldots + \tr A_m = 0 $ .
\end{theorem}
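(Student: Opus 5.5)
The plan is to classify traceless real matrices up to similarity by combining Theorem \ref{theorem: A natural normal form} with Theorem \ref{theorem: similar invariant factors}. Then the trace condition is imposed on the resulting normal forms. Since similarity is exactly the relation (\ref{eq:transformation A}), every equivalence class of $A$ contains exactly one natural normal form $\diagonal{A_1,\ldots,A_m}$. We will sort these forms by the number $m$ of natural normal cells, which by Lemma \ref{lemma: invariant factors natural normal form} equals the number of non-trivial invariant factors.

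\textbf{Case $m=1$.} Here $A$ is non-derogatory. By Theorem \ref{theorem: companion matrix} it is similar to the companion matrix of its characteristic polynomial $p(\lambda)=\lambda^n+a_{n-1}\lambda^{n-1}+\cdots+a_0$. The trace of a companion matrix is $-a_{n-1}$, so $\tr A=0$ forces $a_{n-1}=0$. This is exactly the last-row pattern in $[A]_1$, where the $(n,n)$ entry is $0$. Conversely, every such matrix is traceless, so it lies in $\mathfrak{sl}(n,\mathbb{R})$. Lemma \ref{lemma: invariants factors companion matrix} together with Theorem \ref{theorem: similar invariant factors} shows that distinct coefficient tuples give non-similar matrices. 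So $[A]_1$ is a family of classes parametrised by $(a_0,\ldots,a_{n-2})$.

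\textbf{Case $m\ge 2$.} Write $A=\diagonal{A_1,\ldots,A_m}$ with $A_i\in\RMset{n_i}$ natural normal cells and $p_{A_j}\mid p_{A_{j+1}}$. Divisibility gives $\deg p_{A_j}\le \deg p_{A_{j+1}}$, that is $n_j\le n_{j+1}$. Hence $(n_1,\ldots,n_m)\in N_{m,n}$; note that the statement writes $N_{n,m}$, and I read it as this set. The trace condition becomes $\sum_i \tr A_i=0$, since the trace of a block diagonal matrix is the sum of the block traces. Conversely, any such block matrix is traceless and is already in natural normal form. By the uniqueness in Theorem \ref{theorem: A natural normal form}, two of them are similar only if they coincide, so they represent distinct classes.

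The remaining bookkeeping concerns the range of $m$. The statement assumes $1<m<n$. The value $m=n$ would force all $n_i=1$ and all $p_i$ equal to $\lambda-c$; the trace condition then gives $c=0$, so $A=0$, which is excluded since $A$ is non-zero. I would make this boundary remark explicit so that the list is exhaustive.

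The main obstacle is not depth but care with uniqueness. One must check that the companion-matrix description in $[A]_1$ coincides with the $m=1$ natural normal form, and that no class appears in two different families. Both follow because the number of non-unit invariant factors is a similarity invariant (Lemma \ref{lemma: invariant factors natural normal form}), and that number is $1$ for $[A]_1$ and $m\ge 2$ otherwise. I would state this argument carefully and leave the rest as direct verification.
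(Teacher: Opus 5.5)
Your proposal is correct and follows the paper's proof essentially step by step. The non-derogatory case reduces to a companion matrix with $a_{n-1}=0$. The derogatory case gives $\diagonal{A_1,\ldots,A_m}$ with $n_j\le n_{j+1}$ from divisibility, $\sum_i \mathrm{tr}\,A_i=0$, and $m=n$ forcing $A=0_n$. The classes are then separated by invariant factors, where you are only more explicit than the paper about the converse and uniqueness directions.

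Like the paper, you tacitly identify the relation $A\to CAC^{-1}$ with ordinary real similarity. This is harmless for odd $n$, or once $\det C=-1$ is also allowed, since such $C$ still preserves $\det g$ and the chiral equation.
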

\noindent \textit{Proof:}

    Let $ X \in \mathfrak{sl} ( n, \mathbb{R} ) $.
    By the Theorems \ref{theorem: companion matrix} and \ref{theorem: A natural normal form} we have that if $X$ is non-derogatory, then $X$ is similar to a natural normal cell, or, is similar to a natural normal form.
    Suppose that $X$ is similar to $A$.

    First case, $A$ has the form (\ref{eq:natural normal cell}).
    Since that $\tr X = 0 $, then $ \tr A = 0 $, so that $ a _{n - 1} = 0 $.

    Second case, there exist an integer $ m \in \{ 2, \ldots, n \} $ such that $A$ has the form $ \diagonal{ A_1, \ldots, A_m } $, where $ A_i $ are natural normal cell with characteristic polynomial $ p _{A _i} ( \lambda ) $ of degree equal to $ n_i $ for $i = \{ 1, \ldots, m \} $ and $ n = n_1 + \ldots + n _m $.
    Since that $ p _{A _j} ( \lambda ) $ is a divisor of $ p _{A _{j + 1} } ( \lambda ) $, then $ n _j \leq n _{j + 1} $ for each $ j \in \{ 1, \ldots, m-1\} $, so that $ ( n_1, \ldots, n_m ) \in N _{m, n} $.
    Using the properties of the trace of a matrix we get $ \tr A_1 + \ldots + \tr A_m = 0 $, then for $ m = n $, we have $ A = 0 _n $.
 
    By the Theorem \ref{theorem: similar invariant factors} we find that $X$ has the same invariant factors that $A$.
    This means that the equivalence classes are determined by the invariant factors of $A$.
    Therefore, $A$ is a representation of the equivalence class where $X$ belongs.

\rightline{$\square$}

\section{Example: one-dimensional $SL(5,\mathbb{R})$-subspaces} 
\label{sec:example}

As an example to illustrate our results we will find the solutions for $g$ considering $A$ as member of the Lie algebra $\mathfrak{sl} ( 5, \mathbb{R} ) $.
For this, the following steps must be performed:
\begin{enumerate}
	\item   compute the sets  $ N_{m, n} $,
	\item	find the equivalence classes for $A$,
	\item	obtain the real Jordan forms for every equivalence classes,
	\item	determine $g$ for each real Jordan form.
\end{enumerate}
The method can be used for $ n \geq 2 $. It is easy to find the sets
\begin{eqnarray}
    N_{2, 5} & = & \{ ( 1, 4 ), ( 2, 3 ) \} \\
    N_{3, 5} & = & \{ ( 1, 1, 3 ), ( 1, 2, 2 ) \} \\
    N_{4, 5} & = & \{ ( 1, 1, 1, 2 ) \}
\end{eqnarray}
Hence, we have six equivalence classes: $ \mathfrak{A} = [A] _1 $, $ \mathfrak{B} = [A] _{( 2, 3 )} $, $ \mathfrak{C} = [A] _{( 1, 4 )} $, $ \mathfrak{D} = [A] _{( 1, 2, 2 )} $, $ \mathfrak{E} = [A] _{( 1, 1, 3 )} $ and $ \mathfrak{F} = [A] _{( 1, 1, 1, 2 )} $.

In what follows, we will explain in detail how to determine $ \mathfrak{B} $. The other five equivalence classes can be obtained in a similar way, all classes are shown in Table \ref{table: class A}.
Let $ A \in \mathfrak{B} $, then $ A $ has the form $ \diagonal{ A_1, A_2 } $, where $ A_1 \in \RMset{2} $ and $ A_2 \in \RMset{3} $ are natural normal cells.
By Lemma \ref{lemma: invariant factors natural normal form} the invariant factors of the matrix $A$ are given as $ 1, 1, 1,  p_{A_1} ( \lambda ), p_{A_2} ( \lambda ) $, where $ p_{A_1} ( \lambda ) $ and $ p_{A_2} ( \lambda ) $ are characteristic polynomials of $ A_1 $ and $ A_2 $, respectively.
Note that the degree of the polynomials $ p_{A_1} ( \lambda ) $ and $ p_{A_2} ( \lambda ) $ are 2 and 3, respectively.
Now, assume that $ p_{A_1} ( \lambda ) = \lambda ^2 - b \lambda - a $, where $ a, b \in \mathbb{R} $.
Since $ p_{A_1} ( \lambda ) $ is a divisor of $ p_{A_2} ( \lambda ) $, we can suppose, without loss of generality, that $ p_{A_2} ( \lambda ) = ( \lambda - c ) p_{A_1} ( \lambda )  $.
From the characteristic polynomial of $A$, $ p_{A} ( \lambda ) = p_{A_1} ( \lambda ) p_{A_2} ( \lambda )  $, we find $ \tr A = - 2b - c = 0 $, then $ p_{A_2} ( \lambda ) = ( \lambda + 2 b ) (  \lambda ^2 - b \lambda - a )  $.
The matrices $ A_1 $ and $ A_2 $ are also the companion matrices of $ p_{A_1} ( \lambda ) $ and $ p_{A_2} ( \lambda ) = \lambda ^3 + b \lambda ^2 - c \lambda - 2 a b $, respectively, hence
\begin{equation}  
    A_1 = \left[ \begin{array}{c c}
            0 & 1
	\\	a & b
	\end{array} \right],
    A_2 = \left[ \begin{array}{c c c}
		0	&	1	&	0
	\\	0	&	0	&	1
	\\	2 a b	&	c	&	-b
	\end{array} \right]
\end{equation}
where $ c = a + 2 b ^2 $.

In order to obtain the real Jordan forms of $ \mathcal{B} $, we consider the fact that a quadratic equation with real coefficients can have either one or two distinct real roots, or a pair of complex conjugate roots. Hence we can rewrite
\begin{enumerate}
	\item  $ p_{A_1} ( \lambda ) = ( \lambda - r _1 ) ( \lambda - r _2 ) $, $ p_{A_2} ( \lambda ) = ( \lambda - r _1 ) ( \lambda - r _2 ) ^2 $, where $ r _1 \neq r _2 $
	\item  $ p_{A_1} ( \lambda ) = ( \lambda - r _1 ) ( \lambda - r _2 ) $, $ p_{A_2} ( \lambda ) = ( \lambda - r _1 ) ( \lambda - r _2 ) ( \lambda - r _3 ) $, where $ r _1 \neq r _2 \neq r _3 $ 
	\item  $ p_{A_1} ( \lambda ) = ( \lambda - r _1 ) ^2 $, $ p_{A_2} ( \lambda ) = ( \lambda - r _1 ) ^3 $.
	\item  $ p_{A_1} ( \lambda ) = ( \lambda - r _1 ) ^2 $, $ p_{A_2} ( \lambda ) = ( \lambda - r _1 ) ^2 ( \lambda - r _2 ) $, where $ r _1 \neq r _2 $.
	\item  $ p_{A_1} ( \lambda ) = ( \lambda - r _1 ) ^2 + \theta ^2 $, $ p_{A_2} ( \lambda ) = ( ( \lambda - r _1 ) ^2 + \theta ^2 ) ( \lambda - r _2  ) $, where $ \theta > 0 $
\end{enumerate}
so that $ A_1 $ and $ A_2 $ are similar to 
\begin{enumerate}
	\item	$ \diagonal{ J _1 ( r_1 ), J _1 ( r_2 ) } $ and $ \diagonal{ J _1 ( r_1 ), J _2 ( r_2 ) } $
	\item	$ \diagonal{ J _1 ( r_1 ), J _1 ( r_2 ) } $ and $ \diagonal{ J _1 ( r_1 ), J _1 ( r_2 ), J _1 ( r_3 ) } $
	\item $ J _2 ( r_1 ) $ and $ J _3 ( r_1 ) $	
	\item $ J _2 ( r_1 ) $ and $ \diagonal{ J _2 ( r_1 ), J _1 ( r_2 ) } $
	\item	$ J _1 \left[ \begin{array}{cc}
	   r_1 & -\theta \\
	     \theta & r_1
	\end{array} \right] $ and $ \diagonal{ J _1 \left[ \begin{array}{cc}
	   r_1 & -\theta \\
	     \theta & r_1
	\end{array} \right], J _1 ( r_2 ) } $	
\end{enumerate}
respectively.  Therefore, $ A $ is similar to 
\begin{enumerate}
	\item	$ \diagonal{ J _{1,1} ( r_1 ), J _{1,2} ( r_2 ) } $
	\item	$ \diagonal{ J _{1,1} ( r_1 ), J _{1,1} ( r_2 ), J _1 ( r_3 ) } $
	\item $ J _{2,3} ( r_1 ) $	
	\item $ \diagonal{ J _{2,2} ( r_1 ), J _1 ( r_2 ) } $
	\item	$ \diagonal{ J _{1,1} \left[ \begin{array}{cc}
	   r_1 & -\theta \\
	     \theta & r_1
	\end{array} \right], J _1 ( r_2 ) } $	
\end{enumerate}
Applying the condition $ \tr A_1 + \tr A_2 = 0 $ we get
\begin{enumerate}
	\item	$ r_1 = -3 q / 2, r_2 = q, q \neq 0  $
	\item	$ 2 r_1 + 2 r_2 + r_3 = 0 $
	\item $ r_1 = 0 $	
	\item $ r_1 = q, r_2 = -4 q, q \neq 0 $
	\item	$ r_1 = q, r_2 = -4 q, q \neq 0 $	
\end{enumerate}
The real Jordan forms for every equivalence class of $A$ is presented in the Tables \ref{table: g class A}, \ref{table: g class B}, \ref{table: g class C}, \ref{table: g class D}, \ref{table: g class E} and \ref{table: g class F}.
Note that $q$ is a real constant, also $r$ and $\theta$, with or without indices, are real numbers.

Finally, we determine $g$ for $ J _{2,3 } ( 0 ) $.
By Theorem \ref{theorem: e J jordan block second kind real eigenvalue} we obtain $ g = \left[ \begin{array}{cc} g _{11} & g _{12} \\ g ^T _{12} & g _{22} \end{array} \right] $, where $ g_{12} = \left[ \begin{array}{cc} h & 0 \end{array} \right] \in \RMset{2 \times 3} $;  $ g _{11}, h \in \RMset{2} $ and $ g _{22} \in \RMset{3} $ are matrix functions given by Theorem \ref{theorem: e J jordan block first kind real eigenvalue}.
Thus
\begin{eqnarray}
    g _{11} & = & \left[ \begin{array}{cc}
		A_1 + A_2 \xi	& A_2
    \\	A_2 & 0
    \end{array} \right],
\\  g _{22} & = & \left[ \begin{array}{c c c}
			B_1 + B_2 \xi + B_3 \xi ^2 /2	&	B_2 + B_3 \xi	&	B_3
		\\	B_2 + B_3 \xi	&	B_3	&	0
		\\	B_3	&	0	&	0
		\end{array} \right],
\\  g _{12} & = & \left[ \begin{array}{c c c}
			C_1 + C_2 \xi	&	C_2	&	0
		\\	C_2	&	0	&	0
    \end{array} \right]   
\end{eqnarray}
The Tables \ref{table: g class A}, \ref{table: g class B}, \ref{table: g class C}, \ref{table: g class D}, \ref{table: g class E} and \ref{table: g class F} show the other solutions.
Note that all letters $ A, B, C, D, E $, with and without indices, are real constants.

In general relativity, the Boyer-Lindquist coordinates are very important.
They are defined as $\rho = \sqrt{ r^2 - 2 m r + \sigma^2 } \sin \theta$ and $\zeta = ( r - m ) \cos \theta$, where $m$ and $\sigma$ are constant parameters.
The Laplace equation (\ref{laplace eq}) is transform to
\begin{equation}
\label{Laplace eq Boyer-Lindquist coordinates}
    ( ( r^2 - 2 m r + \sigma^2 ) \xi _{, r} ) _{, r} + \frac{1}{\sin \theta} ( \xi _{, \theta} \sin \theta ) _{, \theta} = 0 \ ,
\end{equation}
Some solutions of (\ref{Laplace eq Boyer-Lindquist coordinates}) can be found in \cite{Matos2010}.
As an example we consider that the parameter $\xi$ depends only on $r$ and $\sigma = 0$, then
\begin{equation}
    \xi = \frac{ \gamma }{ 2 m } \ln \left( 1 - \frac{2 m}{r} \right) + \delta
\end{equation}
where $\gamma$ and $\delta$ are real constant.
For $n = 2$, we choose $A = \diagonal{ \lambda, -\lambda }$, then its corresponding matrix $g$ is $\diagonal{ \epsilon e^{\lambda \xi} , - e^{-\lambda \xi}/\epsilon }$, where $\lambda$ and $\epsilon$ are real constant. 
Thus, $g = \diagonal{ -C \left( 1 - \frac{2 m}{r} \right) ^{-p} , \left( 1 - \frac{2 m}{r} \right) ^p /C }$, where $p = - \frac{\lambda \gamma}{2m}$ and $C$ is a real constant.
Also, the differential equations for the function $f$ (\ref{diff eqs f}) are transform to
\begin{eqnarray}
    \left( \ln f \sqrt\rho \right)_{, r} & = & \frac{ 2 m^2 p^2 \sin ^2 \theta }{ r^2 - 2 m r + m^2 \sin^2 \theta } \frac{ r - m }{ r^2 - 2 m r } \\
    \left( \ln f \sqrt\rho \right)_{, \theta} & = & - \frac{ 2 m^2 p^2 \sin \theta \cos \theta }{ r^2 - 2 m r + m^2 \sin^2 \theta }    
\end{eqnarray}
Solving them, we get
\begin{equation}
    f = \frac{ D \Delta^{-p^2} }{\sqrt\rho}
\end{equation}
where $D$ is a constant and
\begin{equation}
    \Delta = 1 + \frac{m^2 \sin^2 \theta}{r^2 -2 m r}
\end{equation}
Therefore, a exact solution to EFE is
\begin{equation}
\eqalign{
    \hat g & = \frac{ D \Delta^{1 -p^2} }{\sqrt\rho} \left( dr \otimes dr + ( r^2 - 2 m r ) d\theta \otimes d\theta \right) \\
    & - \frac{\rho}{C} \left( 1 - \frac{2 m}{r} \right)^p d t \otimes d t
    + C \rho \left( 1 - \frac{2 m}{r} \right)^{-p} d x^4 \otimes d x^4    
}
\end{equation}

\begin{landscape}

\begin{longtable}{c|c|c}
    \textbf{Class}    &   \textbf{Matrices}    &   \textbf{Invariant factors}  \\ \hline\hline
    $\mathfrak{A}$   &   $\left[ \begin{array}{ccccc}
        0   &   1   &   0   &   0   &   0   \\
        0   &   0   &   1   &   0   &   0   \\
        0   &   0   &   0   &   1   &   0   \\
        0   &   0   &   0   &   0   &   1   \\
        a   &   b   &   c   &   d   &   0
    \end{array}\right]$   &    $ 1, 1, 1, 1, \lambda ^5 -d \lambda ^3 - c \lambda ^2 - b \lambda - a $   \\
    $\mathfrak{B}$   &
    $\left[ \begin{array}{ccccc}
        0  &  1  &     &     &     \\
        a  &  b  &     &     &     \\
           &     &  0  &  1  & 0   \\
           &     &  0  &  0  & 1   \\
           &     & 2ab &  c  & -b   \end{array}\right] $
     &    $ 1, 1, 1, \lambda ^2 - b \lambda - a, ( \lambda + 2 b ) ( \lambda ^2 - b \lambda - a ) $; $c = a + 2 b ^2 $   \\
    $\mathfrak{C}$   &
    $\left[ \begin{array}{ccccc}
        q   &       &       &       &       \\
            &   0   &   1   &   0   &   0   \\
            &   0   &   0   &   1   &   0   \\
            &   0   &   0   &   0   &   1   \\
            &   aq  &   c   &   d   &   -q   \end{array}\right] $
     &    $ 1, 1, 1, \lambda - q, ( \lambda - q ) ( \lambda ^3 + 2 q \lambda ^2 + b \lambda + a ) $; $ c = ( b q - a ) , d = 2 q ^2 - b $   \\
    $\mathfrak{D}$   &
    $\left[ \begin{array}{ccccc}
        q   &        &       &         &       \\
            &  0     &   1   &         &       \\
            & 3q^2/2 & -q/2  &         &       \\
            &        &       &  0      &   1   \\
            &        &       & 3q^2/2  & -q/2   \end{array}\right] $
    &    $ 1,  1, \lambda - q, ( \lambda - q ) ( \lambda + 3 q / 2 ), ( \lambda - q ) ( \lambda + 3 q / 2 ) $   \\
    $\mathfrak{E}$   &
    $\left[ \begin{array}{ccccc}
        q   &      &     &     &       \\
            &  q   &     &     &       \\
            &      &  0  &  1  &  0    \\
            &      &  0  &  0  &  1    \\
            &      &  aq &  b  &  -2q    \end{array}\right]$
    &    $ 1,  1, \lambda -q, \lambda -q, ( \lambda - q ) ( \lambda ^2 + 3 q \lambda + a ) $; $ b = 3 q ^2 - a $   \\
    $\mathfrak{F}$   &
    $\left[ \begin{array}{ccccc}
        q   &      &     &    &     \\
            &  q   &     &    &     \\
            &      &  q  &    &     \\
            &      &     & 0  & 1   \\
            &      &     & 4q & -3q  \end{array}\right]$
    &    $ 1,  \lambda - q,  \lambda - q, \lambda - q, (  \lambda - q ) (  \lambda + 4 q ) $   \\
\caption{Equivalence classes for the matrix $A \in \mathfrak{sl} (5,\mathbb{R} )$.}
\label{table: class A}
\end{longtable}

\begin{longtable}{c|c|c}
    $\mathbf{A}$  &   $\mathbf{g}$  &   \\\hline\hline
    $\left[ \begin{array}{ccccc}
    	r_1 &     &     &     &      \\
    	    & r_2 &     &     &      \\
    	    &     & r_3 &     &      \\
    	    &     &     & r_4 &      \\
    	    &     &     &     & r_5  \end{array}\right] $	
    &	$\left[ \begin{array}{ccccc}
        X_1 & 0     & 0     & 0     & 0    \\
        0   & X_2   & 0     & 0     & 0    \\
        0   & 0     & X_3   & 0     & 0    \\
        0   & 0     & 0     & X_4   & 0    \\
        0   & 0     & 0     & 0     & X_5  \end{array}\right] $	
    &   \begin{tabular}{l}
        $ X _i = A _i e ^{ r _i \xi } $ for $ i = 1, \ldots, 5 $ \\
        $ r _1 + r _2 + r _3 + r _4 + r _5 = 0 $	\\
        $ r _1 \neq r _2 \neq r _3 \neq r _4 \neq r _5 $
    \end{tabular}\\\hline
   $\left[ \begin{array}{ccccc}
    	r_1 &     &     &     &      \\
    	    & r_2 &     &     &      \\
    	    &     & r_3 &     &      \\
    	    &     &     & r_4 &  1   \\
    	    &     &     &     & r_4  \end{array}\right] $
    &	$\left[ \begin{array}{ccccc}
        X_1 & 0   & 0   & 0   & 0    \\
        0   & X_2 & 0   & 0   & 0    \\
        0   & 0   & X_3 & 0   & 0    \\
        0   & 0   & 0   & Y_1 & Y_2  \\
        0   & 0   & 0   & Y_2 & 0    \end{array}\right] $
    &   \begin{tabular}{l}
        $ X _i = A _i e ^{ r _i \xi } $ for $ i = 1, 2, 3 $ \\
        $ Y _1 = \  ( B _1 + B _2 \xi ) e ^{ r _4 \xi } $  \\
        $ Y _2 = B _2 e ^{ r _4 \xi } $ \\
        $ r _1 + r _2 + r _3 + 2  r _4 = 0 $	\\
        $ r _1 \neq r _2 \neq r _3 \neq r _4 $
    \end{tabular}\\\hline
    $\left[ \begin{array}{ccccc}
    	r_1 &     &     &     &      \\
    	    & r_2 &     &     &      \\
    	    &     & r_3 &  1  &  0   \\
    	    &     &     & r_3 &  1   \\
    	    &     &     &     & r_3  \end{array}\right] $
    &	$\left[ \begin{array}{ccccc}
        X_1 & 0   & 0   & 0   & 0   \\
        0   & X_2 & 0   & 0   & 0   \\
        0   & 0   & Y_1 & Y_2 & Y_3 \\
        0   & 0   & Y_2 & Y_3 & 0   \\
        0   & 0   & Y_3 & 0   & 0   \end{array}\right] $
    &   \begin{tabular}{l}
        $ X _i = A _i e ^{ r _i \xi } $ for $ i = 1, 2 $ \\
        $ Y _1 = \  ( B _1 + B _2 \xi + B _3 \frac{\xi ^2}{2} ) e ^{ r _3 \xi } $  \\
        $ Y _2 = \  ( B _2 + B _3 \xi ) e ^{ r _3 \xi } $  \\
        $ Y _3 = B _3 e ^{ r _3 \xi } $	\\
        $ r _1 + r _2 + 3 r _3 = 0 $	\\
        $ r _1 \neq r _2 \neq r _3 $
    \end{tabular}\\\hline
   $\left[ \begin{array}{ccccc}
    	r_1 &     &     &     &      \\
    	    & r_2 &  1  &     &      \\
    	    &     & r_2 &     &      \\
    	    &     &     & r_3 &  1   \\
    	    &     &     &     & r_3  \end{array}\right] $
    	&	$\left[ \begin{array}{ccccc}
        X & 0   & 0   & 0   & 0   \\
        0 & Y_1 & Y_2 & 0   & 0   \\
        0 & Y_2 & 0   & 0   & 0   \\
        0 & 0   & 0   & Z_1 & Z_2 \\
        0 & 0   & 0   & Z_2 & 0   \end{array}\right] $
    &   \begin{tabular}{l}
        $ X = A e ^{ r _1 \xi } $ \\
        $ Y _1 = \  ( B _1 + B _2 \xi ) e ^{ r _2 \xi } $  \\
        $ Y _2 = B _2 e ^{ r _2 \xi } $ \\
        $ Z _1 = \  ( C _1 + C _2 \xi ) e ^{ r _3 \xi } $  \\
        $ Z _2 = C _2 e ^{ r _3 \xi } $ \\
        $ r _1 + 2 r _2 + 2 r _3 = 0 $	\\
        $ r _1 \neq r _2 \neq r _3 $
    \end{tabular}\\\hline
    $\left[ \begin{array}{ccccc}
    	-4q &    &     &    &   \\
    	    & q  &	1  & 0	& 0	\\
	        &    &	q  & 1	& 0	\\
	        &    &	   & q	& 1	\\
	        &    &     &    & q  \end{array}\right] $
    &	$\left[ \begin{array}{ccccc}
        X & 0   & 0   & 0   & 0    \\
        0 & Y_1 & Y_2 & Y_3 & Y_4  \\
        0 & Y_2 & Y_3 & Y_4 & 0    \\
        0 & Y_3 & Y_4 & 0   & 0    \\
        0 & Y_4 & 0   & 0   & 0    \end{array}\right] $
    &   \begin{tabular}{l}
        $ X = A e ^{ -4 q \xi } $ \\
        $ Y _1 = \  ( B _1 + B _2 \xi + B _3 \frac{\xi ^2}{2} + B _4 \frac{\xi ^3}{3} ) e ^{ q \xi } $  \\
        $ Y _2 = \  ( B _2 + B _3 \xi + B _4 \frac{\xi ^2}{2} ) e ^{ q \xi } $  \\
        $ Y _3 = \  ( B _3 + B _4 \xi ) e ^{ q \xi } $  \\
        $ Y _4 = B _4 e ^{ q \xi } $ \\
        $ q \neq 0 $
    \end{tabular}\\\hline
    $\left[ \begin{array}{ccccc}
    	q	&	1  &  0  &                &                 \\
	        &	q  &  1  &                &                 \\
	        &      &  q  &                &                 \\
	        &      &     & -\frac{3}{2}q  & 1               \\
  	        &	   &     &                & -\frac{3}{2}q   \end{array}\right] $
    &	$\left[ \begin{array}{ccccc}
        X_1 & X_2 & X_3 & 0   & 0    \\
        X_2 & X_3 & 0   & 0   & 0    \\
        X_3 & 0   & 0   & 0   & 0    \\
        0   & 0   & 0   & Y_1 & Y_2  \\
        0   & 0   & 0   & Y_2 & 0    \end{array}\right] $
    &   \begin{tabular}{l}
        $ X _1 = \  ( A _1 + A _2 \xi + A _3 \frac{\xi ^2}{2} ) e ^{ q \xi } $  \\
        $ X _2 = \  ( A _2 + A _3 \xi ) e ^{ q \xi } $  \\
        $ X _3 = A _4 e ^{ q \xi } $ \\
        $ Y _1 = \  ( B _1 + B _2 \xi ) e^{- \frac{3 q \xi}{2}} $  \\
        $ Y _2 = B _2 e^{- \frac{3 q \xi}{2}} $ \\
        $ q \neq 0 $
    \end{tabular}\\\hline
    $\left[ \begin{array}{ccccc}
        0 & 1 & 0 & 0 & 0	\\
          &	0 &	1 &	0 &	0	\\
          &   &	0 &	1 &	0	\\
          &   &   &	0 &	1	\\
	      &   &   &   &	0   \end{array}\right] $
    &	$\left[ \begin{array}{ccccc}
        X_1 & X_2 & X_3 & X_4 & X_5   \\
        X_2 & X_3 & X_4 & X_5 & 0     \\
        X_3 & X_4 & X_5 & 0   & 0     \\
        X_4 & X_5 & 0   & 0   & 0     \\
        X_5 & 0   & 0   & 0   & 0     \end{array}\right] $
    &   \begin{tabular}{l}
        $ X _1 = A _1 + A _2 \xi + A _3 \frac{\xi ^2}{2} + A _4 \frac{\xi ^3}{6} + A _5 \frac{\xi ^4}{24} $  \\
        $ X _2 = A _2 + A _3 \xi + A _4 \frac{\xi ^2}{2} + A _5 \frac{\xi ^3}{6} $  \\
        $ X _3 = A _3 + A _4 \xi + A _5 \frac{\xi ^2}{2} $  \\
        $ X _4 = A _4 + A _5 \xi $  \\
        $ X _5 = A _5 $ \\
    \end{tabular}\\\hline
    $\left[ \begin{array}{ccccc}
    	r_1  &     &     &          &            \\
    	     & r_2 &     &          &            \\
    	     &     & r_3 &          &            \\
    	     &     &     & r_4      & -\theta_4  \\
    	     &     &     & \theta_4 &	r_4      \end{array}\right] $
    &	$\left[ \begin{array}{ccccc}
        X_1 & 0   & 0   & 0 & 0   \\
        0   & X_2 & 0   & 0 & 0   \\
        0   & 0   & X_3 & 0 & 0   \\
        0   & 0   & 0   & U & V   \\
        0   & 0   & 0   & V & -U  \end{array}\right] $
    &   \begin{tabular}{l}
        $ X _i = A _i e ^{ r _i \xi } $ for $ i = 1, 2, 3 $  \\
        $ U = e ^{ r _4 \xi } \  ( B \cos \theta _4 \xi - C \sin \theta _4 \xi )  $  \\
        $ V = e ^{ r _4 \xi } \  ( C \cos \theta _4 \xi + B \sin \theta _4 \xi )  $  \\
        $ r _1 + r _2 + r _3 + 2 r _4 = 0 $	\\
        $ r _1 \neq r _2 \neq r _3 $	\\
        $ \theta _4 > 0 $
    \end{tabular}\\\hline
    $\left[ \begin{array}{ccccc}
    	r_1   &     &     &           &             \\
    	      & r_2	& 1	  &           &             \\
    	      &	    & r_2 &           &             \\
    	      &     &     & r_3	      & -\theta_3	\\
    	      &     &     & \theta_3  &	r_3         \end{array}\right] $
    &	$\left[ \begin{array}{ccccc}
        X & 0   & 0   & 0 & 0   \\
        0 & Y_1 & Y_2 & 0 & 0   \\
        0 & Y_2 & 0   & 0 & 0   \\
        0 & 0   & 0   & U & V   \\
        0 & 0   & 0   & V & -U  \end{array}\right] $
    &   \begin{tabular}{l}
        $ X = A e ^{ r _1 \xi } $  \\
        $ Y_1 = \  ( B _1 + B _2 \xi ) e ^{ r _2 \xi } $  \\
        $ Y_2 = B _2 e ^{ r _2 \xi } $  \\
        $ U = e ^{ r _3 \xi } \  ( C \cos \theta _4 \xi - D \sin \theta _4 \xi )  $  \\
        $ V = e ^{ r _3 \xi } \  ( D \cos \theta _4 \xi + C \sin \theta _4 \xi )  $  \\
        $ r _1 + 2 r _2 + 2 r _3 = 0 $	\\
        $ r _1 \neq r_2 $	\\
        $ \theta _3 > 0 $
    \end{tabular}\\\hline
    $\left[ \begin{array}{ccccc}
        q	& 1 & 0   &                &	           \\
	        & q & 1	  &                &               \\
	        &   & q   &                &               \\
	        &   &     & -\frac{3}{2}q  & -\theta	   \\
	        &   &     &  \theta        & -\frac{3}{2}q  \end{array}\right] $
    &	$\left[ \begin{array}{ccccc}
        X_1 & X_2 & X_3 & 0 & 0    \\
        X_2 & X_3 & 0   & 0 & 0    \\
        X_3 & 0   & 0   & 0 & 0    \\
        0   & 0   & 0   & U & V    \\
        0   & 0   & 0   & V & -U   \end{array}\right] $
    &   \begin{tabular}{l}
        $ X_1 = \  ( A _1 + A _2 \xi + A _3 \frac{\xi ^2}{2} ) e ^{ q \xi } $  \\
        $ X_2 = \  ( A _2 + A _3 \xi ) e ^{ q \xi } $  \\
        $ X_3 = A _3 e ^{ q \xi } $  \\
        $ U = e^{- \frac{3 q \xi}{2}} \  ( C \cos \theta \xi - D \sin \theta \xi )  $  \\
        $ V = e^{- \frac{3 q \xi}{2}} \  ( D \cos \theta \xi + C \sin \theta \xi )  $  \\
        $ \theta > 0 $
    \end{tabular}\\\hline
    $\left[ \begin{array}{ccccc}
        q  &          &           &          &            \\
	       & r_1      & -\theta_1 &          &	          \\
	       & \theta_1 &	r_1       &          &            \\
	       &          &           & r_2	     & -\theta_2  \\
	       &          &           & \theta_2 &	r_2       \end{array}\right] $
    &	$\left[ \begin{array}{ccccc}
        X & 0   & 0    & 0   & 0     \\
        0 & U_1 & V_1  & 0   & 0     \\
        0 & V_1 & -U_1 & 0   & 0     \\
        0 & 0   & 0    & U_2 & V_2   \\
        0 & 0   & 0    & V_2 & -U_2  \end{array}\right] $
    &   \begin{tabular}{l}
        $ X = A e ^{ q \xi } $  \\
        $ U _i = e^{ r _i \xi } \  ( B _i \cos \theta _i \xi - C _i \sin \theta _i \xi )  $  \\
        $ V _i = e^{ r _i \xi } \  ( C _i \cos \theta _i \xi + B _i \sin \theta _i \xi )  $  \\
        $ q + 2 r _1 + 2 r _2 = 0 $	\\
        $ r _1 + i \theta _1 \neq r _2 + i \theta _2 $
    \end{tabular}\\\hline
     $\left[ \begin{array}{ccccc}
    	-4q    &        &         &        &          \\
	           & q	    & -\theta &	1      & 0	      \\
	           & \theta	& q	      &	0      & 1	      \\
	           &        &	      & q      & -\theta  \\
	           &        &	      & \theta & q        \end{array}\right] $
    &	 $\left[ \begin{array}{ccccc}
        X & 0   & 0    & 0   & 0    \\
        0 & U_1 & V_1  & U_2 & V_2  \\
        0 & V_1 & -U_1 & V_2 & -V_2 \\
        0 & U_2 & V_2  & 0   & 0    \\
        0 & V_2 & -U_2 & 0   & 0    \end{array}\right] $
    &   \begin{tabular}{l}
        $ X = A e ^{ -4 q \xi } $  \\
        $ U _1 = e^{ q \xi } \  ( ( B _1 + B _2 \xi ) \cos \theta \xi - ( C _1 + C _2 \xi ) \sin \theta \xi ) )  $  \\
        $ V _1 = e^{ q \xi } \  ( ( C _1 + C _2 \xi ) \cos \theta \xi + ( B _1 + B _2 \xi ) \sin \theta \xi )  $  \\
        $ U _2 = e^{ q \xi } \  ( D \cos \theta \xi - E \sin \theta \xi )  $  \\
        $ V _2 = e^{ q \xi } \  ( E \cos \theta \xi + D \sin \theta \xi )  $  \\
    \end{tabular}\\
\caption{Solutions for $g$ considering $A\in \mathfrak{A}$.}
\label{table: g class A}
\end{longtable}

\begin{longtable}{c|c|c}
    $\mathbf{A}$  &   $\mathbf{g}$    & \\\hline\hline
    $\left[ \begin{array}{ccccc}
    0   &   1 &    &      &     \\
        &   0 &    &      &     \\
        &     &  0  &  1  &  0  \\
        &     &     &  0  &  1  \\
        &     &     &     &  0  \end{array}\right] $
    &	$\left[ \begin{array}{ccccc}
        X_1 & X_2   & Z_1 & Z_2 & 0   \\
        X_1 & 0     & Z_2 & 0   & 0   \\
        Z_1 & Z_2   & Y_1 & Y_2 & Y_3 \\
        Z_2 & 0     & Y_2 & Y_3 & 0   \\
        0   & 0     & Y_3 & 0   & 0   \end{array}\right] $
    &   \begin{tabular}{l}
        $ X _1 = A _1 + A _2 \xi $  \\
        $ X _2 = A _2 $  \\
        $ Y _1 = B _1 + B _2 \xi + B _3 \frac{\xi ^2}{2} $  \\
        $ Y _2 = B _2 + B _3 \xi $  \\
        $ Y _3 = B _3 $  \\
        $ Z _1 = C _1 + C _2 \xi $  \\
        $ Z _2 = C _2 $  \\
    \end{tabular}\\\hline
    $\left[ \begin{array}{ccccc}
    r_1  &     &     &     &      \\
         & r_1 &     &     &      \\
         &     & r_2 &     &      \\
         &     &     & r_2 &      \\
         &     &     &     & r_3  \end{array}\right] $
    &	$\left[ \begin{array}{ccccc}
        X_1 & X_2 & 0   & 0   & 0   \\
        X_2 & X_3 & 0   & 0   & 0   \\
        0   & 0   & Y_1 & Y_2 & 0   \\
        0   & 0   & Y_2 & Y_3 & 0   \\
        0   & 0   & 0   & 0   & Z   \end{array}\right] $
    &   \begin{tabular}{l}
        $ X _i = A _i e^{\xi {r}_{1}} $ for $ i = 1, 2, 3 $ \\
        $ Y _j = B _i e^{\xi {r}_{2}} $ for $ j = 1, 2, 3 $ \\
        $ Z = C e ^{ -2 b \xi } $   \\
        $ 2 r _1 + 2 r _2 + r _3 = 0 $	\\
        $ r _1 \neq r _2 \neq r_3 $
    \end{tabular}\\\hline
    $\left[ \begin{array}{ccccc}
    	-4q  &    &    &   &    \\
    	     & q  & 1  &   &    \\
    	     &    & q  &   &    \\
    	     &    &    & q & 1  \\
    	     &    &    &   & q   \end{array}\right] $
    &	$\left[ \begin{array}{ccccc}
        X & 0   & 0     & 0     & 0    \\
        0 & Y_1 & Y_2   & T_1   & T_2  \\
        0 & Y_2 & 0     & T_2   & 0    \\
        0 & T_1 & T_2   & Z_1   & Z_2  \\
        0 & T_2 & 0     & Z_2   & 0    \end{array}\right] $
    &   \begin{tabular}{l}
        $ X =  A e^{- 4 q \xi} $  \\
        $ Y _1 =  \  ( B _1 + B _2 \xi ) e^{ q \xi} $  \\
        $ Y _2 =  B _2 e^{ q \xi} $  \\
        $ Z _1 =  \  ( C _1 + C _2 \xi ) e^{ q \xi} $  \\
        $ Z _2 =  C _2 e^{ q \xi} $  \\
        $ T _1 =  \  ( D _1 + D _2 \xi ) e^{ q \xi} $  \\
        $ T _2 =  D _2 e^{ q \xi} $  \\
        $ q \neq 0 $
    \end{tabular}\\\hline
    $\left[ \begin{array}{ccccc}
    q  &   &   &               &                 \\
       & q & 1 &               &                 \\
       &   & q &               &                 \\
       &   &   & -\frac{3}{2}q &                 \\
       &   &   &               & -\frac{3}{2}q   \end{array}\right] $
    &	$\left[ \begin{array}{ccccc}
        X_1 & X_2 & 0   & 0   & 0     \\
        X_2 & Y_1 & Y_2 & 0   & 0     \\
        0   & Y_2 & 0   & 0   & 0     \\
        0   & 0   & 0   & Z_1 & Z_2   \\
        0   & 0   & 0   & Z_2 & Z_3   \end{array}\right] $
    &   \begin{tabular}{l}
        $ X _i = A _i e^{q \xi} $ for $ i = 1, 2 $  \\
        $ Y _1 = \  ( B _1 + B _2 \xi ) e^{q \xi} $    \\
        $ Y _2 = B _2 e^{q \xi} $    \\
        $ Z _j = C _j e^{- \frac{3 q \xi}{2}} $ for $ j = 1, 2, 3 $    \\
        $ q \neq 0 $
    \end{tabular}\\\hline
    $\left[ \begin{array}{ccccc}
    	-4q  &        &         &        &         \\
    	     & q      & -\theta &        &         \\
   	         & \theta & q       &        &         \\
    	     &        &         & q      & -\theta \\
   	         &        &         & \theta & q       \end{array}\right] $
    &	$\left[ \begin{array}{ccccc}
        X & 0   & 0     & 0     & 0     \\
        0 & U_1 & V_1   & U_2   & V_2   \\
        0 & V_1 & -U_1  & V_2   & -U_2  \\
        0 & U_2 & V_2   & U_3   & V_3   \\
        0 & V_2 & -U_2  & V_3   & -U_3  \end{array}\right] $
    &   \begin{tabular}{l}
        $ X = A e^{- 4 q \xi} $ \\
        $ U _i = e^{ q \xi } \  ( B _i \cos \theta \xi - C _i \sin \theta \xi ) $ \\
        $ V _i = e^{ q \xi } \  ( C _i \cos \theta \xi + D _i \sin \theta \xi ) $ for $ i = 1, 2, 3 $ \\
        $ \theta \neq 0 $
    \end{tabular}\\
\caption{Solutions for $g$ considering $A\in \mathfrak{B}$.}
\label{table: g class B}
\end{longtable}

\begin{longtable}{c|c|c}
    $\mathbf{A}$  &   $\mathbf{g}$  &   \\\hline\hline
    $\left[ \begin{array}{ccccc}
    0 &   &   &   &    \\
      & 0 & 1 & 0 & 0  \\
      &   & 0 & 1 & 0  \\
      &   &   & 0 & 1  \\
      &   &   &   & 0  \end{array}\right] $
    &	 $\left[ \begin{array}{ccccc}
        A   &   B   &   0   &   0   &   0   \\
        B   &   X_1 &   X_2 &   X_3 &   X_4 \\
        0   &   X_2 &   X_3 &   X_4 &   0   \\
        0   &   X_3 &   X_4 &   0   &   0   \\
        0   &   X_4 &   0   &   0   &   0   \end{array}\right] $
    &   \begin{tabular}{l}
        $ X _1 = C _1 + C _2 \xi + C _3 \frac{\xi ^2}{2} + C _4 \frac{\xi ^3}{6} $  \\
        $ X _2 = C _2 + C _3 \xi + C _4 \frac{\xi ^2}{2} $  \\
        $ X _3 = C _3 + C _4 \xi $  \\
        $ X _4 = C _4 $  \\
    \end{tabular}\\\hline
    $\left[ \begin{array}{ccccc}
    	-4q  &   &   &   &    \\
	         & q &   &   &    \\
    	     &   & q & 1 & 0  \\
    	     &   &   & q & 1  \\
    	     &   &   &   & q   \end{array}\right] $
    &	$\left[ \begin{array}{ccccc}
        X & 0   & 0   & 0   & 0    \\
        0 & Y_1 & Y_2 & 0   & 0    \\
        0 & Y_2 & Z_1 & Z_2 & Z_3  \\
        0 & 0   & Z_2 & Z_3 & 0    \\
        0 & 0   & Z_3 & 0   & 0    \end{array}\right] $
    &   \begin{tabular}{l}
        $ X = A e^{- 4 q \xi} $    \\
        $ Y _i = B _i e^{q \xi} $ for $ i = 1, 2 $  \\
        $ Z _1 = \  ( C _1 + C _2 \xi + C _3 \frac{\xi ^2}{2} ) e^{q \xi} $ \\
        $ Z _2 = \  ( C _2 + C _3 \xi ) e^{q \xi} $ \\
        $ Z _3 = C _3 e^{q \xi} $ \\
        $ q \neq 0 $
    \end{tabular}
      \\\hline
    $\left[ \begin{array}{ccccc}
    	q & 1 & 0 &               &                \\
    	  & q & 1 &               &                \\
    	  &   & q &               &                \\
    	  &   &   & -\frac{3}{2}q &                \\
    	  &   &   &               & -\frac{3}{2}q  \end{array}\right] $
    &	$\left[ \begin{array}{ccccc}
         X_1  & X_2 & X_3 & 0   & 0     \\
         X_2  & X_3 & 0   & 0   & 0     \\
         X_3  & 0   & 0   & 0   & 0     \\
         0    & 0   & 0   & Y_1 & Y_2   \\
         0    & 0   & 0   & Y_2 & Y_3   \end{array}\right] $
    &   \begin{tabular}{l}
        $ X _1 = \  ( A _1 + A _2 \xi + A _3 \frac{\xi ^2}{2} ) e^{q \xi} $\\
        $ X _2 = \  ( A _2 + A _3 \xi ) e^{q \xi} $\\
        $ X _3 = A _3 e^{q \xi} $   \\
        $ Y _i = B _i e^{- \frac{3 q \xi}{2}} $ for $ i = 1, 2, 3 $ \\
        $ q \neq 0 $
    \end{tabular}\\\hline
   $\left[ \begin{array}{ccccc}
    q &   &   &               &                \\
      & q & 1 &               &                \\
      &   & q &               &                \\
      &   &   & -\frac{3}{2}q & 1              \\
      &	  &   &               & -\frac{3}{2}q  \end{array}\right] $
    &	$\left[ \begin{array}{ccccc}
        X_1 &   X_2 &   0   &   0   &   0   \\
        X_2 &   Y_1 &   Y_2 &   0   &   0   \\
        0   &   Y_2 &   0   &   0   &   0   \\
        0   &   0   &   0   &   Z_1 &   Z_2 \\
        0   &   0   &   0   &   Z_2 &   0  \end{array}\right] $
    &   \begin{tabular}{l}
        $ X _i = A _i e^{q \xi} $ for $ i = 1, 2 $   \\
        $ Y _1 = \  ( B _1 + B _2 \xi ) e^{q \xi} $    \\
        $ Y _2 = B _2 e^{q \xi} $    \\
        $ Z _1 = \  ( C _1 + C _2 \xi ) e^{- \frac{3 q \xi}{2}} $    \\
        $ Z _2 = C _2 e^{- \frac{3 q \xi}{2}} $    \\
        $ q \neq 0 $
    \end{tabular}\\\hline
    $\left[ \begin{array}{ccccc}
    q &   &    &     &      \\
      & q & 1  &     &      \\
      &   & q  &     &      \\
      &   &    & r_1 &      \\
      &   &    &     & r_2  \end{array}\right] $
    &	$\left[ \begin{array}{ccccc}
        X_1 &   X_2 &   0   &   0   &   0   \\
        X_2 &   Y_1 &   Y_2 &   0   &   0   \\
        0   &   Y_2 &   0   &   0   &   0   \\
        0   &   0   &   0   &   Z_1 &   0   \\
        0   &   0   &   0   &   0   &   Z_2 \end{array}\right] $
    &  \begin{tabular}{l}
        $ X _i = A _i e^{q \xi} $ for $ i = 1, 2 $   \\
        $ Y _1 = \  ( B _1 + B _2 \xi ) e^{q \xi} $    \\
        $ Y _2 = B _2 e^{q \xi} $    \\
        $ Z _i = C _i e^{ r _i \xi } $ for $ i = 1, 2 $    \\
        $ 3 q + r _1 + r _2 = 0 $	\\
        $ q \neq r _1 \neq r _2 $
    \end{tabular}\\\hline
   $\left[ \begin{array}{ccccc}
    q &   &   &               &                \\
      & q & 1 &               &                \\
      &   & q &               &                \\
      &   &   & -\frac{3}{2}q & -\theta        \\
      &   &   & \theta        & -\frac{3}{2}q  \end{array}\right] $
    &	$\left[ \begin{array}{ccccc}
        X_1 & X_2 & 0   & 0 & 0   \\
        X_2 & Y_1 & Y_2 & 0 & 0   \\
        0   & Y_2 & 0   & 0 & 0   \\
        0   & 0   & 0   & U & V   \\
        0   & 0   & 0   & V & -U  \end{array}\right] $
    &  \begin{tabular}{l}
        $ X _i = A _i e^{q \xi} $ for $ i = 1, 2 $   \\
        $ Y _1 = \  ( B _1 + B _2 \xi ) e^{q \xi} $    \\
        $ Y _2 = B _2 e^{q \xi} $    \\
        $ U = e^{- \frac{3 q \xi}{2}} \  ( C \cos \theta \xi - D \sin \theta \xi ) $    \\
        $ V = e^{- \frac{3 q \xi}{2}} \  ( D \cos \theta \xi + C \sin \theta \xi ) $    \\
        $ \theta > 0 $
    \end{tabular}\\\hline
    $\left[ \begin{array}{ccccc}
    	q &   &     &     &      \\
    	  & q &     &     &      \\
   	      &   & r_1 &     &      \\
    	  &   &     & r_2 &      \\
    	  &   &     &     & r_3  \end{array}\right] $
    &	$\left[ \begin{array}{ccccc}
        X_1 & X_2 & 0   & 0   & 0    \\
        X_2 & X_3 & 0   & 0   & 0    \\
        0   & 0   & Y_1 & 0   & 0    \\
        0   & 0   & 0   & Y_2 & 0    \\
        0   & 0   & 0   & 0   & Y_3  \end{array}\right] $
    &   \begin{tabular}{l}
        $ X _i = A _i e^{q \xi} $ for $ i = 1, 2 $   \\
        $ Y _j = B _j e^{ r _j \xi } $ for $ j = 1, 2, 3 $    \\
        $ 2 q + r _1 + r _2 + r _3 = 0 $	\\
        $ q \neq r _1 \neq r _2 \neq r _3 $
    \end{tabular}\\\hline
    $\left[ \begin{array}{ccccc}
    	q &   &     &     &       \\
    	  & q &     &     &       \\
    	  &   & r_1 &     &       \\
    	  &   &     & r_2 & 1     \\
    	  &   &     &     & r_2   \end{array}\right] $
    &	$\left[ \begin{array}{ccccc}
        X_1 & X_2 & 0 & 0   & 0   \\
        X_2 & X_3 & 0 & 0   & 0   \\
        0   & 0   & Y & 0   & 0   \\
        0   & 0   & 0 & Z_1 & Z_2 \\
        0   & 0   & 0 & Z_2 & 0   \end{array}\right] $
    &   \begin{tabular}{l}
        $ X _i = A _i e^{q \xi} $ for $ i = 1, 2 $   \\
        $ Y = B e^{ r _1 \xi } $    \\
        $ Z _1 = \  ( C _1 + C _2 \xi ) e^{ r _2 \xi } $   \\
        $ Z _2 = C _2 e^{ r _2 \xi } $   \\
        $ 2 q + r _1 + 2 r _2 = 0 $	\\
        $ q \neq r _1 \neq r _2 $
    \end{tabular}\\\hline
    $\left[ \begin{array}{ccccc}
    	q   &   &     &          &           \\
    	    & q &     &          &           \\
    	    &   & r_1 &          &           \\
    	    &   &     & r_2      & -\theta_2 \\
    	    &   &     & \theta_2 & r_2       \end{array}\right] $
    &	$\left[ \begin{array}{ccccc}
        X_1 & X_2 & 0   & 0   & 0   \\
        X_2 & X_3 & 0   & 0   & 0   \\
        0   & 0   & Y   & 0   & 0   \\
        0   & 0   & 0   & U   & V   \\
        0   & 0   & 0   & V   & -U  \end{array}\right] $
    &   \begin{tabular}{l}
        $ X _i = A _i e^{q \xi} $ for $ i = 1, 2 $   \\
        $ Y = B e^{ r _1 \xi } $    \\
        $ U = e^{ r _2 \xi } \  ( C \cos \theta \xi - D \sin \theta \xi  ) $   \\
        $ V = e^{ r _2 \xi } \  ( D \cos \theta \xi + C \sin \theta \xi  ) $   \\
        $ 2 q + r _1 + 2 r _2 = 0 $	\\
        $ q \neq r _1 $	\\
        $ \theta _2 > 0 $
    \end{tabular}\\
\caption{Solutions for $g$ considering $A\in \mathfrak{C}$.}
\label{table: g class C}
\end{longtable}

\begin{longtable}{c|c|c}
    $\mathbf{A}$  &   $\mathbf{g}$  &   \\\hline\hline
    $\left[ \begin{array}{ccccc}
    0 &   &   &   &    \\
      & 0 & 1 &   &    \\
      &   & 0 &   &    \\
      &   &   & 0 & 1  \\
      &   &   &   & 0  \end{array}\right] $
    &	$\left[ \begin{array}{ccccc}
        A_1 & A_2 & 0   & A_3 & 0   \\
        A_2 & Y_1 & Y_2 & T_1 & T_2 \\
        0   & Y_2 & 0   & T_2 & 0   \\
        A_3 & T_1 & T_2 & Z_1 & Z_2 \\
        0   & T_2 & 0   & Z_2 & 0    \end{array}\right] $
    &   \begin{tabular}{l}
        $ Y _1 = B _1 + B_2 \xi $   \\
        $ Y _2 = B _2 $   \\
        $ Z _1 = C _1 + C _2 \xi $   \\
        $ Z _2 = C _2 $   \\
        $ T _1 = D _1 + D _2 \xi $   \\
        $ T _2 = D _2 $   \\
    \end{tabular}\\\hline
    $\left[ \begin{array}{ccccc}
    q &   &   &               &                 \\
      & q &   &               &                 \\
      &   & q &               &                 \\
      &   &   & -\frac{3}{2}q &                 \\
      &   &   &               & -\frac{3}{2}q   \end{array}\right] $
    &	$\left[ \begin{array}{ccccc}
        X_1 & X_2 & X_3 & 0   & 0    \\
        X_2 & X_4 & X_5 & 0   & 0    \\
        X_3 & X_5 & X_6 & 0   & 0    \\
        0   & 0   & 0   & Y_1 & Y_2  \\
        0   & 0   & 0   & Y_2 & Y_3  \end{array}\right] $
    &   \begin{tabular}{l}
        $ X _i = A _i e^{ q \xi } $ for $ i = 1, \ldots, 6 $  \\
        $ Y _j = B _j e^{- \frac{ 3 q}{2} \xi } $ for $ j = 1, 2, 3 $   \\
        $ q \neq 0 $
    \end{tabular}\\
\caption{Solutions for $g$ considering $A\in \mathfrak{D}$.}
\label{table: g class D}
\end{longtable}

\begin{longtable}{c|c|c}
    $\mathbf{A}$  &   $\mathbf{g}$  &   \\\hline\hline
    $\left[ \begin{array}{ccccc}
    0 &   &   &   &    \\
      & 0 &   &   &    \\
      &   & 0 & 1 &	0  \\
      &   &   & 0 & 1  \\
      &   &   &   & 0  \end{array}\right] $
    &	$\left[ \begin{array}{ccccc}
        A_1 & A_2 & B_1 & 0   & 0   \\
        A_2 & A_3 & B_2 & 0   & 0   \\
        B_1 & B_2 & X_1 & X_2 & X_3 \\
        0   & 0   & X_2 & X_3 & 0   \\
        0   & 0   & X_3 & 0   & 0  \end{array}\right] $
    &   \begin{tabular}{l}
        $ X _1 = C _1 + C _2 \xi + C _3 \frac{\xi ^2}{2} $  \\
        $ X _2 = C _2 + C _3 \xi $  \\
        $ X _3 = C _3 $  \\
    \end{tabular}\\\hline
    $\left[ \begin{array}{ccccc}
    -4q  &   &   &   &    \\
         & q &   &   &    \\
         &   & q &   &    \\
         &   &   & q & 1  \\
         &   &   &   & q  \end{array}\right] $
    &	$\left[ \begin{array}{ccccc}
        X & 0   & 0   & 0   & 0     \\
        0 & Y_1 & Y_2 & Y_4 & 0     \\
        0 & Y_2 & Y_3 & Y_5 & 0     \\
        0 & Y_4 & Y_5 & Z_1 & Z_2   \\
        0 & 0   & 0   & Z_2 & 0     \end{array}\right] $
    &   \begin{tabular}{l}
        $ X = A e^{ -4 q \xi } $ \\
        $ Y _i = B _i e^{q \xi} $ for $ i = 1, \ldots, 5 $ \\
        $ Z _1 = \  ( C _1 + C _2 \xi ) e^{q \xi} $    \\
        $ Z _2 = C _2 e^{q \xi} $	\\
        $ q \neq 0 $
    \end{tabular}\\\hline
    $\left[ \begin{array}{ccccc}
    q &   &   &     &      \\
      & q &   &     &      \\
      &   & q &     &      \\
      &   &   & r_1 &      \\
      &   &   &     & r_2  \end{array}\right] $
    &	$\left[ \begin{array}{ccccc}
        X_1 & X_2 & X_3 & 0   & 0    \\
        X_2 & X_4 & X_5 & 0   & 0    \\
        X_3 & X_5 & X_6 & 0   & 0    \\
        0   & 0   & 0   & Y_1 & 0    \\
        0   & 0   & 0   & 0   & Y_2  \end{array}\right] $
    &   \begin{tabular}{l}
        $ X _i = A _i e ^{ q \xi } $ for $ i = 1, \ldots, 6 $   \\
        $ Y _j = B _j e ^{ r _j \xi } $ for $ i = 1, 2 $   \\
        $ 3 q + r _1 + r _2 = 0 $	\\
        $ q \neq r _1 \neq r _2 $
    \end{tabular}\\\hline
    $\left[ \begin{array}{ccccc}
    q &   &   &               &                \\
      & q &   &               &                \\
      &   & q &               &                \\
      &   &   & -\frac{3}{2}q & 1              \\
      &   &   &               & -\frac{3}{2}q  \end{array}\right] $
    &	$\left[ \begin{array}{ccccc}
        X_1 & X_2 & X_3 & 0   & 0     \\
        X_2 & X_4 & X_5 & 0   & 0     \\
        X_3 & X_5 & X_6 & 0   & 0     \\
        0   & 0   & 0   & Z_1 & Z_2   \\
        0   & 0   & 0   & Z_2 & 0     \end{array}\right] $
    &   \begin{tabular}{l}
        $ X _i = A _i e ^{ q \xi } $ for $ i = 1, \ldots, 6 $   \\
        $ Z _1 = \  ( C _1 + C _2 \xi ) e^{- \frac{3 q \xi}{2}} $ \\
        $ Z _2 = C _2 e^{- \frac{3 q \xi}{2}} $ \\
        $ q \neq 0 $
    \end{tabular}\\\hline
    $\left[ \begin{array}{ccccc}
    q &   &   &               &                \\
      & q &   &               &                \\
      &   & q &               &                \\
      &   &   & -\frac{3}{2}q & -\theta        \\
      &   &   & \theta        & -\frac{3}{2}q  \end{array}\right] $
    &	$\left[ \begin{array}{ccccc}
        X_1 & X_2 & X_3 & 0 & 0\\
        X_2 & X_4 & X_5 & 0 & 0\\
        X_3 & X_5 & X_6 & 0 & 0\\
        0 & 0 & 0 & U & V\\
        0 & 0 & 0 & V & -U  \end{array}\right] $
    &   \begin{tabular}{l}
        $ X _i = A _i e^{q \xi} $ for $ i = 1, \ldots, 6 $   \\
        $ U = e^{- \frac{3 q \xi}{2}} \  ( B \cos \theta \xi - C \sin \theta \xi ) $   \\
        $ V = e^{- \frac{3 q \xi}{2}} \  ( C \cos \theta \xi + B \sin \theta \xi ) $   \\
        $ \theta > 0 $
    \end{tabular}\\
\caption{Solutions for $g$ considering $A\in \mathfrak{E}$.}
\label{table: g class E}
\end{longtable}

\begin{longtable}{c|c|c}
    $\mathbf{A}$  &   $\mathbf{g}$  &   \\\hline\hline
    $\left[ \begin{array}{ccccc}
    0 &   &   &   &    \\
      & 0 &   &   &    \\
   	  &   & 0 &   &    \\
   	  &   &   & 0 & 1  \\
   	  &   &   &   & 0  \end{array}\right] $
   	 &   $\left[ \begin{array}{ccccc}
        A_1 & A_2 & A_3 & B_1 & 0   \\
        A_2 & A_4 & A_5 & B_2 & 0   \\
        A_3 & A_5 & A_6 & B_3 & 0   \\
        B_1 & B_2 & B_3 & X_1 & X_2 \\
        0   & 0   & 0   & X_2 & 0   \end{array}\right] $
    &   \begin{tabular}{l}
        $ X _1 = C _1 + C _2 \xi $\\
        $ X _2 = C _2 $\\
    \end{tabular}\\\hline
    $\left[ \begin{array}{ccccc}
    -4q  &   &   &   &    \\
         & q &   &   &    \\
         &   & q &   &    \\
         &   &   & q &    \\
         &   &   &   & q  \end{array}\right] $
    &   $\left[ \begin{array}{ccccc}
        X & 0   & 0   & 0   & 0       \\
        0 & Y_1 & Y_2 & Y_3 & Y_4     \\
        0 & Y_2 & Y_5 & Y_6 & Y_7     \\
        0 & Y_3 & Y_6 & Y_8 & Y_9     \\
        0 & Y_4 & Y_7 & Y_9 & Y_{10}  \end{array}\right] $
    &   \begin{tabular}{l}
        $ X = A e^{- 4 q \xi} $ \\
        $ Y _i = B _i e^{q \xi} $ for $ i = 1, \ldots, 10 $ \\
        $ q \neq 0 $
    \end{tabular}\\
\caption{Solutions for $g$ considering $A\in \mathfrak{F}$.}
\label{table: g class F}
\end{longtable}

\end{landscape}

\section{Conclusions}\label{sec:conclusions}

EFE are one of the most interesting and complicated equations to solve in physics. Techniques to solve them have been developed for 4-dimensions in the past. One of the most successful techniques relies on subspaces and subgroups. This method helps to generate solutions of the 4-dimensional EFE on demand, such that the Laplace equation gives the solutions for monopoles, dipoles, etc. In this work we used this technique to solve the ($n+2$)-dimensional EFE in vacuum, reducing the final matrix equation to its normal Jordan form, which permits to solve the equations with some facility. We obtained a great amount of solutions of the EFE in terms of the Laplace parameter, such that for each solution of the Laplace equations, we may get a different solution of the EFE. One can play with the different combinations of solutions to obtain even more  solutions.

\section*{Acknowledgements}

This work was partially supported by CONACyT M\'exico under grants \mbox{A1-S-8742}, 304001, 376127, 240512,  \mbox{FORDECYT-PRONACES} grant No. 490769 and \mbox{I0101/131/07 C-234/07} of the Instituto Avanzado de Cosmolog\'ia (IAC) collaboration (http://www.iac.edu.mx/).

\section*{Statements and Declarations}
I confirm that this work is original and has not been published elsewhere, nor is it currently under consideration for publication elsewhere.
On behalf of all authors, Dr. Ignacio Abraham Sarmiento Alvarado states that there is no conflict of interest.

\section*{Data availability statement}
Data sharing is not applicable to this article as no new data were created or analyzed in this study.


\section*{References}

\bibliographystyle{iopart-num}                 
\bibliography{1DsubspaceSLnR_biblio}           

\end{document}